\keywords{event structures, disjunctive causes, local connectedness, Petri nets, persistence, concurrency, unfolding, coreflection}
\newlength{\myheight}
\tikzstyle{node}=[circle, draw=black, minimum size=1mm]
\tikzstyle{trans}=[font=\scriptsize]
\tikzstyle{lab}=[font=\small]
\newcommand{\drawplace}{*[o]=<.9pc,.9pc>{\ }\drop\cir{}}
\newcommand{\drawmarkedplace}{*[o]=<.9pc,.9pc>{\bullet}\drop\cir{}}
\newcommand{\drawpersistentplace}{*[o]=<.9pc,.9pc>[Foo]{\ }}
\newcommand{\nameplacedown}[1]{\POS[]-<0pc,.8pc>\drop{\scriptstyle{#1}}}
\newcommand{\nameplaceright}[1]{\POS[]+<.8pc,0pc>\drop{\scriptstyle{#1}}}
\newcommand{\nameplaceleft}[1]{\POS[]-<.8pc,0pc>\drop{\scriptstyle{#1}}}
\newcommand{\drawtrans}[1]{*=<1.5pc,.9pc>{#1}\drop\frm{-}}
\newcommand{\cat}[1]{\ensuremath{\mathsf{#1}}}
\newcommand{\fun}[1]{\ensuremath{\mathcal{#1}}}
\newcommand{\mon}[1]{\ensuremath{{#1}^\oplus}}
\newcommand{\evstr}{\textsc{es}}
\newcommand{\pow}[1]{\ensuremath{\mathbf{2}^{#1}}}
\newcommand{\powfin}[1]{\ensuremath{\mathbf{2}_f^{#1}}}
\newcommand{\set}[1]{\ensuremath{[\!\![{#1}]\!\!]}}
\newcommand{\pre}[1]{\ensuremath{{}^\bullet{#1}}}
\newcommand{\post}[1]{\ensuremath{{#1} {^\bullet}}}
\newcommand{\pn}{\ensuremath{\cat{PN}}}
\newcommand{\occ}{\ensuremath{\cat{ON}}}
\newcommand{\occeq}{\ensuremath{\cat{OE}}}
\newcommand{\es}{\ensuremath{\cat{ES}}}
\newcommand{\ces}{\ensuremath{\cat{cES}}}
\newcommand{\les}{\ensuremath{\cat{{\ell}ES}}}
\newcommand{\WDom}{\ensuremath{\cat{wDom}}}
\newcommand{\occquot}[1]{\ensuremath{\fun{Q}({#1})}}
\newcommand{\preUnf}[1]{\ensuremath{\fun{U}^{pre}({#1})}}
\newcommand{\unf}[1]{\ensuremath{\fun{U}({#1})}}
\newcommand{\Ioc}[1]{\ensuremath{\fun{I}({#1})}}
\newcommand{\occEs}[1]{\ensuremath{\fun{E}({#1})}}
\newcommand{\esOcc}[1]{\ensuremath{\fun{O}({#1})}}
\newcommand{\nat}{\ensuremath{\mathbb{N}}}
\newcommand{\eqclass}[2][]{\ensuremath{[{#2}]_{\scriptscriptstyle {#1}}}}
\newcommand{\quotient}[2]{\ensuremath{{#1}_{\scriptscriptstyle {#2}}}}
\newcommand{\zunf}[0]{\ensuremath{\zunf}}
\newcommand{\conf}[1]{\ensuremath{\mathit{Conf}({#1})}}
\newcommand{\hist}[1]{\ensuremath{\mathbf{H}({#1})}}
\newcommand{\conn}[1]{\ensuremath{\stackrel{#1}{\frown}}}
\newcommand{\interval}[2][1]{\ensuremath{[{#1},{#2}]}}
\newcommand{\fire}[1]{\ensuremath{[{#1}\rangle}}
\newcommand{\conc}[1]{\ensuremath{\mathit{co}({#1})}}
\newcommand{\sconc}[1]{\ensuremath{\mathit{sco}({#1})}}
\newcommand{\Mark}[1]{\ensuremath{\mathit{m}({#1})}}
\newcommand{\struct}{\ensuremath{\rightsquigarrow}}
\newcommand{\depth}[1]{\ensuremath{\mathit{d}({#1})}}
\newcommand{\mxdepth}[1]{\ensuremath{\mathit{md}({#1})}}
\newcommand{\cardmaxdepth}[1]{\ensuremath{\mathit{\#md}({#1})}}
\newcommand{\pl}[1]{\ensuremath{#1}_{\mathrm{s}}}
\newcommand{\tr}[1]{\ensuremath{#1}_{\mathrm{t}}}
\newcommand{\pe}[1]{\ensuremath{#1}^{\mathrm{p}}}
\newcommand{\npe}[1]{\ensuremath{#1}^{\mathrm{n}}}
\begin{document}

\title{Event Structures for Petri nets with Persistence}

\author[P.~Baldan]{Paolo Baldan\rsuper{a}}	\address{\lsuper{a}University of Padova, Italy}	\email{baldan@math.unipd.it}  
\author[R.~Bruni]{Roberto Bruni\rsuper{b}}	\address{\lsuper{b}University of Pisa, Italy}
\author[A.~Corradini]{Andrea Corradini\rsuper{b}}	\address{\vskip-7pt}	
\author[F.~Gadducci]{Fabio Gadducci\rsuper{b}}	\address{\vskip-7pt}
\email{bruni@di.unipi.it}  
\email{andrea@di.unipi.it}  
\email{gadducci@di.unipi.it}  
\email{ugo@di.unipi.it}  
\author[H.~Melgratti]{Hernan Melgratti\rsuper{c}}
\address{\lsuper{c}University of Buenos Aires - Conicet, Argentina}	\email{hmelgra@dc.uba.ar}  
\author[U.~Montanari]{Ugo Montanari\rsuper{b}}

\begin{abstract}
  Event structures are a well-accepted model of concurrency. In a
  seminal paper by Nielsen, Plotkin and Winskel, they are used to
  establish a bridge between the theory of domains and the approach to
  concurrency proposed by Petri. A basic role is played by an
  unfolding construction that maps (safe) Petri nets into a subclass
  of event structures, called prime event structures, where each event
  has a uniquely determined set of causes. Prime event structures, in
  turn, can be identified with their domain of configurations. At a
  categorical level, this is nicely formalised by Winskel as a chain
  of coreflections.

  Contrary to prime event structures, general event structures allow
  for the presence of disjunctive causes, i.e., events can be enabled
  by distinct minimal sets of events.  In this paper, we extend the
  connection between Petri nets and event structures in order to
  include disjunctive causes.  In particular, we show that, at the
  level of nets, disjunctive causes are well accounted for by
  persistent places.  These are places where tokens, once generated,
  can be used several times without being consumed and where multiple
  tokens are interpreted collectively, i.e., their histories are
  inessential. Generalising the work on ordinary nets, Petri nets with
  persistence are related to a new subclass of general event
  structures, called \emph{locally connected}, by means of a chain of
  coreflections relying on an unfolding construction.
\end{abstract}

\maketitle

\section*{Prelude}

Among the multitude of his research interests, Furio has been working
on the foundations of concurrency with special attention to the
mathematical domains required for defining the semantics of concurrent
systems. We like to recall his work on hyperuniverses as models for
processes and on the final semantics of the pi-calculus achieved by
means of an higher order presentation via Logical Framework. It was in
the context of the European Project MASK (``Mathematical Structures
for Concurrency''), led by Jaco De Bakker, that most of us had the
first chance of working closely together with Furio.  Surely that was
one of the most relevant results of the project, paving the way for a
fruitful scientific collaboration that continued in the following years in the
framework of several research projects.
This paper is heartfully dedicated to him in the occasion of his 60th birthday.

\section{Introduction}

Petri nets have been introduced in the Ph.D. Thesis of Carl Adam Petri~\cite{Petri62} and soon have become one of the best known models of
concurrency~\cite{DBLP:books/daglib/0032298,DBLP:books/daglib/0023756,DBLP:series/eatcs/Gorrieri17}.
The conceptual simplicity of the model (multiset rewriting) and its
intuitive graphical presentation have attracted the interest of both
theoreticians and practitioners.  Nowadays Petri nets are widely
adopted across Computer Science and other disciplines such as Physics,
Chemistry, and
Biology~\cite{DBLP:conf/birthday/Abramsky08,DBLP:journals/it/KochRS14,DBLP:journals/sosym/Koch15}.
They provide a basic model that, on the one hand, offers
the ideal playground to study basic concepts of concurrent and
distributed
computations~\cite{murata89,DBLP:books/daglib/0007558,DBLP:books/daglib/0032051}
and, on the other hand, can be readily extended to experiment with
advanced features like structured data handling, read and inhibitor
arcs, mobility, reflection, time and stochastic
behavior~\cite{DBLP:conf/ac/Jensen86a,DBLP:conf/apn/ChristensenH93,DBLP:journals/acta/MontanariR95,DBLP:journals/mscs/AspertiB09,DBLP:conf/apn/Valk98,merlin74,ramchandani74,DBLP:conf/performance/DuganTGN84,DBLP:journals/tocs/MarsanCB84,Molloy:1985:DTS:4101.4110,DBLP:conf/apn/EmzivatDLR16}.

In this paper we are interested in the seminal work of
Winskel~\cite{Win:ES} on net unfolding, which has established a
tight connection between Petri nets and (prime algebraic) domains.
There it is shown that a chain of coreflections links the category of
safe nets to the category of prime event structures, which in turn is equivalent
to the category of prime algebraic domains. This is particularly
satisfactory since a coreflection essentially establishes that a
sub-category of abstract models can be found in a category of concrete
models, such that each concrete model can be assigned the best possible
abstract model.
The first step of the chain is an unfolding construction
that maps each net to a special kind of acyclic net (called
non-deterministic occurrence net) representing all
behaviours of the original net. From this an event structure can
be easily defined, by forgetting the places of the net.
Later these results have been extended to the more general class of
semi-weighted
nets~\cite{DBLP:conf/concur/MeseguerMS92,DBLP:journals/mscs/MeseguerMS97}.

Petri nets semantics is based on consuming and producing data
(i.e., tokens) from repositories (i.e., places). Operationally, 
reading a piece of information can be modelled by a transition that
consumes a token from a place and produces it again on the same place. 
However, from the point of view of
concurrency and causality such an encoding is not faithful as 
it disallows concurrent readings. Moreover, in many situations one is
interested in representing persistent information that once created
can be read but not consumed and such that its multiplicity (the
number of available instances) is not important. This is the case for instance of
classical 
logical conditions that once established to hold
can be used repeatedly for proving other conditions. Another example
is that of subversioning systems or cloud storage, where data changes
are logged and previous versions of stored files remain accessible
after an update.
Persistent information is also used in~\cite{DBLP:conf/ccs/CrazzolaraW01} 
to model security protocols, where the pool of messages exchanged by
participants over the network forms a knowledge base that remains
available for inspection and processing by attackers: in other words,
sent messages cannot be removed. 
In~\cite{DBLP:journals/corr/abs-1710-04570} persistent tokens are used
to remove confusion from acyclic nets and equip choices with standard
probabilistic distributions.
The equivalent framework of CPR nets~\cite{DBLP:journals/fuin/BonchiBCG09} 
has been proposed for modelling web services: persistency is needed to capture 
service availability, and the formalism has been used for describing protocols 
that are specified with the ontology-based OWL-S language.

Read arcs have been introduced in the literature to handle multiple
concurrent readings~\cite{DBLP:conf/apn/ChristensenH93} and their
concurrent semantics has been widely
investigated~\cite{VSY:UFP,BCM:CNAED}.
From the point of view of
causality, they are not expressive enough to model another interesting
phenomenon of persistent information that is the absence of
multiplicity: if the same piece of persistent data is created several
times, then its causes are merged and the events that read the
persistent data can inherit any of them disjunctively.  Instead, when read arcs
are used, their event structure semantics records the exact causes.

The goal of this paper is to extend Winskel's construction to Petri
nets with persistence, as defined
in~\cite{DBLP:conf/ccs/CrazzolaraW01,CrazzolaraW05}, and to understand
what is the right sub-category of general event structures to
exploit. Surely, prime event structures are not expressive enough,
as they allow for a unique set of minimal causes for each
event. Instead, as discussed above, the presence of persistent data
leads to events with multiple sets of minimal causes.
Consider, e.g., the net with persistence in Figure~\ref{fig:running},
which will serve as a running example. As usual, places and
transitions are represented by circles and boxes,
respectively. Persistent places, like $o$ in this case, are
represented by double circles. Intuitively, transition $c$ is
immediately enabled and can be fired many times (sequentially).  As
soon as $a$ or $b$ fires, place $o$ becomes marked. The number of
tokens in $o$ as well as their causal histories are irrelevant: once
$o$ is marked, it will remain so forever. At this point $d$ and $e$
become (concurrently) enabled.  The firing of $d$ disables $c$ but not
$e$ since the token never leaves the persistent place $o$.

\begin{figure}[t]
$$
 \xymatrix@R=.8pc@C=1.2pc{
 &
 \drawmarkedplace\ar[d]
 \nameplaceright p
 & &
 \drawmarkedplace\ar[d]
 \nameplaceright q
 \\
 &
 \drawtrans a \ar[dr] 
 & &
 \drawtrans b \ar[dl] 
 \\
 \drawtrans c \ar@/^/[r]
 &
 \drawmarkedplace\ar[d] \ar@/^/[l]
 \nameplaceright r
 &
 \drawpersistentplace\ar[dl] \ar[dr]
 \nameplaceright o
 &
 \drawmarkedplace\ar[d] 
 \nameplaceright s
 \\
 &
 \drawtrans d \ar[d]
 & &
 \drawtrans e \ar[d] 
 \\
 &
 \drawplace
  \nameplaceright t
 & &
 \drawplace
  \nameplaceright u
 }
$$
\caption{Running example}\label{fig:running}
\end{figure}
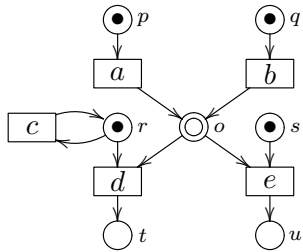

In a recent work~\cite{BCG:DESF} dealing with the semantics of
formalisms with fusion and with the corresponding phenomenon of
disjunctive causality, the domains of configurations
of general event structures are characterised by resorting to a weak
form of prime algebraicity.
\emph{Connected event structures} are identified as canonical representations
of general (possibly non-stable) event structures, in the same way as
prime event structures are representations of stable event structures.
In particular, the equivalence between the category of prime algebraic
domains and the one of prime event structures is generalised to an
equivalence between the category of weak prime domains and that of
connected event structures.
In a connected event structure, an event may have multiple causal
histories, but they are required to be ``connected'', namely they must
not be separable in two classes of pairwise incompatible causal
histories. The idea is that, if in a general event structure an event
has conflicting classes of causal histories, then it should split in
several copies when generating the corresponding connected event
structure.
Indeed, as discussed in~\cite{BCG:DESF}, connected event structures
can be alternatively presented as prime event structures where some
events (intuitively, those having different but not incompatible causal 
histories)
are
deemed equivalent. This establishes a close connection with the work
in~\cite{VW:SPC}, where, in order to model strategies with disjunctive
causes, the authors deal with \emph{prime event structures with
  equivalence}.

In this paper we rely on the aforementioned work. A major role is
played by a weakening of the connectedness property for event
structures, referred to as \emph{local connectedness}. The underlying
intuition is as follows. A causal history can be seen as a conjunction
of its events, thus an event with different causal histories is
enabled by a disjunction of conjunctions of events.  Connectedness
amounts to the fact that the various conjuncts cannot be split in
conflicting subclasses.
Moving to Petri nets with persistence, a persistent place
can be seen as the disjunction of all events that can fill the
place. In turn, an event needs all places in its pre-set to be filled,
hence it is enabled by a conjunction of disjunctions.
The property of local connectedness roughly amounts to the requirements that 
the different ways of enabling a persistent place cannot be separated into conflicting classes.  If this were possible, one should split the place in different copies, one for each class.
The notion of local connectedness lifts to event structures and we
show that Winskel's chain of coreflections can be generalised to link
the category of Petri nets with persistence to that of locally
connected event structures. The latter, in turn, coreflects into the
category of connected event structures.

The result can be read from two perspectives. From one viewpoint,
where the construction defines the event structure associated with a
net, it characterises the ``right'' concurrent semantics for dealing
with formalisms that handle persistent information. The interesting
bit is that the notion of connectedness from~\cite{BCG:DESF} is
relaxed here to local connectedness. From the second
viewpoint, the construction builds a standard net that is the best
representative for the (locally connected) event structure at
hand. The latter is a more interesting viewpoint, because: (i)~it shows
that Petri nets with persistence are expressive enough to account for
disjunctive causes, advancing towards the solution of a long-standing open question
about finding the right computational model for general event
structures; %
and (ii)~it confirms that Petri
nets offer the ideal playground to experiment with concurrency
features.

\subsection{Structure of the paper}
In Section~\ref{sec:es} we review the preliminaries on connected event
structures and we introduce the original class of locally connected
event structures.
In Section~\ref{sec:pnets} we introduce Petri nets with persistence (p-nets) and the corresponding category.
In Section~\ref{sec:unf} we introduce occurrence p-nets and define a coreflection between the category of p-nets and the one of occurrence p-nets. 
The right-adjoint of the coreflection is the unfolding construction
that accounts for the description of concurrent computations of a
p-net.
Technically, it is defined in two steps, going through a category of occurrence p-nets with equivalence.
In Section~\ref{sec:les} we establish a coreflection between the
category of p-nets and the one of locally connected event structures.
Some concluding remarks are in Section~\ref{sec:conc}.

\section{Event structures and (local) connectedness}
\label{sec:es}

In this section we review the basics of event structures~\cite{Win:ES}
and the notion of connected event structure from~\cite{BCG:DESF}.
Then we single out a wider class of event structures, referred to as
\emph{locally connected}, that will play a pivotal role in the paper.

We start by recalling the notion of event structure with binary conflict~\cite{Win:ES}. 
In the following, for $m,n \in \mathbb{N}$, we denote by
$\interval[m]{n}$ the set $\{ m, m+1, \ldots, n\}$.
Also, given a set $E$, we denote by $\pow{E}$ the powerset of $E$ and by
$\powfin{E}$ the set of finite subsets of $E$.

\begin{defi}[event structure]
  \label{de:es}
  An \emph{event structure} ({\evstr} for short) is a tuple
  $\langle E, \vdash, \# \rangle$ such that
  \begin{itemize}
  \item $E$ is a set of events;
  \item ${\vdash} \subseteq \powfin{E} \times E$ is the \emph{enabling}
    relation satisfying $X \vdash e$ and $X \subseteq Y$
    implies $Y \vdash e$;
  \item $\# \subseteq E \times E$ is the \emph{conflict} relation.
  \end{itemize}
  Two events $e, e' \in E$ are \emph{consistent}, written
  $e \conn{} e'$, if $\neg (e \# e')$. A subset $X \subseteq E$ is
  \emph{consistent} if $e \conn{} e'$ for all $e, e' \in X$.
\end{defi}

An {\evstr} $\langle E, \vdash, \# \rangle$ is often
denoted simply by $E$.  Computations are captured in the form of
configurations.

\begin{defi}[configuration, live {\evstr}, concurrent events]
  A \emph{configuration} of an {\evstr} $E$ is a consistent
  $C \subseteq E$ which is \emph{secured}, i.e., for all $e \in C$
  there are $e_1, \ldots, e_n \in C$ with $e_n =e$ such that
  $\{ e_1, \ldots, e_{k-1} \} \vdash e_k$ for all $k \in \interval{n}$
  (in particular, $\emptyset \vdash e_1$). The set of configurations
  of an {\evstr} $E$ is denoted by $\conf{E}$.
  An {\evstr} is \emph{live} if  it has no self-conflicts, i.e., for all $e \in E$ we
  have $\neg (e \# e)$, and conflict is saturated, i.e., for all
  $e, e' \in E$, if $\neg (e \# e')$ then there is $C \in \conf{E}$
  such that $\{e,e'\} \subseteq C$.
  Two events $e, e' \in E$ are \emph{concurrent} if they are
  consistent ($e \conn{} e'$) and there is
  $C \in \conf{E}$ such that $C \vdash e$ and $C \vdash e'$.
\end{defi}

Thus in a live {\evstr} conflict is saturated, a property that
corresponds to inheritance of conflict in prime {\evstr}s, and
each event is executable.

\begin{rem}
  In the paper we restrict to live {\evstr}. Hence the
  qualification \emph{live} is omitted. 
\end{rem}

The class of {\evstr} can be turned into a category.

\begin{defi}[category of {\evstr}]
  \label{de:es-morphism}
  A morphism of {\evstr} $f : \langle E_1, \vdash_1, \#_1 \rangle
  \to \langle E_2, \vdash_2, \#_2 \rangle$
  is a partial function $f : E_1 \rightharpoonup E_2$ such that for all
  $C_1 \in \conf{E_1}$ and $e_1, e_1' \in E_1$ with $f(e_1)$, $f(e_1')$ defined
  \begin{itemize}
  \item if $f(e_1)\, \#_2\, f(e_1')$ then $e_1\,\#_1\, e_1'$
  \item if $f(e_1) = f(e_1')$ and  $e_1 \neq e_1'$ then $e_1\, \#_1\, e_1'$;
  \item if $C_1 \vdash_1 e_1$ then $f(C_1) \vdash_2 f(e_1)$.
  \end{itemize}
  We denote by $\es$ the category of {\evstr} and their morphisms. 
\end{defi}

Since the enabling predicate is over finite sets of events, we can
consider minimal sets of events enabling a given one.

\begin{defi}[minimal enabling, causality]
  Let $\langle E, \vdash, \# \rangle$ be an {\evstr}. Given $e \in E$
  and $C \in \conf{E}$ such that $C \vdash e$ we say that $C$ is a
  \emph{minimal enabling} of $e$, and write $C \vdash_0 e$, when for
  any configuration $C' \subseteq C$, if $C' \vdash e$ then
  $C' = C$. We denote by $\hist{e} = \{ C \mid C \vdash_0 e \}$  the
  set of minimal enablings of event $e$. We write $e < e'$ if
  $e \in C$ for all $C\in \hist{e'}$.
\end{defi}

The configurations of an {\evstr}, ordered by subset inclusion,
form a partial order that is characterised in~\cite{BCG:DESF} as a weak prime
algebraic domain, i.e., a coherent finitary partial order where each
element is the join of elements satisfying a weak notion of primality.
The relation is formalised as a coreflection between the category $\es$ and a category $\WDom$ of weak prime domains.
A subclass of {\evstr}s can be identified, called \emph{connected
{\evstr}}, that represents the exact counterpart of weak prime domains,
in the same way as prime {\evstr}s are the counterpart of prime
algebraic domains~\cite{Win:ES}.

\begin{defi}[connected {\evstr}]
\label{def:connectedEs}
  Let $C, C' \in \hist{e}$.
We write $C \conn{e} C'$ if $C \cup C' \cup \{ e \}$
  is consistent and we denote by $\conn{e}^*$
  the transitive closure of the relation $\conn{e}$.
  An {\evstr} is \emph{connected} if whenever $C, C' \in \hist{e}$
  then $C \conn{e}^* C'$.
  The full subcategory of $\es$ having connected {\evstr} as objects
  is denoted by $\ces$.
\end{defi}

The category $\ces$ is equivalent to the category $\WDom$ of weak prime domains 
and thus it coreflects in $\es$.

\begin{prop}[coreflection between $\es$ and $\ces$~\cite{BCG:DESF}]
  \label{pr:coreflection-es-ces}
  The inclusion functor $\mathcal{I} : \ces \to \es$ admits a right
  adjoint $\mathcal{C} : \es \to \ces$ establishing a coreflection.
\end{prop}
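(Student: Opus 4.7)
The plan is to construct the right adjoint $\mathcal{C}$ explicitly by splitting each event of an {\evstr} into one copy per $\conn{}^*$-equivalence class of its causal histories, and then verify that the obvious projection is the counit of the adjunction. Given an {\evstr} $E = \langle E, \vdash, \# \rangle$, I would take as events of $\mathcal{C}(E)$ the pairs $(e, [H])$ where $e \in E$ and $[H]$ is a $\conn{e}^*$-equivalence class of $\hist{e}$. The candidate counit $\epsilon_E : \mathcal{C}(E) \to E$ is the total projection $(e, [H]) \mapsto e$. The enabling relation is inherited by pulling back along $\epsilon_E$, in the sense that $X \vdash_{\mathcal{C}(E)} (e, [H])$ iff $\epsilon_E(X)$ is a configuration of $E$ that contains some $C \in [H]$. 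Conflict is inherited via $\#_E$, with the additional stipulation that distinct copies $(e, [H]) \neq (e, [H'])$ of the same event of $E$ are in conflict.

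First I would check that $\mathcal{C}(E)$ is a genuine live {\evstr}, that conflict is saturated, and that $\mathcal{C}(E)$ is connected: by construction any two minimal enablings of $(e, [H])$ project to histories of $e$ lying in the same class $[H]$, hence are $\conn{(e,[H])}^*$-related since their union with $(e,[H])$ is consistent. I would also verify that $\epsilon_E$ satisfies the three clauses of Definition~\ref{de:es-morphism}, which is straightforward because it reflects conflict on the nose (by construction of $\#_{\mathcal{C}(E)}$) and preserves enabling by definition.

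The crux is the universal property: for every connected {\evstr} $E'$ and morphism $f : E' \to E$, there is a unique morphism $\hat{f} : E' \to \mathcal{C}(E)$ with $\epsilon_E \circ \hat{f} = f$. Uniqueness is immediate, since $\epsilon_E$ is injective on each fiber only up to the choice of $[H]$, so $\hat{f}$ is forced to be $\hat{f}(e') = (f(e'), [f(C')])$ for any $C' \in \hist{e'}$. The main obstacle is showing this assignment is well-defined, i.e., independent of the chosen history $C'$. For this I would argue that $f$ maps the relation $\conn{e'}$ into $\conn{f(e')}^*$: given $C_1, C_2 \in \hist{e'}$ with $C_1 \cup C_2 \cup \{e'\}$ consistent in $E'$, since $f$ reflects conflict, $f(C_1) \cup f(C_2) \cup \{f(e')\}$ is consistent in $E$, which suffices to put $f(C_1)$ and $f(C_2)$ in the same $\conn{f(e')}^*$-class (after refining them into actual minimal enablings of $f(e')$ in $E$, which remain consistent with each other and with $f(e')$). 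Because $E'$ is connected, any two histories of $e'$ are joined by a chain of $\conn{e'}$-steps, and the above propagates this to a $\conn{f(e')}^*$-chain in $E$, giving a unique class $[f(C')]$.

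Once $\hat{f}$ is well-defined, checking that it is an {\evstr} morphism is a matter of verifying the three clauses of Definition~\ref{de:es-morphism}, which follow essentially by pasting the corresponding properties of $f$ with the defining equations of $\mathcal{C}(E)$. Functoriality of $\mathcal{C}$ and naturality of $\epsilon$ then follow by the usual abstract nonsense from the universal property. The hardest and most delicate part of the argument is definitely the well-definedness of $\hat{f}$, since it is the unique place where connectedness of the source and the reflection of conflict by morphisms have to interact in a nontrivial way; everything else reduces to bookkeeping on the explicit description of $\mathcal{C}(E)$.
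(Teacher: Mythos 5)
The paper does not actually prove this proposition: it is imported wholesale from~\cite{BCG:DESF}, where the right adjoint is obtained indirectly, by composing the coreflection of $\es$ into the category $\WDom$ of weak prime domains (via the configuration functor) with the equivalence between $\WDom$ and $\ces$, whose events are equivalence classes of irreducible configurations $C \cup \{e\}$ with $C \vdash_0 e$. Your explicit construction — splitting each event $e$ into copies $(e,[H])$ indexed by the $\conn{e}^*$-classes of $\hist{e}$, with the projection as counit — is exactly the intuition the paper sketches after Definition~\ref{def:connectedEs}, and your fibres coincide with the classes of irreducibles used in the cited proof; so your route is a legitimately more direct and elementary one, at the price of having to build and verify the enabling, conflict and configuration structure of $\mathcal{C}(E)$ by hand rather than inheriting it from the domain. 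Your treatment of the crux, well-definedness of $\hat{f}$ via connectedness of the source and reflection of conflict, is essentially right (you should also note that the class is independent of which minimal enabling of $f(e')$ inside $f(C')$ you extract, which follows because any two minimal enablings contained in a single consistent set are directly $\conn{f(e')}$-related).

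There are, however, three concrete defects in the construction as stated. First, your enabling relation (``$\epsilon_E(X)$ is a configuration of $E$ containing some $C \in [H]$'') is not closed under supersets, so it violates the second clause of Definition~\ref{de:es}; replace it by ``$\epsilon_E(X) \supseteq C$ for some $C \in [H]$'' or saturate explicitly. Second, and more seriously, your conflict relation (inherited $\#_E$ plus conflict between distinct copies) is not saturated, so $\mathcal{C}(E)$ is not live and hence not an object of $\ces$ under the paper's standing convention that all {\evstr}s are live. The paper's own example witnesses this: with $a \# c$, $b \# d$ and $\hist{e} = \{\{a,b\},\{c,d\}\}$, the copy $e_1 = (e,[\{a,b\}])$ and the event $c$ are not in conflict under your definition, yet no configuration of $\mathcal{C}(E)$ contains both. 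You must close the conflict under ``no common configuration'' (this leaves the configurations unchanged) and then recheck that $\epsilon_E$ still reflects conflict as required by Definition~\ref{de:es-morphism}. Third, your one-line argument for connectedness of $\mathcal{C}(E)$ is wrong as stated: two minimal enablings of $(e,[H])$ project into the same class $[H]$, but members of one class need not have a consistent union — they are only joined by a $\conn{e}$-chain — so you must lift that chain to $\mathcal{C}(E)$, choosing copies of shared events coherently; the key observation making the lift canonical is again that any two minimal enablings of an event contained in a common configuration lie in the same $\conn{}^*$-class. None of these defects is fatal, but each requires a genuine repair before the direct proof goes through.
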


As mentioned in  the introduction, the concurrent semantics of Petri nets with persistence will be given in terms of connected {\evstr} through a chain of transformations that first unfolds the net into an acyclic net and then abstracts it to an {\evstr}.  

The connectedness condition has a natural logical interpretation. Given an event $e$, we can capture its dependencies by stating that $e$ is caused by the disjunction of its minimal enablings, where each minimal enabling can be seen in turn as a 
conjunction of events, namely by  
$\bigvee_{C \in \hist{e}} \bigwedge C$.
A conflict $e' \# e''$ can be encoded as $\neg e' \lor \neg e''$. In this view, connectedness amounts to the impossibility of partitioning $\hist{e}$ in two subsets inducing mutually exclusive minimal causal histories, i.e., we cannot decompose $\hist{e} = H_1 \uplus H_2$ in a way that 
$\neg (\bigvee_{C \in H_1} \bigwedge (C \cup \{e\})) \ \lor\ \neg (\bigvee_{C \in H_2} \bigwedge (C \cup\{e\}))$. 
If this happened, to recover connectedness we should split event $e$ in two events $e_1$ and $e_2$, with $\hist{e_i} = H_i$ for $i \in \{ 1,2\}$.

At the level of nets, events correspond to transitions.  
Each transition requires that all the
(possibly persistent) places in its pre-set are filled in order to be enabled, hence it is
enabled by a \emph{conjunction} of places. In turn, each
persistent place can be seen as the \emph{disjunction} of the
transitions in its pre-set.  
Summing up, at the level of
nets we can represent conjunctions of disjunctions of events, exploiting persistence. The
natural choice, when working with nets, will be to impose the
connectedness condition locally to each disjunct. This results in a
property weaker than the ``global'' connectedness from
Definition~\ref{def:connectedEs}. For this reason, the extraction of a
connected {\evstr} from a net will pass through an intermediate class
of {\evstr} that we call \emph{locally connected}. We next formalise this
idea.

\begin{defi}[{\evstr} in disjunctive form]
  Let $E$ be an {\evstr}. Given $e \in E$, a
  \emph{disjunct} of $e$ is a minimal set $X \subseteq E$ such that
  $X \cap C \neq \emptyset$ for all
  $C \in \hist{e}$.
  It is \emph{connected} if for all $e, e' \in X$ there exists
  $n\geq 1$ and $e_1,e_2,...,e_n\in X$ such that
  $e=e_1\conn{} e_2 \conn{} \cdots \conn{} e_n = e'$.

  A \emph{covering} of $e$ is a set of disjuncts
  $D \subseteq \pow{E}$ such that for any
  $C \in \conf{E}$, if $C \cap X \neq \emptyset$ for all $X \in D$
  then $C \vdash e$.
\end{defi}

Intuitively, $D$ is a covering of an event $e$ whenever condition
$\bigwedge_{X \in D} \bigvee X$ is necessary and sufficient 
for enabling $e$, i.e., it is logically equivalent to the disjunction
of the minimal enablings $\bigvee_{C \in \hist{e}} \bigwedge C$. A
disjunct $X$ of $e$ is connected whenever it cannot be partitioned as
$X = X_1 \uplus X_2$ with 
$\neg (\bigvee X_1) \lor \neg (\bigvee X_2)$.  
Intuitively, we are moving from a $\vee$-$\wedge$ form of the
dependencies to a $\wedge$-$\vee$ form, and transferring the
connectedness condition from the outer to the inner disjunctions.
Expressing dependencies
as a conjunction of disjunctive causes
makes it natural to associate a net with persistent places with the {\evstr}: each event $e$ becomes a transition and each disjunct $X$ of $e$ corresponds to a persistent place $s$ in the pre-set of $e$, filled by the events in $X$. 
The guarantee that the
disjunctive causes cannot be split into inconsistent subsets will provide a form of canonicity to the construction.

Note that an event enabled by the empty set has no disjuncts (the empty set, which as a disjunct would correspond to ``true'', is not admitted).

\begin{defi}[locally connected {\evstr}]
  \label{de:loc-conn-es}
  An {\evstr} $E$ is \emph{locally connected} if for all
  $e \in E$ there exists a covering $D$ such that any $X \in D$ is
  connected.
  We denote by $\les$
  the full subcategory of $\es$ having locally connected event
  structures as objects.
\end{defi}

It can be easily shown that connectedness implies local connectedness. We first observe that in an {\evstr} every event admits a covering, which is the set of all its disjuncts.

\begin{lem}[coverings always exist]
  \label{le:cover}
  Let $E$ be an {\evstr} and $e \in E$. Then
  $D_e = \{ X \mid X \subseteq E \land\ \mbox{$X$ is a disjunct of $e$}\}$
  is a covering of $e$.
\end{lem}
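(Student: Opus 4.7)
The plan is to prove the contrapositive: assume $C \in \conf{E}$ with $C \not\vdash e$, and exhibit a disjunct $X$ of $e$ disjoint from $C$. The starting observation is that since the enabling relation is monotone and defined on finite sets, $C \not\vdash e$ is equivalent to saying that no minimal enabling $C' \in \hist{e}$ is contained in $C$; equivalently, $E \setminus C$ meets every $C' \in \hist{e}$. So the complement $E \setminus C$ is itself a ``transversal'' of the family $\hist{e}$, and it suffices to extract from it a minimal such transversal, which by definition of disjunct will do the job.

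To produce this minimal transversal I would apply Zorn's lemma to the collection of subsets $Y \subseteq E \setminus C$ that meet every $C' \in \hist{e}$, ordered by reverse inclusion. The nontrivial step is checking that every chain has an upper bound: given a chain $\{Y_\alpha\}$ of such transversals (totally ordered by $\subseteq$), the intersection $\bigcap_\alpha Y_\alpha$ is still a transversal. Here the point is that for each fixed $C' \in \hist{e}$, the induced family $\{Y_\alpha \cap C'\}$ is a chain of nonempty subsets of the \emph{finite} set $C'$, hence has a nonempty minimum, which coincides with $\bigcap_\alpha(Y_\alpha \cap C')$. A Zorn-minimal element $X$ then enjoys the disjunct property in $E$ as a whole: any transversal $X' \subsetneq X$ of the family $\hist{e}$ would again lie inside $E \setminus C$, contradicting minimality. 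Since $X \subseteq E \setminus C$, we have $X \cap C = \emptyset$, providing the desired witness.

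The main obstacle—and what stops the statement from being immediate—is the appeal to Zorn's lemma and the verification that chains of transversals have a transversal as intersection. This rests essentially on the finiteness of each $C' \in \hist{e}$, which in turn follows from the fact that $\vdash$ is defined on $\powfin{E} \times E$ in Definition~\ref{de:es}. For transversals of arbitrary families of infinite sets this closure-under-intersection property fails in general, so the finiteness of enablings is doing the real work and should be flagged explicitly in the write-up.
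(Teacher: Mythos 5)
Your proof is correct and follows essentially the same route as the paper's: both arguments come down to extracting a minimal transversal of $\hist{e}$ that avoids the configuration $C$, which is then a disjunct of $e$ disjoint from $C$, contradicting (in the paper's version) or contraposing (in yours) the assumption that $C$ meets every disjunct. The only real difference is presentational: the paper argues by contradiction, choosing representatives $e_{C'} \in C' \setminus C$ and simply asserting that a minimal transversal among them exists, whereas you justify that minimality step explicitly via Zorn's lemma using the finiteness of each minimal enabling, a detail the paper leaves implicit.
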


\begin{proof}
  Let $C \in \conf{E}$ be a configuration such that
  $C \cap X \neq \emptyset$ for all $X \in D_e$, and suppose by absurd
  that $C \not \vdash e$.  This means that for all $C' \in \hist{e}$
  there is an event $e_{C'} \in C' \setminus C$. Let $X'$ be a minimal
  subset of $\{ e_{C'} \mid C' \in \hist{e}\}$ such that
  $X' \cap C' \not = \emptyset$ for all $C' \in \hist{e}$: then
  clearly $X'$ is a disjunct in $D_e$, but $X' \cap C = \emptyset$,
  yielding a contradiction.
\end{proof}

\begin{lem}[connectedness vs local connectedness]
  \label{le:ces-vs-les}
  Let $E$ be an {\evstr}. If $E$ is connected then it is locally connected.
\end{lem}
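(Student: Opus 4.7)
The plan is to invoke Lemma~\ref{le:cover} and take the canonical covering $D_e$ consisting of \emph{all} disjuncts of $e$. Since any covering will do, the entire task reduces to showing that every disjunct $X$ of an arbitrary event $e \in E$ is connected in the sense of the definition of \evstr\ in disjunctive form, i.e., that its elements form a single connected component under the consistency relation $\conn{}$. The case in which $\emptyset \vdash e$ is trivial since then $D_e = \emptyset$ and the connectedness requirement is vacuous, so I would focus on events with non-empty minimal enablings.

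The key technical step is to exploit the minimality of $X$ as follows. For each $e' \in X$, minimality yields that $X \setminus \{e'\}$ fails to meet some $C_{e'} \in \hist{e}$; combined with $X \cap C_{e'} \neq \emptyset$, this forces $X \cap C_{e'} = \{e'\}$. So each element of $X$ comes equipped with a ``witness'' minimal enabling that singles it out inside $X$. This pairing between elements of $X$ and minimal enablings is the bridge through which global connectedness of $\hist{e}$ can be transferred into the local combinatorics of $X$.

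Given $e', e'' \in X$, I would apply connectedness of $E$ at $e$ to the pair $C_{e'}, C_{e''}$ to obtain a chain
\[
C_{e'} = C_0 \conn{e} C_1 \conn{e} \cdots \conn{e} C_n = C_{e''}
\]
in $\hist{e}$, so that each $C_i \cup C_{i+1} \cup \{e\}$ is consistent. Since $X$ is a disjunct, $X \cap C_i$ is nonempty, and I can pick $f_i \in X \cap C_i$, noting that by the witness property $f_0 = e'$ and $f_n = e''$. Consecutive elements $f_i, f_{i+1}$ lie in the consistent set $C_i \cup C_{i+1} \cup \{e\}$ and hence satisfy $f_i \conn{} f_{i+1}$, yielding the required path inside $X$.

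The main obstacle, and the point where the proof genuinely uses connectedness rather than mere covering, is the ``witness'' construction: without the observation that minimality of $X$ isolates each $e' \in X$ as the unique intersection with some $C_{e'} \in \hist{e}$, one would be unable to anchor the endpoints of the chain of minimal enablings to specified endpoints of $X$, and the consistency chain produced by connectedness could land anywhere inside $X$ rather than connecting the prescribed pair $e', e''$.
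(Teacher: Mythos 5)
Your proof is correct and takes essentially the same approach as the paper: it invokes Lemma~\ref{le:cover} to reduce the claim to showing that every disjunct of every event is connected, and then transfers the chain $C_1 \conn{e}^* C_2$ supplied by connectedness into a consistency chain between the two given elements of the disjunct. The only (welcome) refinement is that you pick the intermediate elements $f_i$ inside $X \cap C_i$ and anchor the endpoints via the singleton-intersection witness $X \cap C_{e'} = \{e'\}$, which keeps the whole chain inside $X$ as the definition of a connected disjunct literally requires, whereas the paper's inductive argument takes an arbitrary $e_1' \in C_1'$ and leaves this small detail implicit.
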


\begin{proof}
  Let $E$ be a connected {\evstr} and let $e \in E$ be an event. 
  Let
  $X$ be any disjunct of $e$ and consider $e_1,e_2 \in X$.
  By minimality of a disjunct, we deduce that there must be
  $C_1, C_2 \in \hist{e}$ such that $e_i \in C_i$, for
  $i \in \{1,2\}$. Since $E$ is connected, we know that
  $C_1 \conn{e}^* C_2$. Then we can prove that $e_1 \conn{}^* e_2$ by
  induction on the length of the chain of consistency
  $C_1 \conn{e}^* C_2$. The base case is trivial. For the inductive
  step, assume that $C_1 \conn{e}^* C_1' \conn{e} C_2$. Take any
  $e_1' \in C_1'$. By inductive hypothesis we know that
  $e_1 \conn{}^* e_1'$. Moreover, since $C_1' \conn{e} C_2$ we deduce
  $e_1' \conn{} e_2$. Thus we conclude $e_1 \conn{}^* e_2$ as desired.
\end{proof}

The above result shows that $\ces$ is a full subcategory of $\les$. Hence the coreflection between $\es$ and $\ces$ restricts to a coreflection between $\les$ and $\ces$.

\begin{prop}[coreflection between $\les$ and $\ces$]
  \label{pr:coreflection-les-ces}
  The inclusion functor $\mathcal{I} : \ces \to \les$ is left adjoint
  of the restriction $\mathcal{C}_{|\les} : \les \to \ces$,
  establishing a coreflection.
\end{prop}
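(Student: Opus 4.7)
My plan is to deduce this statement almost immediately from Proposition~\ref{pr:coreflection-es-ces} by restricting the coreflection along full subcategory inclusions, so the argument is essentially formal.

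First I would observe that the coreflector $\mathcal{C} : \es \to \ces$ of Proposition~\ref{pr:coreflection-es-ces} automatically restricts to a functor $\mathcal{C}_{|\les} : \les \to \ces$, since its image is already contained in $\ces$ regardless of the input. Symmetrically, by Lemma~\ref{le:ces-vs-les} the inclusion $\mathcal{I} : \ces \to \es$ factors through $\les$, and this factorisation is still a full inclusion because $\les$ is by definition a full subcategory of $\es$ and $\ces$ is a full subcategory of $\es$.

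Next I would verify the adjunction by re-using the counit $\varepsilon_L : \mathcal{I}(\mathcal{C}(L)) \to L$ from the original coreflection. For $L \in \les$ this counit is a morphism in $\es$ between two locally connected objects (its domain is connected, hence locally connected by Lemma~\ref{le:ces-vs-les}, and its codomain is $L \in \les$ by assumption), so by fullness of $\les \hookrightarrow \es$ it lives in $\les$. The universal property then transfers for free: given any $K \in \ces$ and any morphism $f : \mathcal{I}(K) \to L$ in $\les$, we view $f$ as a morphism in $\es$, apply the universal property of Proposition~\ref{pr:coreflection-es-ces} to obtain a unique $g : K \to \mathcal{C}_{|\les}(L)$ in $\ces$ with $\varepsilon_L \circ \mathcal{I}(g) = f$, and note that the same equation holds in $\les$ by fullness. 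Uniqueness of $g$ is inherited from the uniqueness clause of the original universal property, again because morphisms between objects of $\ces$ coincide in $\ces$, $\les$ and $\es$.

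There is no genuine obstacle in this argument; it is the standard categorical fact that a coreflection $\mathcal{B} \hookrightarrow \mathcal{A}$ restricts to a coreflection $\mathcal{B} \hookrightarrow \mathcal{A}'$ whenever $\mathcal{B} \subseteq \mathcal{A}' \subseteq \mathcal{A}$ is a chain of full subcategories and the coreflector takes values in $\mathcal{B}$. The only two verifications, both explicit above, are (i) fullness of each inclusion, immediate by definition and by Lemma~\ref{le:ces-vs-les}, and (ii) that the original counit at a locally connected object is already a morphism in $\les$, which again is just fullness.
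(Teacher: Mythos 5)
Your proposal is correct and takes essentially the same route as the paper, whose proof simply declares the result an immediate consequence of Proposition~\ref{pr:coreflection-es-ces} and Lemma~\ref{le:ces-vs-les}. You have merely made explicit the standard restriction argument (coreflector lands in $\ces$, counit and universal property transfer along the full inclusions $\ces \subseteq \les \subseteq \es$) that the paper leaves implicit.
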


\begin{proof}
  Immediate consequence of Proposition~\ref{pr:coreflection-es-ces}
  and Lemma~\ref{le:ces-vs-les}.
\end{proof}

Local connectedness is strictly weaker than connectedness. Let 
$E = \{ a, b, c, d, e \}$ be the {\evstr} with
$\emptyset \vdash_0 a$, $\emptyset \vdash_0 b$,
$\emptyset \vdash_0 c$, $\emptyset \vdash_0 d$, $a \# c$, $b \# d$,
$\{ a,b \} \vdash_0 e$ and $\{ c,d \} \vdash_0 e$. This is not
so, since $\hist{e} = \{ \{a,b\}, \{c,d\}\}$ and $\{ a, b \} \conn{e} \{ c,d\}$
does not hold. It is locally
connected, since $\{ \{a,d\}, \{ b, c\}\}$ is a covering of $e$ and
the disjuncts $\{a,d\}$ and $\{b,c\}$ are connected since
$a \conn{} d$ and $b \conn{} c$.
Logically, the cause of $e$ in $\vee$-$\wedge$ form is $(a \wedge b) \vee (c \wedge d)$. It is not connected since $\neg (a \wedge b) \vee \neg (c \wedge d)$ (and thus $\neg (a \wedge b \wedge e) \vee \neg (c \wedge d \wedge e)$).
If we put the causes in $\wedge$-$\vee$ form we get $(a \vee d) \wedge (b \vee c)$, where neither $\neg a \vee \neg d$ nor $\neg b \vee \neg c$, whence local connectedness.

Dealing with (locally) connected {\evstr} will play an essential role
for establishing a coreflection between occurrence nets with
persistence and {\evstr}s (see Theorem~\ref{th:OccEs}). At an
intuitive level, it ensures that or-causality is preserved along
morphisms and cannot be transformed in ordinary causality. Instead, consider
for instance the {\evstr} $E_1$ and $E_2$ defined as follows
\begin{itemize}
\item $E_1 = \{ a_1, a_2, b \}$ with $\emptyset \vdash_0 a_i$,
  $\{a_i\} \vdash_0 b$ for $i \in \{ 1,2\}$ and $a_1 \# a_2$;

\item $E_2 = \{ a, b\}$ with $\emptyset \vdash_0 a$ and $a \vdash_0 b$

\end{itemize}
The {\evstr} $E_1$ is not locally-connected, since $\{a_1,a_2\}$ is the only disjunct for $b$ and it is not connected. It is easy to realise that the mapping $f : E_1 \to E_2$ defined by
$f(a_1) = f(a_2) = a$ and $f(b) =b$ is an {\evstr} morphism: merging $a_1$ and $a_2$ the or-causality of $b$ is transformed into a proper causality $a < b$.

\section{Nets with Persistence}
\label{sec:pnets}

In this section we introduce Petri nets with persistence. Since the state of a
Petri net will be seen as a multiset, i.e., an element of a suitably defined monoid, we start with some notation on sets and monoids. The irrelevance of the number of tokens in persistent places is modelled by some form of idempotency in the monoid.

Recall that, given a set $X$, we denote by $\pow{X}$ the powerset of $X$ and by
$\powfin{X}$ the set of finite subsets of $X$.
We denote by $\mon{X} = \{ u \mid u : X \to \nat \}$ the commutative
monoid of multisets on $X$ with the operation $\oplus$ and identity
$\emptyset$ defined in an obvious way. Elements of $\mon{X}$ are often
represented by formal sums $u = \bigoplus_{x \in X} u(x) \cdot x$.
Given $u \in \mon{X}$ we denote by $\set{u}$ the underlying set $\{ x
\mid x \in X\ \land\ u(x) > 0 \}$. We write $x \in u$ for $x \in
\set{u}$ and we say that $u$ is \emph{finite} if $\set{u}$ is
finite. We identify a set $u \in \pow{X}$ and the ``corresponding''
multiset $\bigoplus_{x \in u} 1 \cdot x$.
Given $u,u'\in \mon{X}$ we say that $u$ \emph{covers} $u'$, written
$u'\subseteq u$, if there is $u''\in \mon{X}$ such that $u = u'\oplus
u''$.
For $u, u' \in \mon{X}$ we write $u \cap u'$ for the largest $u''$ such that $u'' \subseteq u$ and $u'' \subseteq u'$.

When $X$ has a chosen subset $\pe{X} \subseteq X$, the
commutative monoid with idempotency is $\mon{(X,\pe{X})} = \{ u \mid u
: X \to \nat\ \land\ \forall x \in \pe{X}.\, u(x) \leq 1 \}$. Elements
are still seen as formal sums $u = \bigoplus_{x \in X} u(x) \cdot
x$, with an idempotency axiom $x \oplus x = x$ for any $x \in \pe{X}$.
As before, given $u,u'\in \mon{(X,\pe{X})}$ we say that $u$ \emph{covers} $u'$, written
$u'\subseteq u$, if there is $u''\in \mon{X}$ such that $u = u'\oplus u''$.
Note however that due to idempotency there can be several $u''$ such that $u = u'\oplus u''$.
More precisely, the set $\{ u'' \mid u = u'\oplus u''\}$ forms a lattice with respect to $\subseteq$ and we write $u\ominus u'$ to denote the top element of the lattice.
For example, when $a\in X\setminus \pe{X}$ and $b,c\in \pe{X}$, although $(a\oplus b)\oplus(a\oplus c) = 2a\oplus b\oplus c$ we have $(2a\oplus b\oplus c) \ominus (a\oplus b)= a\oplus b\oplus c$.

Given sets $X$, $Y$, a monoid homomorphism $f : \mon{X} \to \mon{Y}$
is called \emph{finitary} if for all $x \in X$, $f(x)$ is finite.  A
function $f : X \to \mon{Y}$ that is finitary (i.e., $f(x)$ finite
for all $x \in X$) can be extended to a finitary monoid homomorphism
denoted $\mon{f} : \mon{X} \to \mon{Y}$, defined by
$\mon{f}(\bigoplus_{x \in X} n_x \cdot x) = \bigoplus_{x \in X} n_x
\cdot f(x)$.
Analogous notions can be defined for partial functions and when the
target or both the source and target are monoids with idempotency.

\begin{defi}[net with persistence]
  An (unweighted marked P/T Petri) \emph{net with persistence}
  (\emph{p-net}, for short) is a tuple
  $N = (S, \pe{S},T, \delta_0, \delta_1, u_0)$, where $S$ is a set of
  \emph{places}, $\pe{S} \subseteq S$ is the set of \emph{persistent} places,
  $T$ is a set of \emph{transitions},
  $\delta_0, \delta_1 : T \to \pow{S}$ are functions assigning sets
  called pre-set and post-set, respectively, to each transition and
  $u_0 \subseteq \npe{S}$ is the \emph{initial marking}, where $\npe{S}$
  is the set of non-persistent places
 $\npe{S} = S \setminus \pe{S}$.
\end{defi}

Given a finite multiset of transitions $v \in \mon{T}$ we write
$\pre{v}$ and $\post{v}$ for $\mon{\delta_0}(v)$ and
$\mon{\delta_1}(v)$.
 Given a place $s \in S$ we also write $\pre{s}$ for
$\{ t \mid s \in \post{t} \}$ and $\post{s}$ for  $\{ t \mid s \in \pre{t}
\}$. 

Hereafter, for any p-net $N$ we assume $N = (S, \pe{S},T, \delta_0,
\delta_1, u_0)$, with subscripts and superscripts carrying over the
names of the components.
Note that we work with nets that are not weighted (pre- and post-sets
of transitions are sets, rather than multisets). The results could be
trivially extended, as in the ordinary case, to semi-weighted nets
(where the pre-set can be a proper multiset). The restriction is
adopted to ease the presentation.

The state of a p-net $N$ is represented by some $u \in
\mon{(S,\pe{S})}$, called a \emph{marking} of $N$. A transition $t$ is
\emph{enabled} by a marking $u$ if its pre-set is covered by $u$,
i.e. if $\pre{t} \subseteq u$.
In this case, $t$ can be fired. The firing of $t$ consumes the tokens in the
non-persistent places in the pre-set, leaves untouched the tokens in the
persistent places and produces the tokens in the post-set. More generally,
this applies to finite multisets of transitions. Formally, given a
finite multiset $v \in \mon{T}$ and a marking $u \in
\mon{(S,\pe{S})}$, if 
$\pre{v} \subseteq u$, the firing rule is 
$u\fire{v}(u\ominus \pre{v})\oplus\post{v}$.

A marking $u \in \mon{S}$ is \emph{reachable} if there exists a firing
sequence $u_0 \fire{t_1} u_1 \fire{t_2} \ldots \fire{t_n} u$ from the
initial marking to $u$.
The p-net is \emph{safe} if every reachable marking is a set.

Observe now that in the initial marking only non-persistent places can be
marked.  %
Indeed, if a persistent place $p$
were marked in the initial marking then its presence would be
essentially useless, since any reachable marking would contain one token in
$p$. Therefore removing $p$ and all the incoming and outgoing arcs
would lead to an equivalent net that can perform the same firing
sequences and such that a marking $u$ is reachable if and only if 
the marking $u\oplus p$ was
reachable in the original net.

\begin{exa}
Let us consider the p-net in Figure~\ref{fig:running} of our running
example, whose initial marking is $u_0 = p \oplus q \oplus r \oplus
s$. A sample firing sequence is
$$
u_0 
\ \fire{a}\ q \oplus r \oplus s \oplus o
\ \fire{b}\ r \oplus s \oplus o
\ \fire{c}\ r \oplus s \oplus o
\ \fire{d}\ t \oplus s \oplus o
\ \fire{e}\ t \oplus u \oplus o
$$
Note that once the token in the persistent place $o$ is produced by the
firing of $a$, it is never possible to remove it, not even firing $d$ or
$e$. Also note that after $b$ fires, the multiplicity of $o$ remains
$1$ due to the idempotency axiom on persistent places. It is immediate
to check that the p-net is safe, as all reachable markings are 
sets.
\end{exa}

\begin{rem}
  Our presentation of p-nets slightly differs from the original one
  in~\cite{CrazzolaraW05}.  There: (i)~arcs carry weight $1$ or
  $\infty$; (ii)~an arc has weight $\infty$ if and only if it goes
  from a transition to a persistent place; (iii)~markings allows
  having either $0$ or infinitely many tokens in each persistent
  place; (iv)~even if a firing removes finitely many tokens from
  persistent place, there remain infinitely many tokens available.  As
  each marking associates only one bit of information with each
  persistent place, here we find technically more convenient to
  represent marked persistent places by assigning them multiplicity
  $1$ and by exploiting idempotency to capture the fact that when infinitely
  many tokens are added to a marked persistent place there still are
  infinitely many tokens (indistinguishable one from the other).
\end{rem}

\begin{rem}
We use the term ``net with persistence'' to avoid confusion with the
notion of ``persistent net'' in the literature 
(see e.g.~\cite{DBLP:journals/jacm/LandweberR78}),
a behavioural property defined as follows: A net is
persistent if whenever $u \fire{t_1}$ and $u \fire{t_2}$ for a
reachable marking $u$, then $u \fire{t_1\oplus t_2}$.
\end{rem}

When dealing with unfolding and, more generally, with the causal semantics
of ordinary Petri nets, it is a standard constraint to assume that transitions have a
non-empty pre-set. This
avoids the unnatural
possibility of firing infinitely many copies of the same transition in
parallel. For p-nets, this generalises to the requirement
that each transition consumes tokens from at least one non-persistent
place.
Additionally, since a persistent place will
never be emptied once it is  filled with a token, 
whenever a persistent place $s$ is in the post-set
of a transition $t$ it is quite natural to forbid the presence of an
additional path from $t$ to $s$. This property is formalised by using the flow
relation $\struct_N$ for a net $N$, defined, for all
$x, y \in S \cup T$, by $x \struct_N y$ if $x \in \pre{y}$.

\begin{defi}[well-formed net]
  \label{de:well-formed}
  A p-net $N$ is \emph{well-formed} if for all $t \in T$,
  $\delta_0(t) \not \in \mon{\pe{S}}$ (\emph{t-restrictedness}) and
  for all $t \in T$, $s \in \pe{S}$, if  $t \struct_N s$ then $t \not \struct_N^n s$ for $n \geq 2$  (\emph{irredundancy}).
\end{defi}

Observe that, in particular, whenever a p-net is irredundant, it does not include cycles over persistent places, i.e., for any $s \in \pe{S}$, it is not the case that $s \struct_N^+ s$. Hereafter all p-nets will be tacitly assumed to be well-formed.

The notion of p-net morphism naturally arises from an algebraic view,
where places and transitions play the role of sorts and operators.

\begin{defi}[p-net morphism]
  \label{de:pnet-morphism}
  A \emph{p-net morphism}
  $f = \langle \pl{f},\tr{f}\rangle : N \to N'$ is a
  pair where 
  \begin{enumerate}
  \item
    \label{de:pnet-morphism:1}
    $\pl{f} : \mon{(S,\pe{S})} \to \mon{(S',\pe{S'})}$ is a
    finitary monoid homomorphism such that for $s \in \npe{S}$,
    $\pl{f}(s) \in \mon{\npe{S'}}$ and the initial marking is preserved, 
    i.e.~ $\pl{f}(u_0) = u'_0$;
    
  \item
    \label{de:pnet-morphism:2}
    $\tr{f} : T \rightharpoonup T'$ is a partial function such
    that for all finite $v \in \mon{T}$,
    $\pre{(\mon{\tr{f}}(v))} = \pl{f}(\pre{v})$ and
    $\post{(\mon{\tr{f}}(v))} = \pl{f}(\post{v})$;

  \item
    \label{de:pnet-morphism:3}
    for all $t \in T$ and $s_1, s_2 \in \pre{t}$ or $s_1, s_2 \in \post{t}$, if
    $\pl{f}(s_1) \cap \pl{f}(s_2) \neq \emptyset$ then $s_1 = s_2$.  
  \end{enumerate}
  The category of p-nets (as objects) and their morphisms (as arrows)
  is denoted by $\pn$.
\end{defi}

Observe that by the fact that
$f_s : \mon{(S,\pe{S})} \to \mon{(S',\pe{S'})}$ is a monoid homomorphism and condition~(\ref{de:pnet-morphism:1}) we
automatically get that $s \in \pe{S}$ implies $\pl{f}(s) \in \mon{\pe{S'}}$.
Moreover, for all
$u_1, u_2 \in \mon{(S,\pe{S})}$, with $u_2 \subseteq u_1$, it holds
that $\pl{f}(u_1 \ominus u_2) = \pl{f}(u_1) \ominus \pl{f}(u_2)$.
Condition~(\ref{de:pnet-morphism:2}) amounts to require that for any
$t \in T$, if $\tr{f}(t)$ is defined then
$\pre{{\tr{f}}(t)} = \pl{f}(\pre{t})$ and
$\post{{\tr{f}}(t)} = \pl{f}(\post{t})$, and
$\pl{f}(\pre{t}) = \pl{f}(\post{t}) = \emptyset$ otherwise.
Finally, condition~(\ref{de:pnet-morphism:3}) imposes injectivity of the morphism on the pre-set
and post-set of each transition, i.e., the morphism should not mix
places in the pre-set (and in the post-set) of the same transition.
This is automatically satisfied for non-persistent places, 
but it
could be violated by persistent places (due to idempotency), hence we
require it explicitly.

In the sequel, when the meaning is clear from the context, we often omit the
subscripts from the morphism components, thus writing $f$ instead of
$\pl{f}$ and $\tr{f}$.

\begin{lem}[p-net morphisms are simulations]
  \label{lemma:simulation}
  Let $f : N \to N'$ be a p-net morphism.  If $u \fire{v} u'$ is a
  firing in $N$ then $f(u) \fire{\mon{f}(v)} f(u')$ is a firing in
  $N'$.
\end{lem}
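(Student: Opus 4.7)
The statement unfolds to two obligations: showing that $\mon{\tr{f}}(v)$ is enabled at $\pl{f}(u)$ in $N'$, and computing the resulting marking and checking it equals $\pl{f}(u')$. The plan is to read off both facts from the axioms of a p-net morphism together with the remarks immediately following Definition~\ref{de:pnet-morphism}. Since $u \fire{v} u'$ means $\pre{v} \subseteq u$ and $u' = (u \ominus \pre{v}) \oplus \post{v}$, the whole argument is a direct computation once the right equalities are in place.

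First I would handle enablement. From $\pre{v} \subseteq u$ and the fact that $\pl{f}$ is a monoid homomorphism (hence monotone with respect to $\subseteq$), one gets $\pl{f}(\pre{v}) \subseteq \pl{f}(u)$. Condition~(\ref{de:pnet-morphism:2}) of p-net morphisms rewrites the left-hand side as $\pre{(\mon{\tr{f}}(v))}$, so $\mon{\tr{f}}(v)$ is indeed enabled at $\pl{f}(u)$ in $N'$. Note that this argument implicitly covers the case where some transitions in $v$ have $\tr{f}$ undefined, because condition~(\ref{de:pnet-morphism:2}) forces $\pl{f}(\pre{t}) = \pl{f}(\post{t}) = \emptyset$ for any such $t$, making their contribution to $\pre{v}$ and $\post{v}$ disappear under $\pl{f}$.

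Next I would compute the target marking. Using the homomorphism property,
\[
\pl{f}(u') \;=\; \pl{f}\bigl((u \ominus \pre{v}) \oplus \post{v}\bigr) \;=\; \pl{f}(u \ominus \pre{v}) \oplus \pl{f}(\post{v}).
\]
The remark after Definition~\ref{de:pnet-morphism} guarantees that $\pl{f}(u \ominus \pre{v}) = \pl{f}(u) \ominus \pl{f}(\pre{v})$ since $\pre{v} \subseteq u$, and a second appeal to condition~(\ref{de:pnet-morphism:2}) rewrites $\pl{f}(\pre{v})$ and $\pl{f}(\post{v})$ as $\pre{(\mon{\tr{f}}(v))}$ and $\post{(\mon{\tr{f}}(v))}$, respectively. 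Combining these steps yields $\pl{f}(u') = (\pl{f}(u) \ominus \pre{(\mon{\tr{f}}(v))}) \oplus \post{(\mon{\tr{f}}(v))}$, which is exactly the marking reached by firing $\mon{\tr{f}}(v)$ from $\pl{f}(u)$ in $N'$.

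The only genuinely subtle step is the interaction of $\ominus$ with idempotency on persistent places: $\ominus$ picks the top element of a lattice of possible complements, so commutation with $\pl{f}$ is not automatic from the monoid structure alone. This is precisely where I rely on the remark noted after Definition~\ref{de:pnet-morphism}, which I expect to be the one point worth calling out in the written proof. Everything else reduces to routine unfolding of definitions.
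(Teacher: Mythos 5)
Your proposal is correct and follows essentially the same route as the paper's proof: enablement via monotonicity of $\pl{f}$ together with condition~(\ref{de:pnet-morphism:2}), and the target marking via the observation after Definition~\ref{de:pnet-morphism} that $\pl{f}(u_1 \ominus u_2) = \pl{f}(u_1) \ominus \pl{f}(u_2)$ when $u_2 \subseteq u_1$. Your explicit remark on why $\ominus$ needs that observation (rather than following from the monoid structure alone) is exactly the point the paper's chain of equalities silently relies on.
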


\begin{proof}
  Since $f$ is a p-net morphism we have
  $\pre{(\mon{f}(v))} = f(\pre{v})$ and
  $\post{(\mon{f}(v))} = f(\post{v})$.  Since $u \fire{v} u'$, it must
  be the case that $u = u'' \oplus \pre{v}$ for some $u''$, from which
  $f(u) = f(u'') \oplus f(\pre{v})$, i.e., $f(\pre{v}) \subseteq f(u)$
  and thus $\mon{f}(v)$ is enabled at $f(u)$.
  Therefore there is a firing
  \begin{center}
    $f(u) \fire{\mon{f}(v)} (f(u) \ominus \pre{(\mon{f}(v))}) \oplus
    \post{(\mon{f}(v))}$
  \end{center}
  and the target
  $(f(u) \ominus \pre{(\mon{f}(v))}) \oplus \post{(\mon{f}(v))}= (f(u)
  \ominus f(\pre{v})) \oplus f(\post{v}) = f((u \ominus \pre{v})
  \oplus \post{v}) = f(u')$, as it was required.
\end{proof}

\section{Unfolding Nets with Persistence}
\label{sec:unf}

In this section we show how a p-net can be unfolded to a suitably
defined occurrence p-net that represents all possible occurrences of
firing of transitions and their dependencies. We first introduce the
class of occurrence p-nets, where each transition can occur at most
once in a computation, but possibly with different disjunctive
causes. We next observe that occurrence p-nets can be equivalently
presented by forcing each transition 
to have a uniquely determined set of
causes and using an equivalence between transitions to account for
disjunctive causes. Finally, we present the unfolding construction for
a p-net, that works in two steps. First a p-net is unfolded into an
occurrence p-net without backward conflicts, where each item has a
uniquely determined causal history. The possibility of generating a
token in the same persistent place with different histories is
captured by means of an equivalence relation on places and
transitions. Then the actual unfolding is obtained as the quotient of the
pre-unfolding with respect to the equivalence relation.

\subsection{Occurrence p-nets}

In order to single out the class of occurrence p-nets, we start by defining the possible dependencies in a p-net.

\begin{defi}[enabling, conflict, causality, dependence, concurrency]\label{def:mainrels}
\label{def:dependence}
  Let $N$ be a p-net. 
\begin{itemize}
\item  Enabling, written $\vdash_N$, is defined by letting, for $X \in \powfin{T}$
  and
  $t \in T$, $X \vdash_N t$ if for all $s \in \pre{t}$ either
  $s \in u_0$
  or there exists $t' \in X$ such that
  $s \in \post{t'}$.
\item Conflict $\#_N \subseteq \pow{S \cup T}$ is the least
  set-relation, closed by superset, 
  defined by

    \begin{enumerate}
    
  \item[(a)] if $t \neq t'$ and $\pre{t} \cap \pre{t'} \not\subseteq \pe{S}$,
    then $\#_N \{ t,  t'\}$;
  
  \item[(b)] if $\pre{s} \neq \emptyset$ and $\#_N (X \cup \{t\})$ for all
   $t \in \pre{s}$, then
   $\#_N (X \cup \{s \})$;

 \item[(c)] if $\#_N (X \cup \{ s \})$ and $s \in \pre{t}$, then
    $\#_N (X \cup \{ t\})$.

  \end{enumerate}
  We will often write $x \#_N x'$ instead of $\#_N \{ x,x'\}$. 
  We say that $X$ is
 \emph{consistent} if it is not in conflict. In particular, binary consistency is denoted $\conn{}_N$, i.e., $x \conn{}_N x'$ when $\neg(x \#_N x')$. 
 We say that $X$ is \emph{connected} by $\conn{}_N$ if for all $x,x'\in X$ there exists $n\geq 1$ and $x_1,x_2,...,x_n\in X$ such that $x=x_1 \conn{}_N x_2 \conn{}_N \cdots \conn{}_N x_n = x'$.
\item
  Causality $\leq_N$ is the least transitive and reflexive
  relation $\leq_N$ on $S \cup T$ such that if $s \in \pre{t}$ then
  $s \leq_N t$ and if $\pre{s} = \{t\}$
  then $t \leq_N s$.
  We write $x <_N x'$ when $x \leq_N x'$ and $x\neq x'$.
\end{itemize}
  When the context is clear, we will omit the subscript $N$,  writing 
  $\vdash$, $\#$, $\leq$ and $\conn{}$
  instead of  $\vdash_N$, $\#_N$, $\leq_N$  and $\conn{}_N$.
\end{defi}

Differently from what happens for ordinary nets, in occurrence
p-nets we will allow different ways of enabling the same occurrence of
a transition.
This is because tokens, once generated in persistent places, cannot be
consumed and all tokens in the same place are ``merged'' into one (by
idempotency) in a way that the resulting single token joins all the
different possible causal histories.
For this reason it is convenient to resort to a general notion of
enabling where a transition naturally has several sets of transitions that
allow for its execution.
Observe that if $X \vdash e$ and $X \subseteq X'$ then $X' \vdash
e$. As it happens for {\evstr}s, when
$X \vdash e$, the set $X$ can be inconsistent. We will
later impose a condition forcing each transition to be enabled by at
least one minimal consistent set.

Note that direct conflict is only
binary, since it is determined by the competition on non-persistent
places. However, in order to define properly inheritance of conflict along the enabling relation we need to work with a conflict relation on generic sets.
Consider for instance, the p-net in Figure~\hyperref[fig:combined]{\ref*{fig:combined}a}.
and the set of places $\{ p, q, r\}$. The rules for conflict allow us to deduce that $\# \{ p, q, r \}$, while $\neg x \# y$ for all $x, y \in \{p, q, r\}$. 
Intuitively this happens  because at most two places in the set $\{ p, q, r \}$ can be filled.
The formal derivation can be found in Figure~\hyperref[fig:combined]{\ref*{fig:combined}b}.
The relation between coverability and absence of conflicts will be characterised later via the notion of concurrency, in Lemma~\ref{le:cover-conc-general}.

We also introduce causality: when an item $x$ is a cause of an item
$y$ the intuition is that the presence of $y$ in a
computation implies the presence of $x$, i.e., $x$ is needed to
``enable'' $y$. This will play a role later. However, note that
causality alone would not completely characterise the dependencies in
the p-net as it does not account for disjunctive causes.

We next define the notion of securing sequence
in the context of p-nets.

\begin{defi}[securing sequence]
  \label{de:net-securing-disjunct} 
  Let $N$ be a p-net. Given a transition $t \in T$, a \emph{securing
    sequence} for $t$ in $N$ is a sequence of distinct transitions
  $t_1, t_2, \ldots, t_n=t$ such that for all $i, j \in \interval{n}$,
  $\neg (t_i \# t_j)$ and for all $i \in \interval{n}$,
  $\{t_1 \ldots t_{i-1}\} \vdash t_{i}$.
\end{defi}

A securing sequence for $t$ is intended to represent a firing sequence
that leads to the execution of $t$. This fact will be later
formalised, for occurrence p-nets, in Lemma~\ref{le:conf-exec}.

\begin{figure}[t]
  \centering
  \begin{minipage}{\textwidth}
    \centering
    $$
    \xymatrix@R=1pc@C=.2pc{
      & & & \drawmarkedplace\ar[dlll]\ar[dr]\ar[drrrrr]
      & & & & \drawmarkedplace\ar[drrr]\ar[dl]\ar[dlllll]
      & & & \\
      \drawtrans {a_1} \ar[dr] &
      & \drawtrans {a_2} \ar[dl] &
      & \drawtrans {b_1} \ar[dr] &
      & \drawtrans {b_2} \ar[dl] &
      & \drawtrans {c_1} \ar[dr] &
      & \drawtrans {c_2} \ar[dl] \\
      & \drawpersistentplace \nameplaceright p & & &
      & \drawpersistentplace \nameplaceright q & & &
      & \drawpersistentplace \nameplaceright r &
    }
    $$
    (a)
  \end{minipage}
  \\[5mm]
  \begin{minipage}{\textwidth}
    \centering
  { \small
  $
  \infer
  {\#\{p,q,r\}}
  {
    \infer
    {\# \{p,q,c_1\}}
    {
      \infer
      {\# \{p,b_1,c_1\}}
      {\infer
        {
          \# \{b_1,c_1\}
        }
        {
          \pre{b_1} \cap \pre{c_1} \not\subseteq \pe{S} 
        }
      }
      &
      \infer
      {\# \{p,b_2,c_1\}}
      {
        \infer
        {\# \{a_1,b_2,c_1\}}
        {
          \infer
          {\# \{a_1,c_1\}}
          {
            \pre{a_1} \cap \pre{c_1} \not\subseteq \pe{S} 
          }
        }
        &
        \infer
        {\# \{a_2,b_2,c_1\}}                {
          \infer
          {\# \{a_2,b_2\}}
          {
            \pre{a_2} \cap \pre{b_2} \not\subseteq \pe{S} 
          }
        }        
      }
    }
    &
    \infer
    {\# \{p,q,c_2\}}
    {
      \infer
      {\# \{p,b_1,c_2\}}
      {
        \infer
        {\# \{a_1,b_1,c_2\}}
        {
          \infer
          {\# \{a_1,b_1\}}
          {
            \pre{a_1} \cap \pre{b_1} \not\subseteq \pe{S} 
          }
        }
        &
        \infer
        {\# \{a_2,b_1,c_2\}}
        {
          \infer
          {\# \{a_2,c_2\}}
          {
            \pre{a_2} \cap \pre{c_2} \not\subseteq \pe{S} 
          }
        }      
      }
      &
      \infer
      {\# \{ p, b_2, c_2\}}
      {
        \infer
        {
          \# \{b_2,c_2\}
        }
        {
          \pre{b_2} \cap \pre{c_2} \not\subseteq \pe{S} 
        }
      }
    }
  }
  $}\\[2mm]
  (b)
\end{minipage}
\caption{Non-binary conflict occurrence p-nets}
\label{fig:combined}
\end{figure}

Occurrence p-nets can now be defined as a subclass of p-nets. 
For historical reasons, places and transitions of  occurrence p-nets are called \emph{conditions} and \emph{events}, respectively. We will adopt this convention in the rest of the paper, denoting occurrence p-nets by $O$, with components $(B, \pe{B}, E, \gamma_0, \gamma_1, v_0)$, where $B$ is the set of conditions and $E$ the set of events.

\begin{defi}[occurrence p-net]\label{def:occpnet}
\label{def:occ-p-net}
  A structure  $O = (B, \pe{B}, E, \gamma_0, \gamma_1, v_0)$ is an \emph{occurrence p-net} if it is 
  a p-net such that 
  \begin{enumerate}
  \item
the initial marking   $v_0$ satisfies $v_0 = \{ b\in B \mid \pre{b} =\emptyset \}$;        

  \item each event $e \in E$ admits  a securing sequence in $O$;
  \item for all $e, e' \in e$, if $e \neq e'$ then
    $\post{e} \cap \post{e'} \subseteq \pe{B}$;

  \item for all $b \in B$, $\pre{b}$ is connected by $\conn{}_O$.

  \end{enumerate}
An occurrence p-net is \emph{without backward conflicts} when for all
  $e, e' \in E$, if $e \neq e'$ then
  $\post{e} \cap \post{e'} = \emptyset$.
  We denote by $\occ$ the full subcategory of $\pn$ with occurrence
  p-nets as objects.
\end{defi}

Observe that by the requiring all 
events secured
(together with the fact that the initial marking is made of the conditions
with empty pre-set) an occurrence p-net could only be cyclic for the presence of
back-pointers to persistent conditions. This latter possibility is
excluded structurally by the irredundancy assumption (Definition~\ref{de:well-formed}).
The connectedness requirement for the pre-set of 
conditions is trivially
satisfied for non-persistent 
conditions, since they can have at most one
event in their pre-set by condition (3). Instead, for persistent conditions
it is an actual constraint: it imposes that we cannot split the generators of
the condition
in two subsets not connected by consistency. Again the idea
is that, if this were the case, the condition
should be split into two
occurrences each having an element of the partition as pre-set.
This is reminiscent of the local connectedness requirement for
{\evstr} and, indeed, it will ensure that the {\evstr} extracted from
an occurrence p-net is locally connected.

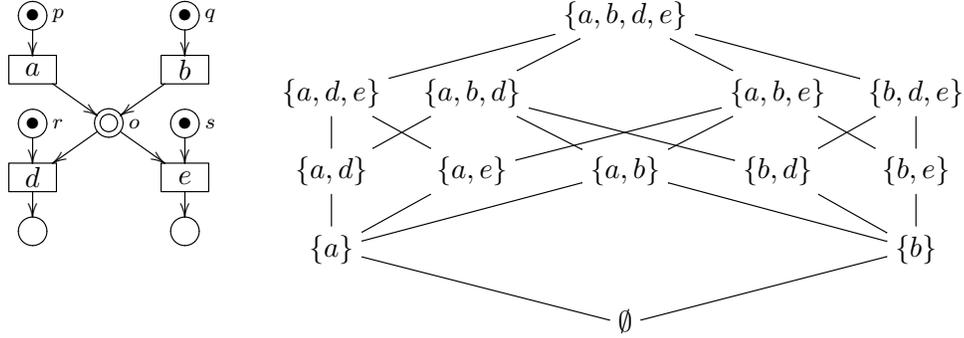
\begin{figure}[t]
$$
 \xymatrix@R=.8pc@C=1.2pc{
 &
 \drawmarkedplace\ar[d]
 \nameplaceright p
 & &
 \drawmarkedplace\ar[d]
 \nameplaceright q
 \\
 &
 \drawtrans a \ar[dr] 
 & &
 \drawtrans b \ar[dl] 
 \\
 &
 \drawmarkedplace\ar[d]
 \nameplaceright r
 &
 \drawpersistentplace\ar[dl] \ar[dr]
 \nameplaceright o
 &
 \drawmarkedplace\ar[d] 
 \nameplaceright s
 \\
 &
 \drawtrans d \ar[d]
 & &
 \drawtrans e \ar[d] 
 \\
 &
 \drawplace
 & &
 \drawplace
 }
 \qquad
 \xymatrix@R=1pc@C=.8pc{
 & & {\{a,b,d,e\}}
 \\
 {\{a,d,e\}} \ar@{-}[rru]  
 & {\{a,b,d\}} \ar@{-}[ru]
 & & {\{a,b,e\}} \ar@{-}[lu] 
 & {\{b,d,e\}} \ar@{-}[llu]
 \\
 {\{a,d\}} \ar@{-}[u] \ar@{-}[ru] 
 & {\{a,e\}} \ar@{-}[lu] \ar@{-}[rru]  
 & {\{a,b\}} \ar@{-}[lu] \ar@{-}[ru]  
 & {\{b,d\}} \ar@{-}[ru] \ar@{-}[llu]
 & {\{b,e\}} \ar@{-}[u] \ar@{-}[lu]
 \\
 {\{a\}} \ar@{-}[u] \ar@{-}[ru] \ar@{-}[rru]
 & & & & {\{b\}} \ar@{-}[u] \ar@{-}[lu] \ar@{-}[llu]
 \\
 & & {\emptyset} \ar@{-}[rru] \ar@{-}[llu]
 }
$$
\caption{An occurrence p-net %
  and its domain of configurations}
\label{fig:exoccpnet}
\end{figure}

An occurrence p-net inspired by our running example is in Figure~\ref{fig:exoccpnet} (left).
Event $a$ has a unique securing sequence consisting of $a$ itself, and similarly for $b$;
$d$ instead has two minimal securing sequences: $ad$ and $bd$. 
There is only one backward conflict, on the persistent condition
$o$, since 
$\post{a} \cap \post{b} = \{o\} \subseteq \pe{S}$. The conflict relation is empty and there is no causality between events.
Note in particular, that $\pre{o}$ is connected by $\conn{}_O$ since $a \conn{}_O b$.

Consider now, the p-net in Figure~\ref{fig:no-occurrence} (left). It
satisfies all conditions of Definition~\ref{def:occ-p-net} but (4). In
fact, now the pre-set of $\pre{o} = \{a,b\}$ is not connected by
$\conn{}_N$ since $a \#_N b$. Intuitively, to recover connectedness the place $o$
should be split in two places $o_1$ and $o_2$, with pre-sets $\{a\}$
and $\{b\}$, respectively, thus getting the occurrence p-net in
Figure~\ref{fig:no-occurrence} (right). Indeed, this is the occurrence p-net arising from the unfolding construction described below.

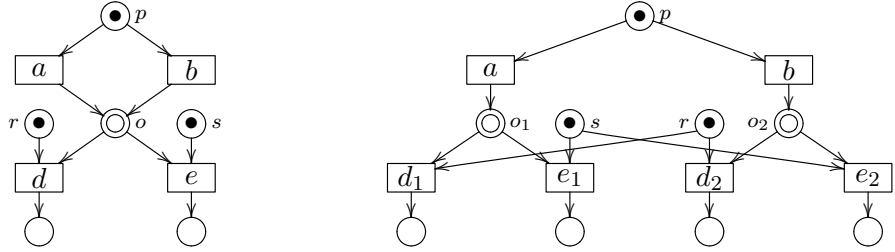
\begin{figure}[t]
$$
 \xymatrix@R=.8pc@C=1.2pc{
 &
 &
 \drawmarkedplace\ar[dl]\ar[dr]
 \nameplaceright p
 & 
 \\
 &
 \drawtrans a \ar[dr] 
 & &
 \drawtrans b \ar[dl] 
 \\
 &
 \drawmarkedplace\ar[d]
 \nameplaceleft r
 &
 \drawpersistentplace\ar[dl] \ar[dr]
 \nameplaceright o
 &
 \drawmarkedplace\ar[d] 
 \nameplaceright s
 \\
 &
 \drawtrans d \ar[d]
 & &
 \drawtrans e \ar[d] 
 \\
 &
 \drawplace
 & &
 \drawplace
}
\hspace{22mm}
\xymatrix@R=.8pc@C=1pc{
 &
 & &
 \drawmarkedplace\ar[dll]\ar[drr]
 \nameplaceright p
 & &
 \\
 & 
 \drawtrans a \ar[d] 
 & & & &
 \drawtrans b \ar[d] 
 \\
 & 
 \drawpersistentplace \ar[dl]\ar[dr]
 \nameplaceright {\ o_1}
 &
 \drawmarkedplace \ar[d]\ar@(l,r)[drrrr]
 \nameplaceright {s}
 & &
 \drawmarkedplace \ar[d] \ar@(r,l)[dllll]
 \nameplaceleft {r}
 &
 \drawpersistentplace \ar[dl]\ar[dr]
 \nameplaceleft {o_2\ }
 &
 \\
 \drawtrans {d_1} \ar[d] 
 & &
 \drawtrans {e_1} \ar[d]
 & &
 \drawtrans {d_2} \ar[d] 
 & &
 \drawtrans {e_2} \ar[d]
 \\
 \drawplace
 & 
 &
 \drawplace
 & &
 \drawplace
 &
 &
 \drawplace
}
$$

\caption{A p-net that is not an occurrence one (left) and its unfolding (right).}
\label{fig:no-occurrence}
\end{figure}

Note that, as in the case of ordinary occurrence nets, the initial
marking is determined by the structure of the net, i.e., it consists
of the set of non-persistent conditions
with empty pre-set. Still, for the sake of clarity it will be indicated explicitly.
It can be easily seen that, in absence of persistent conditions,
Definition~\ref{def:occpnet} gives the ordinary notion of occurrence net. We have indeed that
\begin{itemize}
\item there is no self-conflict because each event
is secured;
\item there is no backward conflict because for all $e, e' \in E$, if $e \neq e'$ 
then
  $\post{e} \cap \post{e'} = \emptyset$.
\end{itemize}

Some proofs of results on occurrence p-nets will exploit induction on the following notion of depth, that generalises the one of ordinary occurrence nets.

\begin{defi}[depth]
  Let $O = (B, \pe{B}, E, \gamma_0, \gamma_1, v_0)$ be an occurrence
  p-net. We define the \emph{depth} of events and conditions as
  \begin{eqnarray*}
  \depth{e} & = & 1+ \max \{ \depth{b} \mid b \in \pre{e} \}\\
  \depth{b} & = & \left\{\begin{array}{lcl}
  0 & \quad & \mbox{if $b \in v_0$} \\
  \min \{ \depth{e} \mid e \in \pre{b}\} & \quad & \mbox{otherwise}
  \end{array}\right.
  \end{eqnarray*}
\end{defi}

We introduce a notion of configuration for occurrence p-nets that is intended to capture the concept of a (concurrent) computation.

\begin{defi}[configuration of occurrence p-nets]
  Let $O= (B, \pe{B}, E, \gamma_0, \gamma_1, v_0)$ be an occurrence
  p-net. A configuration is a set of events $C \subseteq E$
  such that each $e \in C$ admits a securing sequence in $C$ and 
  for all $e, e' \in C$, $\neg (e \#_N e')$.
  We denote by $\conf{O}$ the set of
  configurations of $O$.
\end{defi}

The configurations of the net in Figure~\ref{fig:exoccpnet} (left) occur on the right, ordered by inclusion.

We next prove that configurations can be interpreted as
representations of classes of firing sequences where the order of
independent firings is abstracted.

\begin{lem}[configurations are executions]
  \label{le:conf-exec}
  Let $O= (B, \pe{B}, E, \gamma_0, \gamma_1, v_0)$ be an occurrence
  p-net. Then the following statements are equivalent
  \begin{enumerate}
  \item  $C \in \conf{O}$ is a finite configuration;
    
  \item there exists a securing sequence of events $e_1, \ldots, e_n$
    such that $C = \{ e_1, \ldots, e_n \}$;
    
  \item there exists a firing sequence
    $v_0 \fire{e_1} v_1 \fire{e_2} \ldots \fire{e_n} v_n$ such that
    $C = \{ e_1, \ldots, e_n \}$ and 
    $v_n = (v_0 \cup \bigcup_{i=1}^n \post{e_i}) \ominus (\bigcup_{i=1}^n \pre{e_i})$.
  \end{enumerate}
\end{lem}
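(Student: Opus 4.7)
The plan is to prove the equivalence by a cycle of implications $(1)\Rightarrow(2)\Rightarrow(3)\Rightarrow(1)$. All three conditions essentially say ``$C$ is the set of events along some legal execution,'' but phrased at three different levels of abstraction (as an unordered set, as a securing sequence, and as an actual firing sequence). The main work is to translate between these presentations, which is routine for the first two implications, while $(3)\Rightarrow(1)$ requires a nontrivial argument on the conflict relation.

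For $(1)\Rightarrow(2)$, I would linearise $C$ by topologically sorting its events according to the causality relation $\leq_O$ restricted to $C$ (finite by hypothesis). Call the result $e_1,\ldots,e_n$. The consistency clause of the configuration definition gives pairwise non-conflict. To verify the enabling condition $\{e_1,\ldots,e_{i-1}\}\vdash e_i$, I would use that $e_i$ admits a securing sequence $f_1,\ldots,f_k=e_i$ entirely contained in $C$. Each $f_j$ with $j<k$ satisfies $f_j \leq_O e_i$ because, by induction on $j$, the minimal enabling of $f_j$ consists of earlier $f_{j'}$s, so via the conditions in $\pre{f_j}$ we obtain causal dependence on $e_i$. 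Hence all $f_j$ with $j<k$ precede $e_i$ in the topological order, so $\{f_1,\ldots,f_{k-1}\}\subseteq\{e_1,\ldots,e_{i-1}\}$, and monotonicity of $\vdash_O$ in its first argument (which is immediate from the definition) gives the required enabling.

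For $(2)\Rightarrow(3)$ I would proceed by induction on $i\in\{0,\ldots,n\}$, showing that after firing $e_1,\ldots,e_i$ the marking equals $v_i=(v_0\cup\bigcup_{j\le i}\post{e_j})\ominus\bigcup_{j\le i}\pre{e_j}$ and that $e_{i+1}$ is enabled at $v_i$. Enablement reduces to checking $\pre{e_{i+1}}\subseteq v_i$: by $\{e_1,\ldots,e_i\}\vdash e_{i+1}$, each $b\in\pre{e_{i+1}}$ is either in $v_0$ or produced by some earlier $e_j$. If $b$ is persistent it is never removed, so it is still present. If $b$ is non-persistent, it could only have been consumed by some $e_j$ with $j\le i$, $b\in\pre{e_j}$; but then $\pre{e_j}\cap\pre{e_{i+1}}\not\subseteq\pe{B}$ would give $e_j\#_O e_{i+1}$ by clause (a), contradicting pairwise non-conflict within the securing sequence. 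Computing the effect of firing $e_{i+1}$ then gives the formula for $v_{i+1}$, using the algebraic identities on $\mon{(B,\pe{B})}$ stated in Section~\ref{sec:pnets}.

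The delicate part is $(3)\Rightarrow(1)$. The securing-sequence property is immediate, since the firing sequence itself linearises $C$ and enablement at each marking $v_{i-1}$ yields $\{e_1,\ldots,e_{i-1}\}\vdash e_i$ (every place in $\pre{e_i}$ was either in $v_0$ or produced earlier, the alternative of ``still in the marking without being produced'' being excluded for non-persistent places by $v_0\subseteq \npe{B}$, and for persistent places there is nothing to check). The obstacle is consistency: one must show that if $e,e'\in C$ then $\neg(e\#_O e')$. I would prove by induction on the derivation the stronger statement that, for any set $X\subseteq B\cup E$, if $X\cap E\subseteq C$ and $X\cap B$ is covered by some reachable marking $v_i$ along the given firing sequence, then $\neg\#_O X$. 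The induction works through the three rules defining $\#_O$: rule (a) is excluded because two events in $C$ sharing a non-persistent precondition could not both fire by the argument already used above; rule (c) pushes the hypothesis from an event $t\in X$ to the condition $s\in\pre{t}$ that must be covered at firing time of $t$; and rule (b) is handled by observing that the condition $s$ in the conclusion, if covered by some reachable marking, must have been produced by some $t\in\pre{s}$ that belongs to $C$, so the inductive hypothesis applied to $X\cup\{t\}$ yields a contradiction. Specialising to $X=\{e,e'\}$ gives the desired pairwise consistency and completes the cycle.
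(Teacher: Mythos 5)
Your step $(1)\Rightarrow(2)$ has a genuine gap. You linearise $C$ by topologically sorting it with respect to causality $\leq_O$ and justify the enablings by the claim that every event $f_j$ in a securing sequence for $e_i$ satisfies $f_j \leq_O e_i$. That claim is false exactly in the situation this paper is about: causality relates $t \leq_O s$ only when $\pre{s}=\{t\}$, so a persistent condition with two or more producers creates \emph{no} causal link between its producers and its consumers. Concretely, in the occurrence p-net of Figure~\ref{fig:exoccpnet} we have $\pre{o}=\{a,b\}$, hence neither $a\leq_O d$ nor $b\leq_O d$ (the paper notes that causality is empty there); the set $C=\{b,d\}$ is a finite configuration, but a topological sort of $C$ by $\leq_O$ may list $d$ before $b$, and then the required enabling fails, since $\emptyset\not\vdash d$ ($o$ is neither initial nor produced by an earlier event of the sequence). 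So an arbitrary $\leq_O$-linearisation of a configuration need not be a securing sequence, and the central claim of your first implication collapses. The paper's proof does not use $\leq_O$ at all here: it fixes for each $e\in C$ a minimal-length securing sequence inside $C$, removes an event $\hat e$ whose chosen sequence has maximal length (so that no other chosen sequence can pass through $\hat e$), checks that $C\setminus\{\hat e\}$ is still a configuration, and proceeds by induction on $|C|$. Some argument of this peeling-off kind is needed; ordering by causality alone cannot work in the presence of disjunctive causes.

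Two further points concern $(3)\Rightarrow(1)$, where your overall strategy (induction on the conflict derivation, using enabledness to trace producers) is close in spirit to the paper's, but two details are off. First, the rule-(a) case and the distinctness of the $e_i$ (which the definition of securing sequence requires and which you never address) both need the fact that a non-persistent condition cannot occur in the pre-sets of two distinct fired events of the sequence; this is \emph{not} ``the argument already used above'' (which went from non-conflict to token availability) but a separate induction on the length of the firing sequence, which the paper isolates as its claim~(\ref{eq:dagger}). Second, your strengthened invariant requires $X\cap B$ to be covered by a \emph{single} reachable marking $v_i$, and this is not preserved when rule~(c) is inverted twice starting from $\{e,e'\}$: the preconditions $s\in\pre{e}$ and $s'\in\pre{e'}$ are covered at the (possibly different) instants at which $e$ and $e'$ fire, and if $s$ is non-persistent and consumed before $s'$ is produced, no single marking covers both, so the inductive hypothesis cannot be instantiated. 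This one is repairable: weaken the hypothesis to each condition of $X\cap B$ being covered at some point of the sequence, and the cases for rules (b) and (c) go through; but as written the induction stalls. The first gap, in $(1)\Rightarrow(2)$, is the one that requires a genuinely different idea.
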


\begin{proof}
  ($1 \leftrightarrow 2$) Let $C \in \conf{O}$ be a finite
  configuration. We prove the result by induction on $|C|$. The base
  case $|C|=0$ is trivial. 
  If $|C|=n+1$ for some $n\in\nat$,  
  then for
  each $e \in C$ fix a single securing sequence of minimal length 
  $s_e = e_1, \ldots, e_k, e$ in $C$,
  which exists by definition of configuration, 
  and call $S=\{s_e \mid e\in C\}$
  the set of such securing sequences. Let $\hat{s}_{\hat{e}} \in S$
  be one of such 
  sequences of maximal length in $S$, with $\hat{e}$ as last event.
  Clearly $C' = C \setminus \{\hat{e}\}$ is again a configuration
  because its events are not in conflict (as they are in $C$) and for
  each $e \in C'$ the securing sequence $s_e \in S$ consists of events
  in $C'$ only, otherwise the minimality of $\hat{s}_{\hat{e}}$ as a
  securing sequence for $\hat{e}$ in $C$ would be violated.
  Then, by inductive hypothesis, there exists a securing sequence
  $e_1 \ldots e_n$ such that $C' =\{e_1, \ldots, e_n \}$. It is
  immediate to see that $e_1, \ldots, e_n, \hat{e}$ is the desired securing
  sequence for $C$.

  The converse implication is immediate, just observing that securing
  sequences are closed by prefix and conflict free.

  \medskip
  
  ($2 \leftrightarrow 3$) Assume that there is a securing sequence
  $e_1, \ldots, e_n$ such that $C = \{ e_1, \ldots, e_n \}$. We
  proceed by induction on $n$. The base case $n=0$ is trivial. For
  $n>0$, we know, by inductive hypothesis that there exists a firing
  sequence
  $v_0 \fire{e_1} v_1 \fire{e_2} \ldots \fire{e_{n-1}} v_{n-1}$ and
  $v_{n-1} = (v_0 \cup \bigcup_{i=1}^{n-1} \post{e_i}) \ominus
  (\bigcup_{i=1}^{n-1} \pre{e_i})$. By definition of securing
  sequence, $\{ e_1, \ldots, e_{n-1} \} \vdash e_n$, we know that for
  all $b \in \pre{e_n}$ either $b\in v_0$ or there exists $i < n$ such
  that $b \in \post{e_i}$. Moreover, for all $i< n$, if
  $b \in \pre{e_n} \cap \npe{B}$, certainly $b \not\in \pre{e_i}$, otherwise we would
  have $e_i \# e_n$. This allows us to deduce that
  $\pre{e_n} \subseteq v_n$. Hence the firing sequence can be extended
  by $v_{n-1} \fire{e_n} v_n$, where
  $v_n = (v_{n-1} \cup \post{e_n}) \ominus \pre{e_n} = (v_0 \cup
  \bigcup_{i=1}^n \post{e_i}) \ominus (\bigcup_{i=1}^n \pre{e_i})$.

  For the converse implication, let us start showing by induction on
  $n$ that if $C = \{e_1, \ldots, e_n\}$ is the set of events of a
  given firing sequence, then
\begin{equation}
\mbox{there cannot 
be $i, j \in \interval{n}$ such that $i \not = j$ and $\pre{e_i} \cap \pre{e_j} \cap \npe{B} \not = \emptyset$.}
\label{eq:dagger}
\end{equation}
For  $n\leq 1$ the statement is trivial. Let $n>1$ and assume, by contradiction, that
  there is a non-persistent place  $b \in \pre{e_n} \cap \pre{e_i} \cap \npe{B}$, for $i < n$.
  Clearly $b \not \in v_0$, otherwise 
  $\pre{b} = \emptyset$ and after the firing of $e_i$  there would be no way of 
  generating the token in $b$. Therefore
  $\pre{b} \neq \emptyset$. Let $e \in \pre{b}$ be the only event
  generating a token in $b$. Then necessarily there are $j < i$ and
  $k < n$ such that $e_j = e_k = e$ and thus, since $O$ is
  t-restricted,
  $\pre{e_j} \cap \pre{e_k} \cap \npe{B} \not = \emptyset$, which is absurd because we assumed that the statement holds for firing sequences
  shorter than $n$.
  
  Now,  by Definition~\ref{de:net-securing-disjunct} we have to show that (1) all events of $C$ are 
distinct,  (2) $\neg \# C$, and  (3) for all $i \in \interval{n}$,
$\{e_1 \ldots e_{i-1}\} \vdash e_{i}$. Point (1) follows from~(\ref{eq:dagger}) and t-restrictedness. 
For (3), suppose by absurd that  $\{e_1 \ldots e_{i-1}\} \not\vdash e_{i}$ for some $i \in \interval{n}$.
Thus there is a place $b \in \pre{e_i}$ such that $b \not \in v_0$ and $b \not \in \post{e_j}$ for all $j \in
\interval{i-1}$, but this is impossible because by assumption $e_i$ is enabled in $v_{i-1}$. 

For (2) we proceed by induction on $n$, the base case $n=0$ being obvious. Suppose
by absurd that $\#C$ holds, and let $k \in \interval{n}$ be the minimal index such that 
$\#\{e_1, \ldots, e_k\}$. By this assumption we know that $\neg \#\{e_1,\ldots, e_{k-1}\}$,  
thus by induction hypothesis that $\{e_1, \ldots, e_{k-1}\}$ is a securing sequence, and by the 
implication ($2 \to 3$) already proved that 
    $v_0 \fire{e_1} v_1 \fire{e_2} \ldots \fire{e_{k-1}} v_{k-1}$ is a firing sequence and     
    $v_{k-1} = (v_0 \cup \bigcup_{i=1}^{k-1} \post{e_i}) \ominus (\bigcup_{i=1}^{k-1} \pre{e_i})$.

Now, since $\#\{e_1, \ldots, e_k\}$, by the clauses defining conflict in Definition~\ref{def:dependence} 
either (clause (a)) there is an event $e_h$ with $h \in \interval{k-1}$ such that $\pre{e_h} \cap \pre{e_k} 
\not \subseteq \pe{B}$, but this is impossible by~(\ref{eq:dagger}) above, or (clause (c)) w.l.o.g.~there 
is a place $b \in \pre{e_k}$ such that $\#\{e_1, \ldots, e_{k-1}, b\}$. Since $e_k$ is enabled in $v_{k-1}$,
certainly $b \in  v_{k-1} =  (v_0 \cup \bigcup_{i=1}^{k-1} \post{e_i}) \ominus (\bigcup_{i=1}^{k-1} \pre{e_i})$, and as it cannot belong to the initial marking, $b \in \post{e_j}$ for some $j \in \interval{k-1}$. But by (clause (b)) from $\#\{e_1, \ldots, e_{k-1}, b\}$ we can infer $\#\{e_1, \ldots, e_{k-1}, e_j\}$ and thus $\#\{e_1, \ldots, e_{k-1}\}$  because  $j \in \interval{k-1}$, which is absurd by the choice of $k$.
\end{proof}

By the above result, it is meaningful to define the marking reached after a configuration  for occurrence p-nets.

\begin{defi}[marking after a configuration]
  Let $O$ be an occurrence p-net. Given $C \in \conf{O}$, the marking after $C$ is $\Mark{C} = (v_0 \cup \bigcup_{e \in C} \post{e})\ominus (\bigcup_{e \in C} \pre{e})$.
\end{defi}

We next observe that all occurrence p-nets are safe.

\begin{prop}[occurrence p-nets are safe]
\label{pr:distinct}
  Let $O$ be an occurrence p-net. Then, for each firing sequence
  $v_0 \fire{e_1} v_1 \fire{e_2} \ldots \fire{e_n} v_n$ all markings
  $v_i$ are sets.
\end{prop}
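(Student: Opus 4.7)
The plan is to proceed by induction on the length $n$ of the firing sequence $v_0 \fire{e_1} v_1 \fire{e_2} \ldots \fire{e_n} v_n$. The base case is immediate, since by condition~(1) of Definition~\ref{def:occpnet}, $v_0 = \{b \in B \mid \pre{b} = \emptyset\}$ is a set. For the inductive step, assume $v_n$ is a set and examine $v_{n+1} = (v_n \ominus \pre{e_{n+1}}) \oplus \post{e_{n+1}}$; I would then bound the multiplicity of an arbitrary condition $b$ in $v_{n+1}$, distinguishing whether $b$ is persistent.

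For persistent conditions $b \in \pe{B}$, the bound $v_{n+1}(b) \leq 1$ comes for free from the idempotency built into $\mon{(B, \pe{B})}$, so there is nothing to do. For non-persistent $b \in \npe{B}$, the argument rests on two facts: by condition~(3) of Definition~\ref{def:occpnet}, two distinct events cannot share a non-persistent post-condition, so $|\pre{b}| \leq 1$; and by Lemma~\ref{le:conf-exec}, every event occurs at most once along a firing sequence. Consequently, either $\pre{b} = \emptyset$---in which case $b$ is never regenerated and the initial multiplicity $v_0(b) \leq 1$ can only be decreased by firings---or $\pre{b} = \{e\}$ for a unique producer that fires at most once, so $b$ is added to the marking at most once and is otherwise only consumed.

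The delicate point in the bookkeeping $v_{n+1}(b) = v_n(b) - \pre{e_{n+1}}(b) + \post{e_{n+1}}(b)$ is the case where $b \in \post{e_{n+1}}$ and $v_n(b) = 1$. Safety would fail here unless $b \in \pre{e_{n+1}}$ (so the token is consumed before being re-produced) or $b$ were in fact absent from $v_n$. Using the uniqueness of the producer of $b$ together with the distinctness of events in the firing sequence, the second alternative is forced: $b \in v_n$ would require some earlier $e_j$ with $e_j \in \pre{b} = \{e_{n+1}\}$, contradicting distinctness. This is the only step that requires genuine argument; the rest is a routine case analysis. I expect no real obstacle beyond carefully invoking condition~(3) of the occurrence p-net definition together with the distinctness part already established inside the proof of Lemma~\ref{le:conf-exec}.
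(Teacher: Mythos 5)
Your proof is correct and takes essentially the same route as the paper's: both rest on the distinctness of the events in a firing sequence obtained from Lemma~\ref{le:conf-exec}, the fact that condition~(3) of Definition~\ref{def:occpnet} gives $|\pre{b}| \leq 1$ for non-persistent conditions, and idempotency for persistent ones. You merely make explicit, via induction on the length of the sequence, the multiplicity bookkeeping that the paper's two-line argument leaves implicit.
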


\begin{proof}
By Lemma~\ref{le:conf-exec} we know that $\{e_1, \ldots, e_i\}$ is a securing sequence, 
thus all the events in it are pairwise distinct.
  The fact that all markings are safe immediately follows 
  recalling that for non-persistent places
  $|\pre{b}| \leq 1$, and for persistent places idempotency ensures
  that at most one token is in the place.
\end{proof}

We next introduce a notion of concurrency for
occurrence p-nets that, as anticipated, 
is based on non-binary conflict .

\begin{defi}[concurrency]
  \label{de:concurrency}
  Let %
  $O$ be an occurrence p-net. A subset of conditions $X \subseteq B$
  is called \emph{concurrent}, written $\conc{X}$, if $\neg \# X$ and
  for all $b, b' \in X$ if $b<b'$ then $b \in \pe{B}$.
\end{defi}

Concurrency, as in the case of ordinary nets, is intended to provide a
structural characterisation of coverability. This is formalised below in Lemma~\ref{le:cover-conc-general}.
Quite intuitively, a concurrent set of conditions cannot include
conflicts.
Note, instead, that causal dependencies from persistent places
are admitted, consistently with the fact that using a token in a
persistent place does not consume such token.

\begin{lem}[coverability vs concurrency]
  \label{le:cover-conc-general}
  Let $O$ 
  be an occurrence p-net
  and $X \subseteq B$. Then $X$ is
  concurrent iff there is a reachable marking that covers $X$.
\end{lem}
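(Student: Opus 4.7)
The claim splits into two implications. For the implication ``if there is a reachable marking covering $X$ then $X$ is concurrent'', I would first observe that concurrency is inherited under subsets: if $Y$ is concurrent and $X\subseteq Y$, then $\neg \# X$ follows by superset closure of $\#$, and the causality condition is trivially downward closed. Hence it suffices to prove that every reachable marking $u$, viewed as the set $\set{u}$, is itself concurrent. I would do this by induction on the length of a firing sequence reaching $u$. The base case $u=v_0$ is handled by noting that every $b\in v_0$ has $\pre{b}=\emptyset$, so the causality condition holds vacuously and no conflict involving only initial conditions can be derived by clauses (a)--(c). For the inductive step $u\fire{e}u'$, one checks that replacing $\pre{e}\cap \npe{B}$ by $\post{e}$ preserves both the causality condition (consumed non-persistent conditions disappear, and each non-persistent condition in $\post{e}$ has $e$ as unique producer, lying strictly above the conditions in $\pre{e}$) and the absence of derivable conflict.

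For the other implication, ``$X$ concurrent implies the existence of a covering reachable marking'', I would proceed by well-founded induction on the multiset $\{\depth{b} : b\in X\}$ under the multiset extension of the usual order on $\nat$. If every element of $X$ has depth $0$, then $X\subseteq v_0$ and the initial marking itself covers $X$. Otherwise pick $b\in X$ with $\depth{b}$ maximal and positive, so $\pre{b}\neq \emptyset$. Using $\neg \# X$ and the contrapositive of clause (b) in Definition~\ref{def:mainrels}, some $e\in \pre{b}$ satisfies $\neg \#((X\setminus \{b\})\cup \{e\})$; by selecting such an $e$ of minimal depth in $\pre{b}$ one has $\depth{e}=\depth{b}$, hence $\depth{b'}<\depth{b}$ for every $b'\in \pre{e}$, and the candidate $Y=(X\setminus \{b\})\cup \pre{e}$ has strictly smaller depth-multiset than $X$. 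Assuming $Y$ is concurrent, the induction hypothesis yields a reachable marking $u\supseteq Y$; since $\pre{e}\subseteq Y\subseteq u$, event $e$ is enabled at $u$, and firing $e$ produces a marking $u'\supseteq X$, because $b\in \post{e}$ and no non-persistent condition of $X\setminus \{b\}$ lies in $\pre{e}$ (otherwise it would conflict with $b$, contradicting the concurrency of $X$).

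The main technical obstacle in both directions is tracking the non-binary conflict relation $\#$ through the inductive steps. In the second direction, verifying that $Y=(X\setminus \{b\})\cup \pre{e}$ is still concurrent requires ruling out that replacing $b$ by its whole pre-set $\pre{e}$ creates a derivable conflict; since clause (c) adds events without removing conditions, no simple contrapositive argument suffices, and one has to analyse minimal derivations of a hypothetical $\# Y$ and refute each possibility, exploiting the connectedness of $\pre{b}$ (Definition~\ref{def:occpnet}(4)) that was precisely imposed to exclude the pathological splittings of persistent conditions exemplified in Figure~\ref{fig:no-occurrence}. A parallel difficulty surfaces in the first direction, where one must ensure that firing $e$ introduces no fresh conflict in $\set{u'}$; the same structural properties of occurrence p-nets underpin that argument.
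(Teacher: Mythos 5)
Your overall route coincides with the paper's. For the direction from coverability to concurrency you prove, by induction on the length of a firing sequence, that every reachable marking is concurrent and then use closure of concurrency under subsets; this is exactly the paper's argument and is fine. For the converse you also follow the paper's scheme: pick $b \in X$ of maximal depth, trade it for the pre-set of a suitable producer $e \in \pre{b}$ (whose existence comes from the contrapositive of conflict clause (b)), and recurse along a depth-based measure (you use the multiset of depths, the paper the lexicographic pair $\langle \mxdepth{X}, \cardmaxdepth{X}\rangle$; the paper also removes all of $X \cap \post{e}$, not just $b$, which incidentally avoids having $\pre{e}$ and non-persistent members of $X \cap \post{e}$ coexist in the new set and spoil the causality half of concurrency).

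However, two specific steps of your forward direction fail as stated, and the central one is only announced. First, the claim that a producer $e$ consistent with $X \setminus \{b\}$ can be chosen with $\depth{e} = \depth{b}$ is unjustified: $\depth{b}$ is the minimum over \emph{all} of $\pre{b}$, and the minimizing producer may be precisely the one in conflict with the rest of $X$. Take a persistent $b$ with $\pre{b} = \{e_1, e_2\}$, $e_1 \conn{} e_2$, where $\depth{e_1}=1$ while $e_2$ sits at the end of a long chain, and $X = \{b, c\}$ with $c$ produced by an event in direct conflict with $e_1$; then only $e_2$ qualifies, $\depth{e_2} > \depth{b}$, and your depth-multiset strictly increases, so the induction does not go through with the measure as justified. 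Second, the parenthetical claim that no non-persistent condition of $X \setminus \{b\}$ can lie in $\pre{e}$ ``otherwise it would conflict with $b$'' is wrong: no clause of Definition~\ref{def:mainrels} derives a conflict between a condition of $\pre{e}$ and $b$, and there is no causality either once $b$ has a second producer. With $b$ as above and $X = \{b, i\}$, $i \in v_0 \cap \pre{e_1}$, the producer $e_1$ is consistent with $i$ and of minimal depth, yet firing it consumes $i$, so the marking obtained does not cover $X$; one must use $e_2$, which again defeats the measure. Finally, the genuinely delicate point --- that the replacement set is still conflict-free --- you explicitly leave as an obstacle (``analyse minimal derivations of $\# Y$'') rather than prove; in the paper this is where clause (b) is invoked: if every $e \in \pre{b}$ produced a conflict with the remainder of $X$, then $\# X$ would follow. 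So the forward direction needs a choice of $e$ that is simultaneously consistent, non-consuming on $X$, and measure-decreasing, together with an actual proof of concurrency of the new set; as written, these requirements are either asserted with invalid justifications or left open.
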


\begin{proof}
  $(\Rightarrow)$ Let $\conc{X}$.
  In order to prove that $X$ is
  coverable we proceed by induction on the pairs $h_X = \langle \mxdepth{X}, \cardmaxdepth{X}\rangle$
  where $\mxdepth{X} = max \{\depth{b}\mid {b \in X}\}$, and $\cardmaxdepth{X} =  
  |\{b \in X \mid \depth{b} = \mxdepth{X}\}|$, ordered by $\langle n,m\rangle < \langle n',m'\rangle$ if 
  $n < n'$ or $n = n' \wedge m < m'$. 
  
 If $h_X=\langle 0, \_\rangle$ then $X \subseteq v_0$,
  hence we conclude immediately. If $\mxdepth{X} > 0$, take a condition
  $b \in X$ such that $\depth{b}$ is maximal. Note that there must be
  $e \in \pre{b}$ such that $Y = (X \setminus \post{e}) \cup \pre{e}$
  is still concurrent, otherwise either $b$ would not be of maximal
  depth or, by the rules defining conflict, $\# X$.
  Clearly $h_Y < h_X$ hence, by
  inductive hypothesis, $Y$ is coverable and we can conclude that
  $X$ is coverable.

  \medskip
  
  $(\Leftarrow)$ We show that any reachable marking is
  concurrent. Since any subset of a concurrent set is concurrent this
  allows us immediately to conclude. Consider a firing sequence
  $v_0 \fire{e_1} v_1 \fire{e_2} \ldots \fire{e_n} v_n$. We show by
  induction on $n$ that $\conc{v_n}$. For the base case we just need
  to observe that $\conc{v_0}$, i.e., the initial marking is clearly
  concurrent. When $n >0$, by inductive hypothesis we know that
  $\conc{v_{n-1}}$. This allows us to deduce that also
  $v_n = (v_{n-1} \ominus \pre{e_n}) \oplus \post{e_n}$ is concurrent.
\end{proof}

For instance, consider the occurrence p-net in
Figure~\hyperref[fig:combined]{\ref*{fig:combined}a}. We already observed that
 $\# \{p,q,r\}$ and indeed such set is not coverable. Instead, each
pair of conditions in $\{p,q,r\}$ is concurrent and thus coverable. Interestingly enough, this shows that differently from what happens for ordinary occurrence nets, pairwise coverability does not imply coverability.

We can now show that morphisms of occurrence p-nets preserve
concurrency.

\begin{lem}[morphisms preserve concurrency]
  \label{le:morph-conc}
  Let $f : O \to  O'$ be
  a morphism of occurrence p-nets and $X \subseteq B$. If $\conc{X}$ then $\conc{\pl{f}(X)}$.
\end{lem}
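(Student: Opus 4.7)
The natural strategy is to route the argument through Lemma~\ref{le:cover-conc-general}, which characterises concurrency of a set of conditions as coverability by a reachable marking, and then use the simulation property (Lemma~\ref{lemma:simulation}) to push coverability across the morphism. In short, the plan is to lift a firing sequence witnessing coverability of $X$ in $O$ to a firing sequence witnessing coverability of $\pl{f}(X)$ in $O'$, and then read off concurrency on the target side from the same lemma.

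Concretely, from $\conc{X}$ Lemma~\ref{le:cover-conc-general} yields a firing sequence $v_0 \fire{e_1} v_1 \fire{e_2} \ldots \fire{e_n} v_n$ in $O$ with $X \subseteq v_n$. Since $\pl{f}(v_0) = v'_0$ by Definition~\ref{de:pnet-morphism}(\ref{de:pnet-morphism:1}), iterating Lemma~\ref{lemma:simulation} along the sequence produces a firing sequence $v'_0 \fire{\mon{\tr{f}}(e_1)} \pl{f}(v_1) \fire{\mon{\tr{f}}(e_2)} \ldots \fire{\mon{\tr{f}}(e_n)} \pl{f}(v_n)$ in $O'$, so $\pl{f}(v_n)$ is reachable. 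Since $\pl{f}$ is a monoid homomorphism, $X \subseteq v_n$ gives $\pl{f}(X) \subseteq \pl{f}(v_n)$; and since $O'$ is an occurrence p-net, Proposition~\ref{pr:distinct} tells us that the reachable marking $\pl{f}(v_n)$ is a set, whence $\pl{f}(X)$ is a set too. A second application of Lemma~\ref{le:cover-conc-general}, in the opposite direction, then yields $\conc{\pl{f}(X)}$.

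The only routine technicality concerns events $e_i$ with $\tr{f}(e_i)$ undefined: in that case Definition~\ref{de:pnet-morphism}(\ref{de:pnet-morphism:2}) forces $\pl{f}(\pre{e_i}) = \pl{f}(\post{e_i}) = \emptyset$, so the corresponding step of the image sequence collapses to the trivial firing of the empty multiset at $\pl{f}(v_{i-1}) = \pl{f}(v_i)$, fitting into the simulation chain without complication. There is no genuine obstacle here, and the argument is essentially a packaging of Lemma~\ref{lemma:simulation}, Lemma~\ref{le:cover-conc-general} and Proposition~\ref{pr:distinct}; the only place one has to think is in checking that $\pl{f}(X)$ really is a \emph{set} before invoking Lemma~\ref{le:cover-conc-general} on the target side, which is precisely what safety of occurrence p-nets provides.
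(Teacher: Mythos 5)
Your proof is correct and follows exactly the paper's route: the paper's own argument is just the three-line observation that morphisms are simulations (Lemma~\ref{lemma:simulation}), hence preserve coverability, and that coverability coincides with concurrency by Lemma~\ref{le:cover-conc-general}. Your version merely spells out the iteration of the simulation along the firing sequence and the (sensible) check that $\pl{f}(X)$ is a set via safety, which the paper leaves implicit.
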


\begin{proof}
  Since morphisms are simulations (see Lemma~\ref{lemma:simulation})
  they preserve coverability. By
  Lemma~\ref{le:cover-conc-general}, coverability is the same as
  concurrency. Hence $\conc{X}$ implies $\conc{\pl{f}(X)}$.
\end{proof}

\subsection{Occurrence p-nets with equivalence}

We now introduce the notion of occurrence p-net with equivalence,
which will be the target of the pre-unfolding construction. The
intuition is that, in an occurrence p-net with equivalence,
occurrences of items that depend on different disjunctive causes are
kept separate, but related by the equivalence. This is technically
useful in the development of the unfolding construction.

Let $A$ be a set and  ${\sim} \subseteq A \times A$ an equivalence relation on $A$.
Given $x\in A$ we write $\eqclass[\sim]{x} = \{ y\in A \mid x \sim y\}$ for the equivalence class of $x$.
Moreover, given $X\subseteq A$ we write $\eqclass[\sim]{X}$ for the set $\{ \eqclass[\sim]{x} \mid x \in X \}$.

Given two subsets $X,Y \subseteq A$ we write $X \sim Y$ instead of $\eqclass[\sim]{X} = \eqclass[\sim]{Y}$.
Should the equivalence be used (as it will  always be the case) for sets 
$X$ and $Y$
each consisting of pairwise non-equivalent elements,
from $X \sim Y$ it follows that there is a bijection from $X$ to $Y$
mapping each element $x \in X$ to the only $y \in Y$ such that
$x \sim y$.
If moreover $A$ is partially ordered by $\leq$, we write  $X \leq Y$ if for all $x \in X$, $y \in Y$ we have
$x \leq y$. 
The notation will be used only for $|X| \leq 1$. Observe
that $\emptyset \leq Y$ trivially holds, while $\{ x \} \leq Y$
reduces to $x \leq y$ for all $y \in Y$.

\begin{defi}[occurrence p-net with equivalence]
  \label{de:occ-eq}
  An occurrence p-net with equivalence is a pair
  $\langle O, \sim \rangle$ where
  $O=(B,\pe{B},E,\gamma_0,\gamma_1,v_0)$ is an occurrence p-net
  without backward conflicts and
  ${\sim} \subseteq (B \times B) \cup (E \times E)$ is an equivalence
  such that
  \begin{enumerate}
  \item for all $b, b' \in B$ with $b \sim b'$
    \begin{enumerate}      
    \item
      \label{de:occ-eq:1a}
      either $b, b' \in \pe{B}$ or $b, b' \in \npe{B}$;
      
    \item
      \label{de:occ-eq:1b}
      if $b \neq b'$ then $\neg (\pre{b} \leq \pre{b'})$
      and 
      $\{ b, b' \} \not\subseteq \pre{e}$
      for all $e \in E$;      

    \item
      \label{de:occ-eq:1c}
      if $b, b' \in \npe{B}$ then $\pre{b} \sim \pre{b'}$;

    \item
      \label{de:occ-eq:1d}
      if $b, b' \in \pe{B}$ then either $\pre{b} \sim \pre{b'}$ or $b\, (\sim \setminus~ \#)^* \, b'$;
    \end{enumerate}
    
  \item
    \label{de:occ-eq:2}
    for all $e, e' \in E$, if $e \sim e'$ and $e \neq e'$ then
    $\pre{e} \sim \pre{e'}$, $\pre{e} \neq \pre{e'}$, 
    and 
    $\post{e} \sim \post{e'}$;
    
  \item
    \label{de:occ-eq:3}
    for all $X, X' \subseteq B$, if $X \sim X'$ then $\{ e \mid \pre{e} = X\} \sim \{ e' \mid \pre{e'} = X' \}$.  
  \end{enumerate}
\end{defi}

In words, an occurrence p-net with equivalence is an occurrence p-net
where the absence of backward conflicts implies that each item has a uniquely determined causal history. 
The possibility of joining the different histories of tokens in
persistent places is captured by an equivalence that can equate
persistent places. Events are equated when they have equivalent
pre-sets and, in turn, this implies that also their post-sets are
equivalent.

More precisely, by condition~(\ref{de:occ-eq:1a}) equivalence respects
the sort of places: two equivalent places are either both
non-persistent or both persistent.
By the second part of
condition~(\ref{de:occ-eq:1b}) the pre-set of each event consists of
pairwise non-equivalent places. The first part is slightly more
complex. First note that whenever $b \in v_0$, i.e.
$\pre{b} = \emptyset$, the inequality $\pre{b} \leq \pre{b'}$ is
trivially satisfied. Thus, places in the initial marking are not
equivalent to any other place and, in particular, the initial marking
consists of pairwise non-equivalent places. If instead
$\pre{b} = \{e\}$, then $b$ cannot be equated to any place $b'$ caused
by $e$. In particular, this implies that the places in the post-set of
each event are pairwise non-equivalent. More generally, this condition
plays a role in ensuring that the quotiented p-net is irredundant
(Definition~\ref{de:well-formed}).
By condition~(\ref{de:occ-eq:1c}) non-persistent places can be
equivalent only when they are generated by equivalent events. Finally,
condition~(\ref{de:occ-eq:1d}) states that persistent places can be
equivalent even if they are not generated by equivalent events, but in
this case they must be connected by a chain of consistency in the
equivalence class. This ensures that, once quotiented, the pre-set of
the condition will satisfy the connectedness condition (see
Definition~\ref{def:occpnet}).

Condition~(\ref{de:occ-eq:2}) says that events are equivalent only when they have
equivalent pre-sets and in this case they also have equivalent
post-sets. Moreover, equivalent events must differ in their pre-set.

Finally, by condition~(\ref{de:occ-eq:3}), whenever two sets of
conditions $X, X'$ are equivalent then the sets of events having $X$
and $X'$ as pre-sets are equivalent. Note that since $\pre{e} = X$, by
condition (1b), the set $X$ cannot contain equivalent conditions. The
same applies to $X'$ and thus $X \sim X'$ implies that there is a
bijection from $X$ to $X'$ mapping each element $b_1 \in X$ to the
only $b_2 \in X'$ such that $b_1 \sim b_2$.  Moreover, also the set
$\{ e \mid \pre{e} = X \}$ contains pairwise non-equivalent events by
condition (2). The same applies to $\{ e' \mid \pre{e'} = X' \}$,
hence, also in this case, the requirement
$\{ e \mid \pre{e} = X \} \sim \{ e' \mid \pre{e'} = X' \}$ implies a
one-to-one correspondence between equivalent events.

An example of occurrence p-net with equivalence is in Figure~\ref{fig:occpneteq}, where equivalent elements are linked by dotted lines.

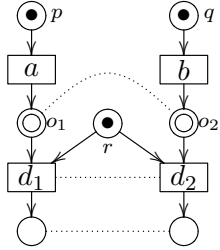
\begin{figure}[t]
$$
 \xymatrix@R=.8pc@C=1.2pc{
 \drawmarkedplace\ar[d]
 \nameplaceright p
 & &
 \drawmarkedplace\ar[d]
 \nameplaceright q
 \\
 \drawtrans a \ar[d] 
 & &
 \drawtrans b \ar[d] 
 \\
 \drawpersistentplace\ar[d] \ar@{.}@(ur,ul)[rr]
 \nameplaceright {o_1}
 &
 \drawmarkedplace\ar[dl]\ar[dr]
 \nameplacedown r
 &
 \drawpersistentplace\ar[d]
 \nameplaceright {o_2}
 \\
 \drawtrans {d_1} \ar[d]\ar@{.}[rr]
 & &
 \drawtrans {d_2} \ar[d] 
 \\
 \drawplace \ar@{.}[rr]
 & &
 \drawplace
 }
$$
\caption{An occurrence p-net with equivalence}\label{fig:occpneteq}
\end{figure}

The fact that occurrence p-nets with equivalence do not have backward conflicts allows us to restrict only to binary conflicts. Formally, the following holds.

\begin{lem}[conflicts in occurrence p-nets with equivalence]
  Let $\langle O, \sim \rangle$ be an occurrence p-net with
  equivalence and $X \subseteq B \cup E$. Then $\# X$ holds if and
  only if there are $x, x' \in X$ such that $x \# x'$.
\end{lem}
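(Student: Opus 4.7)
The plan is to prove the ``if'' direction by superset closure of $\#$ (immediate: if $\{x,x'\} \subseteq X$ and $\# \{x,x'\}$, then $\# X$), and the ``only if'' direction by structural induction on the derivation of $\# X$ via the rules (a), (b), (c) of Definition~\ref{def:dependence}, together with the superset-closure clause.

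The key structural observation that drives the whole argument is that, since $\langle O, \sim \rangle$ is built on an occurrence p-net \emph{without backward conflicts} (Definition~\ref{de:occ-eq}), every condition $s \in B$ satisfies $|\pre{s}| \leq 1$. This collapses rule (b), which a priori is the only genuinely non-binary conflict rule, to an essentially unary one: whenever $\pre{s} = \{e\}$, having $\# (Y \cup \{e\})$ yields $\# (Y \cup \{s\})$ in a single step. Rule (c) is already unary by its very shape, since it replaces one condition $s$ by one event $t$ with $s \in \pre{t}$.

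With this in hand, the induction is almost mechanical. The base case (rule (a)) produces $X = \{t,t'\}$ directly as a binary conflict. The superset-closure step is trivial, as any binary witness in the subset remains a witness in the superset. For the inductive step on rule (b), suppose $\# (Y \cup \{s\})$ is derived from $\# (Y \cup \{e\})$ with $\pre{s} = \{e\}$. By the inductive hypothesis there exist $x, x' \in Y \cup \{e\}$ with $\# \{x,x'\}$; if both lie in $Y$ we are done, and otherwise, without loss of generality $x = e$ and $x' \in Y$, in which case I reapply rule (b) to the binary conflict $\# (\{x'\} \cup \{e\})$ to obtain $\# (\{x'\} \cup \{s\})$, a binary witness inside $Y \cup \{s\}$. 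The step for rule (c) is entirely symmetric: if the binary witness returned by the IH involves $s$, one further application of (c) replaces $s$ by $t$ and yields a binary witness inside $Y \cup \{t\}$.

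I do not foresee a serious obstacle. The only delicate point is remembering to invoke the absence of backward conflicts in the inductive step for (b), so that the premise ``for all $t \in \pre{s}$'' reduces to a single instance on the unique $e \in \pre{s}$. Everything else reduces to the meta-observation that both (b) and (c) promote a binary conflict mentioning $e$ or $s$ to a binary conflict mentioning the corresponding $s$ or $t$, by a single re-application of the same rule that produced the enlarged set.
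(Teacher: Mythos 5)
Your proof is correct, and it is exactly the argument the paper leaves implicit behind its one-line ``straightforward from the definition of conflict'': an induction on the derivation of $\#\,X$, where the absence of backward conflicts (so $|\pre{b}|\leq 1$) is precisely what collapses rule (b) to a unary step and lets each of (b) and (c) propagate a binary witness. You correctly identify that this structural fact is the crux --- without it the lemma fails, as the non-binary conflict example of Figure~\hyperref[fig:combined]{\ref*{fig:combined}} shows --- so there is nothing to add.
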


\begin{proof}
  Straightforward using the definition of conflict (see
  Definition~\ref{def:mainrels}).
\end{proof}

For occurrence p-nets with equivalence we will need a notion of concurrency on sets of places stronger than that in Definition~\ref{de:concurrency}.

\begin{defi}[strong concurrency]
  \label{def:sco}
  Let $\langle O, \sim \rangle$ be an occurrence p-net with
  equivalence. A subset of conditions $X \subseteq B$ is \emph{strongly
    concurrent}, written $\sconc{X}$, if $\conc{X}$ and for all
  $b, b' \in X$, if $b \neq b'$ then $\neg (b \sim b')$.
\end{defi}

The idea is that if a set of conditions is strongly concurrent it can be produced by a computation using only a specific instance for each equivalence class of persistent resources.

\begin{lem}[pre- and post-sets are strongly concurrent]
  \label{le:pre-post-strong}
  Let $\langle O, \sim \rangle$ be an occurrence p-net with
  equivalence. Then for any $e \in E$ its pre-set $\pre{e}$ and
  post-set $\post{e}$ are strongly concurrent.
\end{lem}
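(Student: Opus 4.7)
\medskip\noindent\textbf{Proof plan.}
The plan is to establish concurrency via coverability and then rule out equivalent pairs using the conditions of Definition~\ref{de:occ-eq}. Fix an event $e \in E$.

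First I would handle $\pre{e}$. Since $O$ is an occurrence p-net, $e$ admits a securing sequence, which by Lemma~\ref{le:conf-exec} corresponds to a firing sequence $v_0 \fire{e_1} \ldots \fire{e_n} v_n$ with $e_n = e$. Since $e$ is enabled at $v_{n-1}$, this reachable marking covers $\pre{e}$, so by Lemma~\ref{le:cover-conc-general} we get $\conc{\pre{e}}$. For the non-equivalence clause of strong concurrency, I would invoke directly the second part of condition~(\ref{de:occ-eq:1b}) of Definition~\ref{de:occ-eq}, which forbids $\{b,b'\} \subseteq \pre{e}$ whenever $b \sim b'$ are distinct. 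Hence $\sconc{\pre{e}}$.

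Next I would handle $\post{e}$. Extending the firing sequence by $v_{n-1} \fire{e} v_n$ gives a reachable marking containing $\post{e}$, so once again Lemma~\ref{le:cover-conc-general} yields $\conc{\post{e}}$. For the non-equivalence clause, the key remark is that an occurrence p-net with equivalence has no backward conflicts, so every condition has at most one event in its pre-set; in particular, each $b \in \post{e}$ satisfies $\pre{b} = \{e\}$. Suppose for contradiction that there are distinct $b, b' \in \post{e}$ with $b \sim b'$. Then $\pre{b} = \{e\} = \pre{b'}$, and since $|\pre{b}| = 1$ the notation $\pre{b} \leq \pre{b'}$ is legitimate and reduces to $e \leq e$, which holds by reflexivity. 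This contradicts the first part of condition~(\ref{de:occ-eq:1b}), which demands $\neg(\pre{b} \leq \pre{b'})$ for distinct equivalent conditions. Hence $\sconc{\post{e}}$.

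The main subtlety, and the step I would be most careful about, is the post-set case: one has to observe that the absence of backward conflicts in $\langle O, \sim \rangle$ forces every pre-set of a condition to be a singleton, which makes the partial-order comparison $\pre{b} \leq \pre{b'}$ well-defined under the stated convention $|X| \leq 1$ and uniformly applicable to both persistent and non-persistent post-set elements (the distinction in condition~(\ref{de:occ-eq:1a}) becoming irrelevant here). Once this is noted, the contradiction with condition~(\ref{de:occ-eq:1b}) is immediate.
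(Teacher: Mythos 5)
Your proof is correct and takes essentially the same route as the paper's: concurrency of $\pre{e}$ and $\post{e}$ follows from their coverability (firability of $e$) via Lemma~\ref{le:cover-conc-general}, and the non-equivalence clause of strong concurrency follows from condition~(\ref{de:occ-eq:1b}) of Definition~\ref{de:occ-eq}. The only difference is that you make explicit the post-set case (no backward conflicts forces $\pre{b}=\{e\}$, so the first part of condition~(\ref{de:occ-eq:1b}) applies by reflexivity of $\leq$), a detail the paper's proof leaves implicit but which it already notes in the discussion following Definition~\ref{de:occ-eq}.
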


\begin{proof}
  Let $e \in E$ be an event of $O$. Then $\pre{e}$ and $\post{e}$ are
  concurrent since $O$ is an occurrence p-net, thus $e$ can be fired and
  hence pre- and post-sets of events are coverable, whence concurrent
  by Lemma~\ref{le:cover-conc-general}. They are also strongly concurrent,
  since, by Definition~\ref{de:occ-eq}, condition (1b), they cannot
  contain equivalent conditions.
\end{proof}

Occurrence p-nets with equivalence can be turned into a category by
introducing a suitable notion of morphism.

\begin{defi}[category of occurrence p-nets with equivalence]
  \label{de:occ-eq-morphism}
  A morphism of occurrence p-nets with equivalence
  $f : \langle O, \sim \rangle \to \langle O', \sim' \rangle$ is a
  morphism
  $f : O \to O'$
  of p-nets such that for all
  $b_1, b_2 \in B$,
  if $b_1 \sim b_2$ then  $\pl{f}(b_1) \sim' \pl{f}(b_2)$.
  We denote by $\occeq$ the category of occurrence p-nets with equivalence.  
\end{defi}

In words, a morphism of occurrence p-nets is required to preserve the equivalence.  This is essential to ensure that it induces a 
function on the quotient nets.

\begin{defi}[quotient of an occurrence p-net with equivalence]
  Let $\langle O, \sim \rangle$ be an occurrence p-net with
  equivalence. Its \emph{quotient} is the structure
  $$\occquot{\langle O, \sim \rangle} = (\eqclass[\sim]{B},
  \eqclass[\sim]{\pe{B}}, \eqclass[\sim]{E}, \gamma_0^\sim,
  \gamma_1^\sim, \eqclass[\sim]{v_0})$$ 
  where
  $\gamma_j^\sim(\eqclass[\sim]{e}) = \eqclass[\sim]{\gamma_j(e)}$ for
  $j \in \{0,1\}$.
\end{defi}

The quotient of the net in Figure~\ref{fig:occpneteq} is depicted in
Figure~\ref{fig:quotoccpneteq}.  Note that the quotient has introduced a
backward conflict on the place $o$.

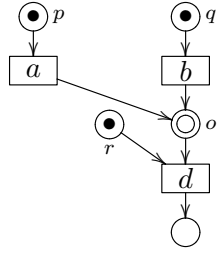
\begin{figure}[t]
$$
 \xymatrix@R=.8pc@C=1.2pc{
 \drawmarkedplace\ar[d]
 \nameplaceright p
 & &
 \drawmarkedplace\ar[d]
 \nameplaceright q
 \\
 \drawtrans a \ar[drr] 
 & &
 \drawtrans b \ar[d] 
 \\
 &
 \drawmarkedplace\ar[dr]
 \nameplacedown r
 &
 \drawpersistentplace\ar[d]
 \nameplaceright {o}
 \\
 & &
 \drawtrans {d} \ar[d] 
 \\
 & &
 \drawplace
 }
$$
\caption{The quotient of the occurrence p-net with equivalence in Figure~\ref{fig:occpneteq}}\label{fig:quotoccpneteq}
\end{figure}

We want to show that the quotient of an occurrence p-net with
equivalence is indeed an occurrence p-net. To this aim we first
observe some facts.

\begin{lem}[properties of the quotient]
  \label{le:q-properties}
  Let $\langle O, \sim \rangle$ be an occurrence p-net with
  equivalence and $\occquot{\langle O, \sim \rangle}$ the
  corresponding quotient.
  Then
  \begin{enumerate}
  \item
    \label{le:q-properties:1}
    for all $b \in \npe{B}$,
    $\pre{\eqclass[\sim]{b}} = \eqclass[\sim]{\pre{b}}$;
    
  \item
    \label{le:q-properties:2}
    for all $b \in \pe{B}$,
    $\pre{\eqclass[\sim]{b}} = \eqclass[\sim]{\bigcup_{b' \sim b} \pre{b'}}$;

  \item
    \label{le:q-properties:3}
    for all $e \in E$,
    $\pre{\eqclass[\sim]{e}} = \eqclass[\sim]{\pre{e}}$ and
    $\post{\eqclass[\sim]{e}} = \eqclass[\sim]{\post{e}}$;
    
  \item
    \label{le:q-properties:4}
    for all $X \subseteq E$ and $e \in E$, if $X \vdash e$ then
    $\eqclass[\sim]{X} \vdash \eqclass[\sim]{e}$;

  \item
    \label{le:q-properties:5}
    for all $e, e' \in E$, if $e \sim e'$ and $e \not = e'$  then $e \# e'$;
    
  \item
    \label{le:q-properties:6}
    for all $e, e' \in E$, if
    $\eqclass[\sim]{e} \# \eqclass[\sim]{e'}$ then $e \# e'$.

  \end{enumerate}
\end{lem}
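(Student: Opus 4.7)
My plan is to handle the six items separately, with (5) requiring the main induction and (6) bootstrapping on (5).

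For items (1)--(4), I unwind the definition of the quotient. Item (3) is immediate from $\gamma_j^\sim(\eqclass[\sim]{e})=\eqclass[\sim]{\gamma_j(e)}$, and item (4) then follows routinely from the enabling clauses: each $s\in\pre{e}$ is either in $v_0$ (its class is then in $\eqclass[\sim]{v_0}$) or produced by some $t'\in X$ (its class is then in $\post{\eqclass[\sim]{t'}}$). For item (2), unwinding $\pre{\eqclass[\sim]{b}}$ as the set of classes $\eqclass[\sim]{e}$ such that some $b'\sim b$ has $e\in\pre{b'}$ gives directly $\eqclass[\sim]{\bigcup_{b'\sim b}\pre{b'}}$. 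Item (1) requires one extra observation: when $b\in\npe{B}$, any $b'\sim b$ is non-persistent by (1a), cannot lie in $v_0$ (since otherwise $\pre{b'}=\emptyset\leq\pre{b}$ would violate (1b)), and has $\pre{b'}$ a singleton by absence of backward conflicts; condition (1c) then equates the classes of $\pre{b}$ and $\pre{b'}$, collapsing the union of item (2) into a single class.

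The hard part is item (5), which I plan to prove by induction on $\depth{e}$. From $e\sim e'$ and $e\neq e'$, condition (2) gives $\pre{e}\sim\pre{e'}$ with $\pre{e}\neq\pre{e'}$, and since the second clause of (1b) forces places in a pre-set to be pairwise non-equivalent, $\sim$ induces a bijection $\phi\colon\pre{e}\to\pre{e'}$ with $b\sim\phi(b)$. By t-restrictedness, pick a non-persistent $b_0\in\pre{e}$; by (1a), $\phi(b_0)$ is also non-persistent. If $\phi(b_0)=b_0$, then rule (a) yields $e\#e'$ directly. Otherwise, (1c) together with the absence of backward conflicts reduces the situation to events $e_0\in\pre{b_0}$, $e_0'\in\pre{\phi(b_0)}$ with $e_0\sim e_0'$, $e_0\neq e_0'$ at strictly smaller depth (the singletons $\pre{b_0}$, $\pre{\phi(b_0)}$ are non-empty, else (1b) would again be vacuously violated); I apply the inductive hypothesis to obtain $e_0\#e_0'$ and transport it back up through two applications each of rules (b) and (c), concluding $e\#e'$. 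The base case is vacuous: at depth $1$, $\pre{e}\subseteq v_0$, and (1b) forces $\phi(b_0)=b_0$ for every $b_0\in v_0$ (since $\pre{b_0}=\emptyset\leq\pre{\phi(b_0)}$ always), contradicting $\pre{e}\neq\pre{e'}$. The main delicacy is checking that (1b) rules out each degenerate case in which the comparisons would collapse vacuously.

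For item (6) I plan a structural induction on the derivation of conflict in the quotient, but with the strengthened conclusion: \emph{for every choice of representatives $x_\sigma\in\sigma$ for $\sigma\in X$, one has $\#_O\{x_\sigma\mid\sigma\in X\}$.} The base case (rule (a)) reuses the argument of item (5): a shared non-persistent class in $\pre{\sigma_1}\cap\pre{\sigma_2}$ produces, via items (1) and (3), non-persistent $b_i\in\pre{e_i}$ with $b_1\sim b_2$ which are either equal (direct conflict) or invoke item (5) on their unique generating events and then propagate through rules (b) and (c). For the rule (b) step, items (1) and (2) imply that the unique event in the pre-set of any chosen representative $y$ corresponds to a class lying in the premise $\pre{\eqclass[\sim]{b}}$ of the quotient rule, so the inductive hypothesis applies to that class and rule (b) in $O$ closes the step; for rule (c), item (3) supplies a representative $s\in\pre{x}$ of the premise class, and rule (c) in $O$ then propagates the conflict. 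Superset closure is trivial. Specialising the strengthened statement to $X=\{\eqclass[\sim]{e},\eqclass[\sim]{e'}\}$ with representatives $e,e'$ yields item (6).
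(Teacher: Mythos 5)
Your proposal matches the paper's proof in both structure and substance: items (1)--(4) by unwinding the definition of the quotient, item (5) by induction on depth through a pair of equivalent non-persistent preconditions (using condition (1b) to exclude the degenerate cases), and item (6) by induction on the derivation of conflict, bootstrapping on (5). Your rendering of (6) is in fact a slightly more careful version of the paper's argument -- you strengthen the induction hypothesis to arbitrary sets of classes with arbitrary representatives and spell out the base-case subcase in which the shared non-persistent class has distinct representatives in the two pre-sets, which the paper compresses into ``we conclude by point (3)'' -- but the route is the same.
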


\begin{proof}
  The proofs are mostly routine.

  For point~(\ref{le:q-properties:1}) observe that, by
  Definition~\ref{de:occ-eq}(\ref{de:occ-eq:1c}), if
  $b, b' \in \npe{B}$ then $\pre{b} \sim
  \pre{b'}$. Point~(\ref{le:q-properties:2}) is just a consequence of
  the definition of quotient. For point~(\ref{le:q-properties:3})
  observe that, by definition, if $e \sim e'$ then
  $\pre{e} \sim \pre{e'}$ and $\post{e} \sim \post{e'}$.

  Point~(\ref{le:q-properties:4}) follows from the observation that,
  by (3), $\pre{\eqclass[\sim]{e}} =
  \eqclass[\sim]{\pre{e}}$. Moreover, for all $b \in \pre{e}$, if
  $b \in v_0$ then $\eqclass[\sim]{b} \in
  \eqclass[\sim]{v_0}$. Otherwise, there is $e' \in X$ such that
  $b \in \post{e'}$ and thus, by~~(\ref{le:q-properties:3}),
  $\eqclass[\sim]{b} \in \post{\eqclass[\sim]{e'}}$.

  Concerning point~(\ref{le:q-properties:5}), let $e \sim e'$. Since
  in a p-net $\pre{e'} \not\subseteq \pe{B}$, we can fix some
  $b \in \pre{e} \cap \npe{B}$. By definition of occurrence p-net with
  equivalence (Definition~\ref{de:occ-eq}(\ref{de:occ-eq:1c})),
  $\pre{e} \sim \pre{e'}$, hence there must be
  $b' \in \pre{e'} \cap \npe{B}$ such that $b \sim b'$.
  We proceed by induction on the depth $h$ of $e$.
  When $h = 0$, we have $b \in v_0$. Again, by
  Definition~\ref{de:occ-eq}(\ref{de:occ-eq:1c}),
  $\pre{b} \sim \pre{b'}$, and thus also $b' \in v_0$. Since
  $b \sim b'$ we deduce that $b =b'$ and thus $e \# e'$. If
  $h >0$, there are $e_1 \in \pre{b}$ and $e_1' \in \pre{b'}$ and,
  since $b, b' \in \npe{B}$, it holds $e_1 \sim e_1'$. Hence by
  induction hypothesis $e_1 \# e_1'$ and thus $e \# e'$.

  Finally, for point~(\ref{le:q-properties:6}), we can proceed by
  induction on the derivation on rules that define conflict (see
  Definition~\ref{def:mainrels}). Let
  $\eqclass[\sim]{e} \# \eqclass[\sim]{e'}$.  If there is
  $\eqclass[\sim]{b} \in \pre{\eqclass[\sim]{e}} \cap
  \pre{\eqclass[\sim]{e'}} \cap \eqclass[\sim]{\npe{B}}$ then we
  conclude by point (3) that $e \# e'$.
  If instead there are
  $\eqclass[\sim]{b} \in \pre{\eqclass[\sim]{e}}=
  \eqclass[\sim]{\pre{e}}$ and
  $\eqclass[\sim]{b'} \in \pre{\eqclass[\sim]{e'}}=
  \eqclass[\sim]{\pre{e'}}$ with
  $\eqclass[\sim]{b}\# \eqclass[\sim]{b'} $, $b\in\pre{e}$ and
  $b'\in \pre{e'}$ it means that for any
  $\eqclass[\sim]{e_1} \in \pre{\eqclass[\sim]{b}}$ and any
  $\eqclass[\sim]{e_2} \in \pre{\eqclass[\sim]{b'}}$ we have
  $\eqclass[\sim]{e_1}\# \eqclass[\sim]{e_2}$. In particular, for
  $e_1 \in \pre{b}$ and $e_2\in \pre{b'}$ (that are uniquely
  determined as the underlying occurrence p-net as no backward
  conflict) by inductive hypothesis we have $e_1 \# e_2$ and thus
  $b \# b'$ and $e\# e'$.
\end{proof}

We can now reach the desired conclusion.

\begin{lem}[quotient is an occurrence p-net]
  \label{le:quotient-well-defined}
  Let $\langle O, \sim \rangle$ be an occurrence p-net with
  equivalence. Then $\occquot{\langle O, \sim \rangle}$ is an occurrence p-net.
\end{lem}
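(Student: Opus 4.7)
The plan is to verify each clause of Definition~\ref{def:occ-p-net} for $\occquot{\langle O, \sim \rangle}$, relying on the properties collected in Lemma~\ref{le:q-properties} together with the defining conditions of occurrence p-nets with equivalence.

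First I would establish that the quotient is a well-formed p-net. T-restrictedness transfers directly from $O$: each event $e$ has some $b \in \pre{e} \cap \npe{B}$, and by Definition~\ref{de:occ-eq}(\ref{de:occ-eq:1a}) the class $\eqclass[\sim]{b}$ is non-persistent, so $\pre{\eqclass[\sim]{e}} \not\subseteq \eqclass[\sim]{\pe{B}}$ by Lemma~\ref{le:q-properties}(\ref{le:q-properties:3}). Irredundancy is more delicate: a path $\eqclass[\sim]{e} \struct^n \eqclass[\sim]{b}$ with $b \in \pe{B}$ and $n \geq 2$ in the quotient lifts to a chain in $O$ whose extremes either violate irredundancy of $O$ or exhibit two distinct equivalent conditions with one caused by the other, contradicting Definition~\ref{de:occ-eq}(\ref{de:occ-eq:1b}).

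Next I would dispatch the easier structural conditions. For the initial marking, any $b \in v_0$ has $\pre{b} = \emptyset$, which trivially satisfies $\pre{b} \leq \pre{b'}$ for every $b'$; hence by~(\ref{de:occ-eq:1b}) its equivalence class is a singleton, and by Lemma~\ref{le:q-properties}(\ref{le:q-properties:1},\ref{le:q-properties:2}) we get $\eqclass[\sim]{v_0} = \{ \eqclass[\sim]{b} \mid \pre{\eqclass[\sim]{b}} = \emptyset \}$. For the securing-sequence clause, given $e$ in $O$, take a securing sequence $e_1, \ldots, e_n = e$ and project it: Lemma~\ref{le:q-properties}(\ref{le:q-properties:5},\ref{le:q-properties:6}) ensures distinctness and conflict-freeness of the projected sequence (equivalent distinct events are in conflict, and conflict in the quotient pulls back), while Lemma~\ref{le:q-properties}(\ref{le:q-properties:4}) guarantees that each prefix enables the next class; should some projections coincide, contract and proceed inductively. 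The absence of non-persistent backward conflicts is verified by assuming $\eqclass[\sim]{e} \neq \eqclass[\sim]{e'}$ share a non-persistent post-condition class $\eqclass[\sim]{b}$, and invoking condition~(\ref{de:occ-eq:1c}) together with the absence of backward conflicts in $O$ to identify the (unique) predecessors of the equivalent representatives as themselves equivalent, contradicting $\eqclass[\sim]{e} \neq \eqclass[\sim]{e'}$.

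The main obstacle, I expect, is the connectedness of the pre-set of each condition in the quotient. For a non-persistent class $\eqclass[\sim]{b}$ the pre-set is empty or a singleton by the previous argument, so the condition is trivial. For a persistent class $\eqclass[\sim]{b}$, Lemma~\ref{le:q-properties}(\ref{le:q-properties:2}) gives $\pre{\eqclass[\sim]{b}} = \eqclass[\sim]{\bigcup_{b' \sim b} \pre{b'}}$, and I must show that any two classes therein are connected by consistency in $\occquot{\langle O, \sim \rangle}$. The idea is to exploit Definition~\ref{de:occ-eq}(\ref{de:occ-eq:1d}): for any $b_1, b_2 \in \eqclass[\sim]{b}$ we have either $\pre{b_1} \sim \pre{b_2}$, in which case any two preceding events are themselves equivalent and thus yield the same class, or a chain $b_1 = c_0, c_1, \ldots, c_k = b_2$ of equivalent persistent conditions with consecutive $c_{i} \conn{}_O c_{i+1}$. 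Combining such chains with the connectedness of each individual $\pre{c_i}$ in $O$ (condition~(4) of Definition~\ref{def:occ-p-net} applied to $O$) and projecting consistency upward through the contrapositive of Lemma~\ref{le:q-properties}(\ref{le:q-properties:6}), I can splice together a single $\conn{}$-chain in the quotient connecting any two predecessor classes of $\eqclass[\sim]{b}$, which is the desired connectedness.
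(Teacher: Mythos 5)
Your proposal is correct and takes essentially the same route as the paper's (much terser) proof: verify each clause of the occurrence p-net definition for the quotient using Lemma~\ref{le:q-properties} together with the conditions of Definition~\ref{de:occ-eq}, obtaining connectedness of the pre-sets of persistent classes from condition~(\ref{de:occ-eq:1d}) and the contrapositive of Lemma~\ref{le:q-properties}(\ref{le:q-properties:6}), and well-formedness (t-restrictedness, irredundancy) by lifting to $O$. The only minor difference is that for irredundancy the paper also invokes condition~(\ref{de:occ-eq:2}) of Definition~\ref{de:occ-eq} (needed to transport the lifted path when the starting event lifts to a different representative $e' \sim e$), a step your ``either/or'' dichotomy uses implicitly rather than naming; otherwise yours is a faithful, more detailed elaboration of the paper's argument.
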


\begin{proof}
  Easy consequence of Lemma~\ref{le:q-properties}.
  In particular, for any event $e \in E$, by
  Lemma~\ref{le:q-properties}(\ref{le:q-properties:3}) we have that
  $\pre{\eqclass[\sim]{e}} = \eqclass[\sim]{\pre{e}}$. Then
  connectedness follows from
  Lemma~\ref{le:q-properties}(\ref{le:q-properties:6}). In order to
  conclude we have to observe that $\occquot{\langle O, \sim \rangle}$
  is a well-formed p-net. T-restrictedness follows immediately from
  t-restrictedness of $O$. Concerning irredundancy, consider a generic
  persistent place in $\occquot{\langle O, \sim \rangle}$ that will be
  of the kind $\eqclass[\sim]{b}$ for $b \in \pe{B}$ and take any
  event $\eqclass[\sim]{e} \in \pre{\eqclass[\sim]{b}}$. By
  Lemma~\ref{le:q-properties}(\ref{le:q-properties:2}),
  $\pre{\eqclass[\sim]{b}} = \eqclass[\sim]{\bigcup_{b' \sim b}
    \pre{b'}}$, hence we can assume $e \in \pre{b}$. In order to
  violate irredundancy, there should exist $e' \sim e$ and $b' \sim b$
  such that $e' \struct_O^n b'$ with $n \geq 2$. It is easy to see
  that this cannot happen thanks to conditions~(\ref{de:occ-eq:1b})
  and~(\ref{de:occ-eq:2}) of Definition~\ref{de:occ-eq}.
\end{proof}

We can thus consider a quotient functor from the category of p-nets with equivalence to the category of occurrence p-nets.

\begin{defi}[quotient functor]
  We denote by $\occquot{} : \occeq \to \occ$ the functor taking an
  occurrence p-net with equivalence to its quotient. 
\end{defi}

The fact that the functor is well-defined on objects follows
from Lemma~\ref{le:quotient-well-defined}. 
On arrows it is immediate from the definition of morphism of occurrence net with equivalence (Definition~\ref{de:occ-eq-morphism}).

\begin{lem}[occurrence p-net with equivalence vs occurrence p-nets]
  \label{le:occ-eq-occ}
  Let $\langle O, \sim \rangle$ 
  be an occurrence p-net with
  equivalence. For any finite configuration $C \in \conf{O}$,
  $\eqclass[\sim]{C} \in \conf{\occquot{O}}$ and
  $\Mark{\eqclass[\sim]{C}} = \eqclass[\sim]{\Mark{C}}$.
\end{lem}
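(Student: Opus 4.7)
My plan is to split the statement into its two claims and prove them in order, relying heavily on the technical properties collected in Lemma~\ref{le:q-properties}. I would first establish that $\eqclass[\sim]{C}$ is a configuration of $\occquot{O}$, and then, using Lemma~\ref{le:conf-exec} to pass between configurations and firing sequences in both $O$ and $\occquot{O}$, show that the markings correspond.

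For the first claim, the key observation is that the quotient map $e \mapsto \eqclass[\sim]{e}$ is injective on $C$. Indeed, if $e,e' \in C$ were distinct with $e \sim e'$, then Lemma~\ref{le:q-properties}(\ref{le:q-properties:5}) would give $e \# e'$, contradicting the consistency of $C$. Since $C$ is a finite configuration, by Lemma~\ref{le:conf-exec} there is a securing sequence $e_1,\ldots,e_n$ enumerating $C$. I would then verify that $\eqclass[\sim]{e_1},\ldots,\eqclass[\sim]{e_n}$ is a securing sequence for $\eqclass[\sim]{C}$ in $\occquot{O}$: distinctness follows from the injectivity just noted, the enabling condition $\{\eqclass[\sim]{e_1},\ldots,\eqclass[\sim]{e_{i-1}}\} \vdash \eqclass[\sim]{e_i}$ follows from Lemma~\ref{le:q-properties}(\ref{le:q-properties:4}), and the absence of pairwise conflict among the $\eqclass[\sim]{e_i}$ follows by contrapositive from Lemma~\ref{le:q-properties}(\ref{le:q-properties:6}). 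This also shows that $\eqclass[\sim]{C}$ is consistent in $\occquot{O}$.

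For the second claim, I would combine Lemma~\ref{le:conf-exec} with an induction on the length $n$ of the securing sequence. On the source side, $C$ corresponds to a firing sequence $v_0 \fire{e_1} v_1 \fire{e_2} \cdots \fire{e_n} v_n$ in $O$ with $v_n = \Mark{C}$; by the first claim and another application of Lemma~\ref{le:conf-exec} in $\occquot{O}$, the sequence $\eqclass[\sim]{e_1},\ldots,\eqclass[\sim]{e_n}$ corresponds to a firing sequence $\eqclass[\sim]{v_0} \fire{\eqclass[\sim]{e_1}} w_1 \fire{\eqclass[\sim]{e_2}} \cdots \fire{\eqclass[\sim]{e_n}} w_n$ in $\occquot{O}$ with $w_n = \Mark{\eqclass[\sim]{C}}$. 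I would then prove by induction on $i$ that $w_i = \eqclass[\sim]{v_i}$, using Lemma~\ref{le:q-properties}(\ref{le:q-properties:3}) to rewrite $\pre{\eqclass[\sim]{e_{i+1}}} = \eqclass[\sim]{\pre{e_{i+1}}}$ and $\post{\eqclass[\sim]{e_{i+1}}} = \eqclass[\sim]{\post{e_{i+1}}}$, so that each transition of the firing relation on the quotient side becomes the image of the corresponding transition on the original side.

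The main obstacle I anticipate is in the inductive step of the second claim, namely showing that $\ominus$ is respected by the quotient map, i.e.\ that $\eqclass[\sim]{u \ominus \pre{e_{i+1}}} = \eqclass[\sim]{u} \ominus \eqclass[\sim]{\pre{e_{i+1}}}$. This equality can fail in principle because equivalence can collapse distinct persistent conditions and, through idempotency, change what ``removing tokens'' means. The crucial safety net is Definition~\ref{de:occ-eq}(\ref{de:occ-eq:1b}), which forbids equivalent conditions to coexist in the pre-set of a single event, together with the observation that reachable markings of $O$ are sets (Proposition~\ref{pr:distinct}). These facts guarantee that the non-persistent tokens being consumed at each firing step are mapped injectively into non-persistent equivalence classes, and that persistent tokens behave compatibly under idempotency on both sides, so the two computations of $\ominus$ agree.
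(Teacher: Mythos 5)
Your proposal is correct in outline and follows the same route as the paper: the paper's own proof simply declares the lemma ``immediate'' from Lemma~\ref{le:q-properties} (points~(\ref{le:q-properties:4}) and~(\ref{le:q-properties:6})), and what you write is essentially the honest expansion of that claim, with point~(\ref{le:q-properties:5}) supplying injectivity of the quotient on $C$ and point~(\ref{le:q-properties:3}) handling pre- and post-sets, together with Lemma~\ref{le:conf-exec} to move between configurations and firing sequences.

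The one place where your justification does not quite carry the weight you put on it is the $\ominus$-commutation step. The facts you cite --- Definition~\ref{de:occ-eq}(\ref{de:occ-eq:1b}) (no equivalent conditions in a single pre-set) and safety of reachable markings (Proposition~\ref{pr:distinct}) --- rule out neither the real danger, namely two \emph{distinct} equivalent non-persistent conditions $b \sim b'$ both touched by the run (one consumed by some $e_i$, the other produced by a different event or lying in a reached marking), in which case $\eqclass[\sim]{v_i \ominus \pre{e_{i+1}}}$ and $\eqclass[\sim]{v_i} \ominus \eqclass[\sim]{\pre{e_{i+1}}}$ could indeed differ. The correct argument is the one you already use for events, pushed one level down: by the first half of condition~(\ref{de:occ-eq:1b}) conditions of the initial marking are equivalent only to themselves, so distinct equivalent non-persistent conditions have nonempty pre-sets, which by condition~(\ref{de:occ-eq:1c}) consist of equivalent (and, again by~(\ref{de:occ-eq:1b}), distinct) events; by Lemma~\ref{le:q-properties}(\ref{le:q-properties:5}) these generating events are in conflict, so they cannot both occur in the conflict-free configuration $C$, and hence no two distinct equivalent non-persistent conditions ever appear among the conditions produced, consumed, or initially marked along the firing sequence. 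With that, the quotient map is injective on the non-persistent conditions involved, $\ominus$ commutes with taking equivalence classes (persistent tokens are never removed by $\ominus$, so idempotency causes no trouble), and your induction on the firing sequence goes through.
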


\begin{proof}
  Immediate from Lemma~\ref{le:q-properties}, points
  (\ref{le:q-properties:4}) and (\ref{le:q-properties:6}).
\end{proof}

\subsection{Unfolding}

As mentioned above, the first phase of the unfolding construction produces an occurrence p-net with equivalence, which is then quotiented to an occurrence p-net. We will use $\pi$ to denote the projection on the first component of a pair, i.e., given sets $A$ and $B$, we let $\pi : A \times B \to A$ be defined as $\pi(a,b) = a$ for all $(a,b) \in A \times B$.

\begin{defi}[unfolding]
 \label{def:unf}
 Let $N=(S,\pe{S},T,\delta_0,\delta_1,u_0)$ be a p-net.  Define the
 \emph{pre-unfolding}
 $\preUnf{N} = (B,\pe{B},E, \gamma_0, \gamma_1,v_0)$ as the least
 occurrence p-net with an equivalence $\sim_N$ on $B \cup E$ such that
 \begin{itemize}
   
 \item $v_0 = \{ \langle s, \bot \rangle \mid s \in u_0 \} \subseteq B$
 \item if $t \in T$ and $X \subseteq B$ such that $\sconc{X}$ and
   $\pi(X) = \pre{t}$ then $e = \langle t, X \rangle \in E$,
   $Y = \{ \langle s, e \rangle \mid s \in \post{t} \} \subseteq B$ and
   $\pre{e} = X$, $\post{e} = Y$.
 \end{itemize}
 
 The
 equivalence $\sim_N$ is the least equivalence relation that satisfies
 \[
   \infer
   {\langle s, e\rangle \sim_N \langle s, e' \rangle}
   {s \in \pe{S}\ & e \conn{}_N e'
   }
   \quad
   \quad
   \infer
   {\langle t, X \rangle \sim_N \langle t, X' \rangle}
   {X \sim_N X'}
   \quad\quad
   \infer
   {\langle s, e \rangle \sim_N \langle s, e' \rangle}
   {e \sim_N e'}
 \]

 The unfolding $\unf{N}$ is obtained as the quotient
 $\quotient{\preUnf{N}}{\sim_N}$ with the \emph{folding morphism} $\varepsilon_N : \unf{N} \to N$
 defined as $\varepsilon_N(\eqclass[\sim_N]{x}) = \pi(x)$.
\end{defi}

Note that if the net $N$ does not contain persistent places, the equivalence relation $\sim_N$ is just the identity relation, strong concurrency coincide with concurrency and the pre-unfolding coincides with the ordinary unfolding.
Observe that $\varepsilon_N$ is injective on pre- and post-sets of
transitions, as implied by the definition of p-net morphism.

\begin{figure}[t]
$$
 \xymatrix@R=.8pc@C=1.2pc{
 & &
 \drawmarkedplace \ar@/_2.2pc/[ddddll] \ar@/^2.2pc/[ddddrr] 
 \nameplacedown s
 \\
 &
 \drawmarkedplace\ar[d]
 \nameplaceright p
 & &
 \drawmarkedplace\ar[d]
 \nameplaceright q
 \\
 &
 \drawtrans a \ar[d] 
 & &
 \drawtrans b \ar[d] 
 \\
 &
 \drawpersistentplace\ar[d] \ar[ld] \ar@/_1.2pc/[ddd] \ar@/_2pc/[ddddd] \ar@{.}@(ur,ul)[rr]
 \nameplaceright {o_1}
 &
 \drawmarkedplace\ar[d] \ar[ld] \ar[rd]
 \nameplaceright {r_1}
 &
 \drawpersistentplace\ar[d] \ar[rd] \ar@/^1.2pc/[ddd] \ar@/^2pc/[ddddd] 
 \nameplaceright {o_2}
 \\
 \drawtrans {e_1} \ar[d] \ar@{.}@(dr,dl)[rrrr]
 &
 \drawtrans {d_1} \ar[d] \ar@{.}@/_1pc/[rr]
 & 
 \drawtrans {c_1} \ar[d] 
 &
 \drawtrans {d_2} \ar[d] 
 &
 \drawtrans {e_2} \ar[d] 
 \\
 \drawplace \ar@{.}@(dr,dl)[rrrr]
 &
 \drawplace \ar@{.}@/_1pc/[rr]
 &
 \drawplace \ar[d] \ar[ld] \ar[rd]
 \nameplaceright {r_2}
 &
 \drawplace 
 &
 \drawplace
 \\
 &
 \drawtrans {d_3} \ar[d] \ar@{.}@/_1pc/[rr]
 & 
 \drawtrans {c_2} \ar[d] 
 &
 \drawtrans {d_4} \ar[d] 
 \\
 &
 \drawplace \ar@{.}@/_1pc/[rr]
 &
 \drawplace \ar[d] \ar[ld] \ar[rd]
 \nameplaceright {r_3}
 &
 \drawplace
 \\
 &
 \drawtrans {d_5} \ar[d] \ar@{.}@/_1pc/[rr]
 & 
 \drawtrans {c_3} \ar[d] 
 &
 \drawtrans {d_6} \ar[d] 
 \\
 &
 \drawplace  \ar@{.}@/_1pc/[rr]
 &
 \drawplace \ar@{--}[d]
 \nameplaceright {r_4}
 &
 \drawplace 
 \\
 &
 &
 &
 }
$$
\caption{Pre-unfolding of our running example}\label{fig:preunf}
\end{figure}
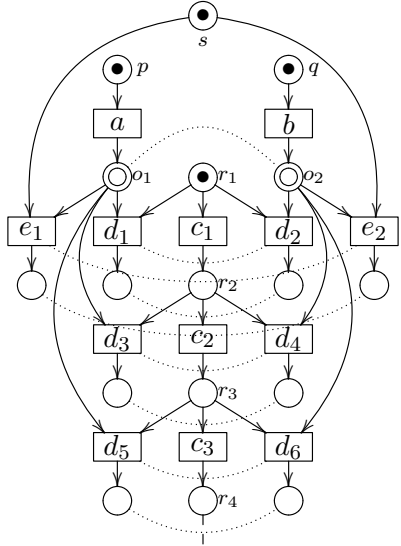

As an example, consider the (fragment of the) pre-unfolding of our running example in Figure~\ref{fig:preunf}, where we have used convenient names for places and transitions in order to improve readability.
Since $o_1$ and $o_2$ are instances of the same persistent place $o$, they are related by the equivalence.
Then $\{o_1,r_1\} \sim \{o_2,r_1\}$ and $\{o_1,s\} \sim \{o_2,s\}$, thus $d_1 \sim d_2$ and $e_1\sim e_2$ and the equivalence is propagated to their post-sets. The same pattern is iterated for any instance of $r$ created by the subsequent firing of (the instances of) $c$.
The corresponding unfolding is obtained as the quotient in Figure~\ref{fig:unf}.

For the p-net in Figure~\ref{fig:no-occurrence} (left), the
pre-unfolding is the occurrence p-net in Figure~\ref{fig:no-occurrence}
(right). Note that $o_1$ and $o_2$ are not equivalent since their
generating events are in conflict and are not connected by a chain of
consistency. 
Since $\sim_N$ here is the identity, the unfolding coincides with the pre-unfolding.

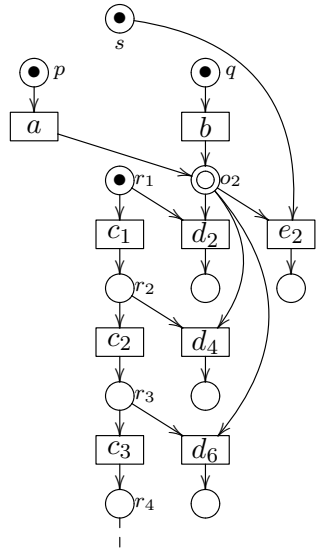
\begin{figure}[t]
$$
 \xymatrix@R=.8pc@C=1.2pc{
 & &
 \drawmarkedplace \ar@/^2.2pc/[ddddrr] 
 \nameplacedown s
 \\
 &
 \drawmarkedplace\ar[d]
 \nameplaceright p
 & &
 \drawmarkedplace\ar[d]
 \nameplaceright q
 \\
 &
 \drawtrans a \ar[drr] 
 & &
 \drawtrans b \ar[d] 
 \\
 &
 &
 \drawmarkedplace\ar[d] \ar[rd]
 \nameplaceright {r_1}
 &
 \drawpersistentplace\ar[d] \ar[rd] \ar@/^1.2pc/[ddd] \ar@/^2pc/[ddddd] 
 \nameplaceright {o_2}
 \\
 &
 & 
 \drawtrans {c_1} \ar[d] 
 &
 \drawtrans {d_2} \ar[d] 
 &
 \drawtrans {e_2} \ar[d] 
 \\
 &
 &
 \drawplace \ar[d] \ar[rd]
 \nameplaceright {r_2}
 &
 \drawplace 
 &
 \drawplace
 \\
 &
 & 
 \drawtrans {c_2} \ar[d] 
 &
 \drawtrans {d_4} \ar[d] 
 \\
 &
 &
 \drawplace \ar[d] \ar[rd]
 \nameplaceright {r_3}
 &
 \drawplace
 \\
 &
 & 
 \drawtrans {c_3} \ar[d] 
 &
 \drawtrans {d_6} \ar[d] 
 \\
 &
 &
 \drawplace \ar@{--}[d]
 \nameplaceright {r_4}
 &
 \drawplace 
 \\
 &
 &
 &
 }
$$
\caption{Unfolding of our running example}\label{fig:unf}
\end{figure}

We can consider the inclusion functor $\Ioc{} : \occ \to \pn$ that
acts as identity on objects and morphisms.  We next observe that the
unfolding $\unf{N}$ and the folding morphism $\varepsilon_N$ are cofree over
$N$.  Therefore $\unf{}$ extends to a functor that is right adjoint of
$\Ioc{}$ and establishes a coreflection between $\pn$ and
$\occ$.

It is easy to show that the pre-unfolding is an occurrence p-net with equivalence.

\begin{lem}[pre-unfolding is an occurrence p-net with equivalence]
  Let $N$
  be a p-net.  Then
  $\langle \preUnf{N}, \sim_N \rangle$ is an occurrence p-net with
  equivalence.
\end{lem}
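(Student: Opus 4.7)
The plan is to check the requirements of Definition~\ref{de:occ-eq} one by one, leveraging heavily the constructive nature of $\preUnf{N}$ as a least fixed point, which enables induction on the depth of events and conditions as defined earlier for occurrence p-nets.

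The first task is to verify that $\preUnf{N}$ is indeed an occurrence p-net. This splits into several easy observations. Well-formedness is inherited from $N$: t-restrictedness holds because each event $e = \langle t, X\rangle$ has $\pi(X) = \pre{t} \not\in \mon{\pe{S}}$, and irredundancy follows by induction on depth, since a back-pointer from $e$ to a persistent condition $\langle s, e\rangle$ in its post-set would have to involve a chain in $N$ violating irredundancy of $N$ itself. The initial marking condition holds by construction, since the conditions with empty pre-set are exactly those of form $\langle s, \bot\rangle$. Each event $e = \langle t, X\rangle$ admits a securing sequence built inductively: since $X$ is strongly concurrent, by Lemma~\ref{le:cover-conc-general} its elements are coverable, hence each $b \in X$ with $\pre{b} = \{e_b\}$ contributes a securing sequence by induction hypothesis, and these can be concatenated (dropping repetitions) to yield one for $e$. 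The absence of backward conflicts is immediate: by construction, each condition $\langle s, e\rangle$ has $\pre{\langle s, e\rangle} = \{e\}$ (or empty if $e = \bot$), which is a singleton, so distinct events have disjoint post-sets at the level of the pre-unfolding. Connectedness of pre-sets is then trivial, since every pre-set is a singleton.

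The bulk of the proof lies in verifying the conditions on the equivalence $\sim_N$. By direct inspection of the three generating rules, the equivalence only relates elements of the same form (conditions to conditions, events to events, preserving persistence), so (\ref{de:occ-eq:1a}) is immediate. Condition (\ref{de:occ-eq:1c}) follows because the first rule only equates persistent conditions, while non-persistent conditions are only equated via the third rule, which does so exactly when the generating events are equivalent. Condition (\ref{de:occ-eq:1d}) is essentially the definition of the first rule: persistent conditions $\langle s, e\rangle$ and $\langle s, e'\rangle$ with $e \conn{}_N e'$ are chained by a single step of $\sim_N \setminus \#$. Condition (\ref{de:occ-eq:2}) is a direct consequence of the second generating rule: equivalent events must have the shape $\langle t, X\rangle \sim_N \langle t, X'\rangle$ with $X \sim_N X'$; post-sets are then also equivalent because they are determined by applying the third rule to the common transition $t$. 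Condition (\ref{de:occ-eq:3}) again follows from the second rule: if $X \sim_N X'$, for any event $\langle t, X\rangle$ the corresponding event $\langle t, X'\rangle$ exists in $\preUnf{N}$ (by the fact that $X'$ must also be strongly concurrent, which follows from $X \sim_N X'$ and properties of the equivalence), and they are equivalent.

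The delicate point, which I anticipate as the main obstacle, is condition (\ref{de:occ-eq:1b}): for distinct equivalent conditions $b, b'$, neither $\pre{b} \leq \pre{b'}$ nor the presence of both in the pre-set of a common event can hold. The second part is handled by strong concurrency: if $b, b' \in \pre{e}$ then $b, b'$ are both in a strongly concurrent set (by Lemma~\ref{le:pre-post-strong} applied inductively to the pre-unfolding being built, or directly from the construction), hence cannot be equivalent. For the first part, suppose $b = \langle s, e\rangle \sim_N b' = \langle s, e'\rangle$ with $b \neq b'$ and $e \leq e'$ in $\preUnf{N}$. The conditions force $s \in \pe{S}$ and $e \conn{}_N e'$ (possibly chained, but reducible to the single-step case). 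But then the causal path from $e$ to $e'$ in $\preUnf{N}$ projects via $\pi$ to a path from $\pi(e)$ to $s$ in $N$ of length at least $2$, while $\pi(e)$ already produces $s$ directly, contradicting irredundancy of $N$. Making this last argument rigorous—defining the projection of causal paths and tracking persistent places along the way—is the key step that glues structural properties of $N$ to the constructive definition of $\preUnf{N}$.
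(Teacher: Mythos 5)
Your overall strategy (verify the axioms of Definition~\ref{de:occ-eq} one by one, with condition~(\ref{de:occ-eq:1b}) as the crux) matches the paper's, and two of your ingredients are sound: the second half of~(\ref{de:occ-eq:1b}) via strong concurrency of the sets $X$ used to build events, and the use of irredundancy of $N$ for persistent conditions, which is exactly what is needed for conditions identified by the \emph{first} generation rule (same persistent place, consistent but non-equivalent producers) -- a case the paper's one-line justification leaves implicit. The genuine gap is your claim that, for distinct equivalent conditions $b = \langle s, e\rangle \sim_N b' = \langle s, e'\rangle$, ``the conditions force $s \in \pe{S}$ and $e \conn{}_N e'$''. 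They do not: $\sim_N$ also identifies conditions by the \emph{third} rule, when the producing events are equivalent, and such conditions may be non-persistent (in Figure~\ref{fig:preunf} the post-sets of $d_1 \sim_N d_2$ contain distinct equivalent non-persistent conditions). For such a pair your projection argument yields nothing: irredundancy (Definition~\ref{de:well-formed}) only constrains paths into \emph{persistent} places, and $e \leq e'$ with $\pi(e) = \pi(e')$ merely exhibits a cycle in $N$, which is legal (the running example has one through $c$). What is needed is precisely the fact the paper's proof rests on: distinct $\sim_N$-equivalent events are in conflict, proved by induction on depth as in Lemma~\ref{le:q-properties}(\ref{le:q-properties:5}) (equivalent events have equivalent pre-sets; equivalent non-persistent conditions have equivalent -- for initial ones, equal -- producers; descend until a shared initial condition or a shared producer gives a direct conflict, then inherit it upwards). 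Combined with the observation that $e \leq e'$ forces $e$ into every securing sequence of $e'$ (pre-sets of conditions in $\preUnf{N}$ are empty or singletons), hence $e \conn{}_N e'$, this excludes $\pre{b} \leq \pre{b'}$ in the case you left out. Without that fact your case analysis for~(\ref{de:occ-eq:1b}) is incomplete, and the two arguments (yours and the paper's) are in fact complementary: both are needed for a full proof.

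A secondary step that fails as written is your justification of condition~(\ref{de:occ-eq:3}): you infer that $X \sim_N X'$ forces $X'$ to be strongly concurrent, so that $\langle t, X'\rangle$ exists. Equivalence preserves the place component but not consistency: starting from a strongly concurrent $X = \pre{\langle t, X\rangle}$ and replacing one condition by an equivalent copy whose producer is in conflict with the producers of the remaining elements (such copies exist precisely because distinct equivalent events are in conflict, e.g.\ when an event with an equivalent twin has both a persistent and a non-persistent output that are jointly consumed), one obtains $X' \sim_N X$ that is not concurrent and hence is the pre-set of no event. So the implication you rely on is not available; the verification of~(\ref{de:occ-eq:3}) must instead be carried out for the sets for which the condition is actually invoked, namely the strongly concurrent ones, as in the proof of Proposition~\ref{pr:occ-occeq-equiv} (where $\pi(X') = \pi(X)$ and strong concurrency of $X'$ together give the required event by construction). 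This is a lesser issue than the first, since the paper's own proof is silent on~(\ref{de:occ-eq:3}), but the step as you stated it would not go through.
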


\begin{proof}
  The fact that $\langle \preUnf{N}, \sim_N \rangle$ satisfies the
  properties in Definition~\ref{de:occ-eq} follows almost directly by
  construction. The less immediate property
  is~(\ref{de:occ-eq:1b}), specifically the fact that if $b \neq b'$
  then $\neg (\pre{b} \leq \pre{b'})$. This is a consequence of the
  fact that if $e \sim e'$ and $e \neq e'$ then $e \# e'$. 
\end{proof}

As a preliminary step we show that the pre-unfolding
construction $\preUnf{}$ extends to a functor that, together with the quotient functor $\occquot{}$, establishes an equivalence between the categories $\occ$ and $\occeq$.

\begin{lem}[mapping from the pre-unfolding]
  \label{le:map-pre-unf}
  Let $O$ be an occurrence p-net and
  $\pi : \preUnf{O} \to O$ the mapping from the pre-unfolding. Then
  \begin{enumerate}
  \item for all finite $C \in \conf{O}$ there exists
    $C' \in \conf{\preUnf{O}}$ such that $\pi(C') = C$ and
    $\Mark{C} = \pi(\Mark{C'})$;

  \item for all $C', C'' \in \conf{\preUnf{O}}$, if
    $\pi(C') = \pi(C'')$ then $C' \sim_N C''$.
  \end{enumerate}
\end{lem}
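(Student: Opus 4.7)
The plan is to prove both parts by induction on configurations, exploiting the characterisation of configurations via securing sequences (Lemma~\ref{le:conf-exec}) and the simulation property of p-net morphisms (Lemma~\ref{lemma:simulation}), which together yield $\pi(\Mark{C'}) = \Mark{\pi(C')}$ for any finite configuration $C' \in \conf{\preUnf{O}}$.

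For part~1, I fix a securing sequence $e_1, \ldots, e_n$ for $C$ in $O$ and build configurations $C'_0 \subseteq \cdots \subseteq C'_n$ of $\preUnf{O}$ with $\pi(C'_i) = \{e_1, \ldots, e_i\}$ step by step, starting with $C'_0 = \emptyset$. At the inductive step, $\pi(\Mark{C'_{i-1}}) = \Mark{\{e_1,\ldots,e_{i-1}\}} \supseteq \pre{e_i}$ lets me pick, for each $s \in \pre{e_i}$, some preimage $b_s \in \Mark{C'_{i-1}}$. The set $X = \{b_s \mid s \in \pre{e_i}\}$ is concurrent by Lemma~\ref{le:cover-conc-general} (being covered by the reachable marking $\Mark{C'_{i-1}}$) and strongly concurrent since its elements have pairwise distinct first components and the rules defining $\sim_N$ only ever relate conditions sharing a first component. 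Hence $e'_i = \langle e_i, X\rangle$ is an event of $\preUnf{O}$, and $C'_i := C'_{i-1} \cup \{e'_i\}$ is a configuration because the firing sequence realising $C'_{i-1}$ extends by $e'_i$; the marking equation is immediate from Lemma~\ref{lemma:simulation}.

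For part~2, I first show that $\pi$ is injective on every configuration $C' \in \conf{\preUnf{O}}$: a securing sequence of $C'$ projects via Lemma~\ref{lemma:simulation} to a firing sequence in $O$, and an induction on depth using t-restrictedness together with safeness (Proposition~\ref{pr:distinct}) shows that no event in $O$ can fire twice, whence the projected sequence has pairwise distinct events. Consequently $\pi(C') = \pi(C'')$ yields a canonical bijection pairing events of $C'$ and $C''$ with identical projection. I then argue, by induction on depth, that paired events are $\sim_N$-equivalent. For paired $e' = \langle e, X_1\rangle$ and $e'' = \langle e, X_2\rangle$, by the second rule defining $\sim_N$ it suffices to establish $X_1 \sim_N X_2$, which I handle componentwise, noting first that an analogous safeness argument forces $\pi$ to restrict to bijections $X_1 \to \pre{e}$ and $X_2 \to \pre{e}$. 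For non-persistent $s \in \pre{e}$, the two preimages either both equal $\langle s, \bot\rangle$ or have the form $\langle s, f'\rangle, \langle s, f''\rangle$ with $f' \in C'$, $f'' \in C''$ projecting to the unique generator of $s$ in $O$; the induction hypothesis yields $f' \sim_N f''$, and the third rule delivers the desired equivalence of conditions.

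The main obstacle is the case of a persistent $s \in \pre{e}$, where the preimages $\langle s, f'\rangle \in X_1$ and $\langle s, f''\rangle \in X_2$ may have generators with $\pi(f'), \pi(f'') \in \pre{s}$ genuinely distinct. Here I plan to exploit condition~(4) of Definition~\ref{def:occpnet}, which guarantees that $\pre{s}$ is connected by $\conn{}_O$: a chain $\pi(f') = t_0 \conn{}_O t_1 \conn{}_O \cdots \conn{}_O t_k = \pi(f'')$ in $\pre{s}$ is to be lifted to events $\tilde f_0, \ldots, \tilde f_k$ in $\preUnf{O}$ with $\pi(\tilde f_j) = t_j$ and $\tilde f_j \conn{}_N \tilde f_{j+1}$ for each $j$, using part~1 applied to suitable small configurations to produce the needed witnesses. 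The first rule of $\sim_N$ then equates $\langle s, \tilde f_j\rangle \sim_N \langle s, \tilde f_{j+1}\rangle$, and transitivity forces $\langle s, f'\rangle \sim_N \langle s, f''\rangle$. The delicate point is arranging the lifted chain coherently, so that consecutive generators are consistent in $\preUnf{O}$; this is where the connectedness built into occurrence p-nets is essential.
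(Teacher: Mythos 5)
Your part~1 is essentially the paper's own argument: an induction along a securing sequence of $C$, choosing for each place of $\pre{e_i}$ a preimage in the marking reached so far, and observing that the chosen set is strongly concurrent because equivalent conditions of $\preUnf{O}$ always share their first component. That is fine. For part~2 the paper gives no details (it claims ``an easy induction on $|C'|$''), and your detailed plan breaks exactly at the point you flag, the persistent preconditions, for two reasons. First, lifting the chain $t_0 \conn{}_O t_1 \conn{}_O \cdots \conn{}_O t_k$ by ``part~1 applied to suitable small configurations'' presupposes that two $\conn{}_O$-consistent events of $O$ always lie in a common configuration of $O$; this saturation property is assumed for live event structures but is never established for occurrence p-nets, and it is not immediate. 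Second, and more seriously, even granting the lifted chain $\tilde{f}_0,\dots,\tilde{f}_k$, its endpoints are freshly constructed lifts, not the specific generators $f'\in C'$ and $f''\in C''$; to conclude $\langle s,f'\rangle \sim_N \langle s,f''\rangle$ you still have to relate two different lifts of the same event of $O$ (namely $f'$ with $\tilde{f}_0$, and $\tilde{f}_k$ with $f''$). The first rule of $\sim_N$ does not deliver this: two lifts of the same event of $O$ typically consume the same lift of a non-persistent precondition and are then in \emph{direct conflict} in $\preUnf{O}$, hence not consistent; equating them is an instance of the very statement you are proving, so the plan as written is circular.

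Note also that condition~(4) of Definition~\ref{def:occpnet} is not the relevant fact here: since $\pi(f')$ and $\pi(f'')$ both belong to the common projection $C$, which is conflict-free, they are automatically consistent in $O$; the whole difficulty lives at the level of lifts, where consistency may fail. A workable repair is to route the chain inside the given configurations: let $g'$ be the unique event of $C'$ with $\pi(g')=\pi(f'')$ (it exists because $\pi(f'')\in C=\pi(C')$, and $\langle s,g'\rangle$ is a condition of $\preUnf{O}$ since $\pi(g')\in\pre{s}$). Then $f',g'\in C'$ are consistent, so the first rule gives $\langle s,f'\rangle\sim_N\langle s,g'\rangle$; and $g',f''$ are paired lifts of the same event of $O$, to be handled by the second and third rules via the induction hypothesis. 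But this forces you to revisit your induction measure: $g'$ need not be a cause of $e'$, so its depth can exceed that of $e'$, and ``induction on depth of the paired events'' does not cover the pair $(g',f'')$; you need an induction on $|C'|$ (with a careful choice of which paired event to remove, since an event maximal in $C'$ can have a non-maximal partner in $C''$) or a suitably strengthened statement. As it stands, part~2 of your proposal has a genuine gap.
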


\begin{proof}
  Point (1) can be shown by induction on $|C|$. The base case $|C|=0$
  is trivial. If $|C| = n >0$, by Lemma~\ref{le:conf-exec}, we have
  $C = \{ e_1, \ldots, e_n \}$ and there is a firing sequence
  $v_0 \fire{e_1} v_1 \fire{e_2} \ldots \fire{e_n} v_{n}$. By the same
  lemma $C_1= \{ e_1, \ldots, e_{n-1} \}$ is a configuration of
  $O$. Hence by inductive hypothesis there exists a configuration
  $C_1' \in \conf{\preUnf{O}}$ such that $\pi(C_1') =C_1$ and
  $v_n = \Mark{C_1} = \pi(\Mark{C_1'})$ and $\conc{\Mark{C_1'}}$.
  Since $\pre{e_n} \subseteq v_{n-1}$, by definition of $\preUnf{}$
  there is an event $e_n' = \langle e_n, X \rangle$, where
  $X \subseteq v_{n-1}$ is such that $\sconc{X}$ and
  $\pi(X) = \pre{e_n}$. Note that we can always choose $X$ such that
  $\sconc{X}$ since equivalent conditions have the same
  $\pi$-image. If we define $C' = C_1 \cup \{e_n'\}$, we have that
  $\pi(C') =C$.

  Point (2) can be proved by an easy induction on $|C'|$.
\end{proof}

\begin{prop}[equivalence]
  \label{pr:occ-occeq-equiv}
  The categories $\occ$ and $\occeq$ are equivalent via the functors
  $\occquot{}$ and $\preUnf{}$.
\end{prop}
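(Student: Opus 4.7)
The plan is to exhibit the two natural isomorphisms that witness the equivalence, namely $\occquot{} \circ \preUnf{} \cong \mathrm{Id}_{\occ}$ and $\preUnf{} \circ \occquot{} \cong \mathrm{Id}_{\occeq}$. Before doing so, I first need to extend $\preUnf{}$ to a functor $\occ \to \occeq$: given $f : O_1 \to O_2$ in $\occ$, define $\preUnf{f}$ inductively on the depth of items, setting $\preUnf{f}(\langle s, \bot\rangle) = \langle f(s), \bot\rangle$ for initial conditions, $\preUnf{f}(\langle t, X \rangle) = \langle f(t), \preUnf{f}(X)\rangle$ on events, and $\preUnf{f}(\langle s, e\rangle) = \langle f(s), \preUnf{f}(e)\rangle$ on non-initial conditions. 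I then need to check that this respects $\sim_N$, which follows by induction on the generating rules using the fact that morphisms preserve consistency (by Lemma~\ref{le:morph-conc} together with the simulation property of Lemma~\ref{lemma:simulation}).

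For the iso $\varepsilon : \occquot{}\circ\preUnf{} \Rightarrow \mathrm{Id}_{\occ}$, the candidate is precisely the folding morphism $\varepsilon_O : \unf{O} \to O$ from Definition~\ref{def:unf}. I would show that $\varepsilon_O$ is an isomorphism when $O$ is already an occurrence p-net. Surjectivity on events and conditions follows from Lemma~\ref{le:map-pre-unf}(1), since every event of $O$ appears in some configuration, which lifts to the pre-unfolding and then descends to the quotient with the correct projection. Injectivity is Lemma~\ref{le:map-pre-unf}(2): two configurations of $\preUnf{O}$ with the same $\pi$-image are $\sim_O$-related, so they coincide in $\unf{O}$. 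Pre- and post-set preservation, together with the preservation of the initial marking and persistence sort, are immediate from the definition of $\preUnf{}$ and of the quotient, so $\varepsilon_O$ is indeed an iso in $\occ$; naturality in $O$ is immediate from the inductive definition of $\preUnf{f}$.

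For the iso $\eta : \mathrm{Id}_{\occeq} \Rightarrow \preUnf{}\circ\occquot{}$, given $\langle O, \sim\rangle$ with $O=(B,\pe{B},E,\gamma_0,\gamma_1,v_0)$, I define $\eta_{\langle O,\sim\rangle}$ inductively by depth: $\eta(b)=\langle \eqclass[\sim]{b}, \bot\rangle$ for $b\in v_0$; for an event $e$, $\eta(e) = \langle \eqclass[\sim]{e}, \eta(\pre{e})\rangle$; and for a non-initial condition $b$ with $\pre{b} = \{e\}$, $\eta(b) = \langle \eqclass[\sim]{b}, \eta(e)\rangle$. I would check that (i) $\eta(\pre{e})$ is strongly concurrent, using Lemma~\ref{le:pre-post-strong} and the fact that equivalent items of $O$ map to the same item of $\occquot{\langle O,\sim\rangle}$, (ii) $\eta$ hits every event and condition of $\preUnf{\occquot{\langle O,\sim\rangle}}$, by induction on depth and using Lemma~\ref{le:q-properties} to match pre-sets, and (iii) $\eta$ preserves and reflects the equivalence, by comparing the inductive rules defining $\sim_{\occquot{\langle O,\sim\rangle}}$ with the clauses of Definition~\ref{de:occ-eq}. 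Naturality in $\langle O,\sim\rangle$ follows because both the quotient and the pre-unfolding are defined structurally on items.

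The main obstacle will be step (iii) of the second iso: checking that the equivalence on $\preUnf{\occquot{\langle O,\sim\rangle}}$ generated by the three rules in Definition~\ref{def:unf} coincides, under $\eta$, with the original $\sim$ on $O$. One direction is a straightforward structural induction, but the other requires showing that every equivalence $b \sim b'$ in $\pe{B}$ that is \emph{not} induced by equivalent pre-sets comes from a chain $b\,(\sim\setminus\#)^*\,b'$ (condition~(\ref{de:occ-eq:1d}) of Definition~\ref{de:occ-eq}), whose steps correspond exactly to applications of the first rule of Definition~\ref{def:unf} generating $\sim_N$. Carefully threading the connectedness-by-consistency condition through this matching is the delicate point.
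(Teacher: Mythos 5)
Your two pointwise isomorphisms are exactly the ones the paper constructs (the folding $\occquot{\preUnf{O}}\to O$, made injective/surjective via Lemma~\ref{le:map-pre-unf}, and a depth-induction comparison of $\langle O,\sim\rangle$ with $\preUnf{\occquot{\langle O,\sim\rangle}}$ using conditions~(2)--(3) of Definition~\ref{de:occ-eq}), so the mathematical core agrees; where you diverge is in how the functoriality of $\preUnf{}$ and naturality are obtained. The paper never defines $\preUnf{}$ on morphisms of $\occ$ directly: after the object-level isomorphisms it proves a couniversal property (for every $f:\occquot{\langle O',\sim\rangle}\to O$ a unique $h:\langle O',\sim\rangle\to\preUnf{O}$ with $\occquot{h}$ commuting with the folding), so the right-adjoint structure supplies the morphism part and naturality for free, and the same template is reused verbatim for Proposition~\ref{pr:cor-pre-unf}. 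You instead extend $\preUnf{}$ to morphisms by hand and check two natural isomorphisms; this is a legitimate alternative, but it shifts several verifications onto you that the universal property absorbs: (i) your clause $\preUnf{f}(\langle s,\bot\rangle)=\langle f(s),\bot\rangle$ is not well-typed as stated, since $\pl{f}(s)$ is a (possibly empty) multiset of places and $\tr{f}$ is partial, so the condition and event clauses must return \emph{sets} of items and handle undefinedness; (ii) for the event clause you must argue that the image of a strongly concurrent pre-set is again strongly concurrent in $\preUnf{O_2}$, which needs condition~(3) of Definition~\ref{de:pnet-morphism} (disjointness of images of distinct pre-set places) in addition to Lemmas~\ref{lemma:simulation} and~\ref{le:morph-conc}; and (iii) the equivalence-matching step you flag as delicate is indeed the substantive point, and is precisely what the paper discharges through clauses~(2)--(3) of Definition~\ref{de:occ-eq} (uniqueness of the equivalent event with a prescribed pre-set) when building its map $\epsilon$ in the opposite direction, which is somewhat cleaner because events of the pre-unfolding carry their strongly concurrent pre-sets by construction. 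In short: your route is workable and more concrete, the paper's route is more economical because cofreeness packages functoriality, naturality and the subsequent adjunction with $\pn$ in one argument.
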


\begin{proof}
  Let us first observe that, on objects, the functors are inverse of
  each other. First, for an occurrence p-net $O$ let us
  define $q : \occquot{\preUnf{O}} \to O$ by letting
  $q(\eqclass[\sim]{x}) = \pi(x)$.

  In order to show that $q$ is an isomorphism we resort to
  Lemma~\ref{le:map-pre-unf}. By (1) we can show that $q$ is
  surjective. In fact, for each event $e$ in $O$ there is $e'$ in
  $\preUnf{O}$ such that $\pi(e') =e$. Hence
  $q(\eqclass[\sim]{e'}) = e$. Moreover, it is injective, since by (2)
  it follows that if $q(\eqclass[\sim]{x}) = q(\eqclass[\sim]{y})$
  then $x \sim y$.
  
  For the converse, define
  $\epsilon : \preUnf{\occquot{\langle O, \sim \rangle}} \to \langle
  O, \sim \rangle$ inductively as follows.

  The initial marking of $\preUnf{\occquot{O, \sim \rangle}}$ consists
  of the  places $b' =  \langle \eqclass[\sim]{b}, \bot  \rangle$ with
  $b  \in  v_0$.  We  define   $\epsilon(b')  =  b$.

  For the inductive step, consider an event
  $e' = \langle \eqclass[\sim]{e}, X' \rangle$ where
  $\eqclass[\sim]{e}$ is an event in
  $\occquot{\langle O, \sim \rangle}$ and $X' \subseteq B'$ such that
  $\sconc{X}$ and $\pi(X) = \pre{\eqclass[\sim]{e}}$. Since $X$ is
  strongly concurrent and $\epsilon$ preserves strong concurrency
  (Lemma~\ref{le:morph-conc}), we have that $\epsilon(X)$ is strong
  concurrent.  Moreover, by Lemma~\ref{le:q-properties}(3) we have
  $\pi(X) = \pre{\eqclass[\sim]{e}} = \eqclass[\sim]{\pre{e}}$.  Hence
  $\epsilon(X) \sim \pre{e}$ and therefore, by definition of
  occurrence p-net with equivalence, there exists a unique $e_1$ in $O$
  such that $e_1 \sim e$ and $\pre{e_1} =\epsilon(X)$. It exists by
  Definition~\ref{de:occ-eq}(\ref{de:occ-eq:3}) and it is unique by
  Definition~\ref{de:occ-eq}(\ref{de:occ-eq:2}). We define
  $\epsilon(e') = e_1$. Finally, $\epsilon$ is extended to $\post{e'}$
  by letting $\epsilon(\langle b, e' \rangle) = b_1$, where
  $b_1 \in \post{e_1}$ is the unique condition in $\post{e_1}$ such
  that $b_1 \sim b$. In this way $\epsilon(\post{e'}) = \post{e_1}$.

  It is easy to see that $\epsilon$ is injective. Surjectivity follows
  from Lemma~\ref{le:occ-eq-occ}.

  We finally show that for any occurrence p-net $O$, for any occurrence p-net
  with equivalence $\langle O', \sim \rangle$ and for any morphism
  $f :\occquot{\langle O', \sim \rangle} \to O$ there exists a unique
  morphism $h: \langle O', \sim \rangle \to \preUnf{O}$ such that the
  following diagram commutes
  \[
    \xymatrix{
      {\occquot{\preUnf{O}}} \ar[r]^(.55){\varepsilon_{O}} & {O} \\
      {\occquot{\langle O', \sim \rangle}} \ar@{.>}[u]^{\occquot{h}} \ar[ur]_{f}
    }
  \]

   We define $h$ by induction on the depth $k$ of the items.

   $\mathsf{(k=0)}$ Only conditions in the initial marking have
   depth $0$. Let $b' \in B'$ with $\depth{b'} = 0$. Hence
   $b' \in v_0'$ and thus $\eqclass[\sim]{b'}$ is in the initial
   marking of $\occquot{\langle O', \sim \rangle}$.  Therefore
   $\pl{f}(\eqclass[\sim]{b'}) = b \in v_0$ and we can define
   $\pl{h}(b') = \langle b, \bot \rangle$, which is in the initial
   marking of $\preUnf{O}$.

   \smallskip
  
   $\mathsf{(k >0)}$
   Let $e' \in E'$ be an event such that $\depth{e'} = k>0$. Therefore
   conditions in $\pre{e'}$ have depth less than $k$.  Hence their
   $h$-images have been already defined. Moreover, since $\pre{e'}$ is
   concurrent, by Lemma~\ref{le:morph-conc}, also $\pl{h}(\pre{e'})$
   is. Moreover, on pre-sets morphisms preserve equivalence, hence $h$
   is injective on $\pre{e'}$ and $\pl{h}(\pre{e'})$ does not include
   equivalent conditions, hence $\sconc{\pl{h}(\pre{e'})}$. Therefore the
   unfolding contains an event
   $e = \langle \tr{f}(e'), \pl{h}(\pre{e'}) \rangle$ and we define
   $\tr{h}(e') = e$. The mapping is then extended to the post-set of
   $e'$ by defining, for each $b \in\post{e'}$,
   $\pl{h}(b)= \langle \pl{f}(\eqclass[\sim]{b}), e\rangle$.

   Uniqueness follows by noticing that at each level we were forced to
   define $h$ as we did to ensure commutativity.
 \end{proof}

We can finally prove the desired theorem, by showing that $\preUnf{} : \pn \to \occeq$ is right adjoint to $\occquot{}: \occeq \to \pn$.

\begin{prop}
  \label{pr:cor-pre-unf}
  $\occquot{} \dashv \preUnf{}$
\end{prop}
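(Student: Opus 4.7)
The plan is to construct the counit as the folding morphism $\varepsilon_N : \unf{N} = \occquot{\preUnf{N}} \to N$, $\eqclass[\sim_N]{\langle x, \_\rangle} \mapsto x$, and then establish the universal property directly: given an occurrence p-net with equivalence $\langle O', \sim\rangle$, a p-net $N$, and a p-net morphism $f : \occquot{\langle O',\sim\rangle} \to N$, I will produce a unique morphism $h : \langle O',\sim\rangle \to \preUnf{N}$ in $\occeq$ such that $\varepsilon_N \circ \occquot{h} = f$. The construction is essentially the one already carried out in the last part of the proof of Proposition~\ref{pr:occ-occeq-equiv}, and observing that it never used the fact that the target was an occurrence p-net (only that $\preUnf{\cdot}$ of it is an occurrence p-net with equivalence, which holds in general).

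Concretely, I would define $\pl{h}$ and $\tr{h}$ by induction on the depth of items of $O'$. For a condition $b'$ of depth $0$ we have $b' \in v_0'$, so $\eqclass[\sim]{b'} \in v_0$ of $\occquot{\langle O',\sim\rangle}$; since $f$ preserves the initial marking, $\pl{f}(\eqclass[\sim]{b'})$ lies in $u_0$, and I set $\pl{h}(b') = \langle \pl{f}(\eqclass[\sim]{b'}), \bot\rangle$. For an event $e'$ of positive depth, the images of the conditions in $\pre{e'}$ are already defined. Using that $\pre{e'}$ is strongly concurrent in $O'$ (Lemma~\ref{le:pre-post-strong}), that morphisms preserve concurrency (Lemma~\ref{le:morph-conc}), and the injectivity of the equivalence on pre-sets ensured by Definitions~\ref{de:occ-eq}(\ref{de:occ-eq:1b}) and~\ref{de:occ-eq-morphism}, the set $\pl{h}(\pre{e'})$ is strongly concurrent in $\preUnf{N}$. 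Because $f$ preserves pre-sets, $\pi(\pl{h}(\pre{e'})) = \pl{f}(\pre{\eqclass[\sim]{e'}}) = \pre{\tr{f}(\eqclass[\sim]{e'})}$, so $e := \langle \tr{f}(\eqclass[\sim]{e'}), \pl{h}(\pre{e'})\rangle$ is an event of $\preUnf{N}$, and I set $\tr{h}(e') = e$. I extend $\pl{h}$ to $\post{e'}$ by sending $b' \in \post{e'}$ with $\pl{f}(\eqclass[\sim]{b'}) = s$ to $\langle s, e\rangle \in \post{e}$.

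It then remains to verify that $h$ is actually a morphism of occurrence p-nets with equivalence, which requires checking the three conditions of Definition~\ref{de:pnet-morphism} together with preservation of the equivalence (Definition~\ref{de:occ-eq-morphism}). The p-net morphism conditions follow from the construction, since by design $\pl{h}$ agrees with $f$ after quotient, and pre-/post-sets are preserved by definition of events of the pre-unfolding. For preservation of equivalence I argue by induction on depth: if $b_1' \sim b_2'$ are non-persistent then by Definition~\ref{de:occ-eq}(\ref{de:occ-eq:1c}) their unique generating events are equivalent, so by the inductive hypothesis their $h$-images have equivalent generating events in $\preUnf{N}$, and the third rule defining $\sim_N$ applies; the persistent case uses Definition~\ref{de:occ-eq}(\ref{de:occ-eq:1d}) together with the first rule defining $\sim_N$ (consistency is preserved since morphisms preserve concurrency, hence do not create conflicts between images of consistent events); for events, Definition~\ref{de:occ-eq}(\ref{de:occ-eq:2}) combined with the second rule defining $\sim_N$ suffices.

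Finally, commutativity $\varepsilon_N \circ \occquot{h} = f$ is immediate from the construction because $\varepsilon_N$ strips the first component of the pre-unfolding pair, and this component was chosen precisely as the $f$-image. Uniqueness follows because at every inductive step the definition of $\pl{h}$ and $\tr{h}$ is forced: the first component of each pair in $\preUnf{N}$ is determined by commutativity with $\varepsilon_N$, while the second component is determined as the already-defined image of the pre-set (respectively, as the event whose existence was just established). I expect the main obstacle to be the preservation of the equivalence under $h$, since $\sim_N$ is defined as a least equivalence generated by three inductive rules and one must carefully match each rule against the structural properties of $\sim$ on $O'$ from Definition~\ref{de:occ-eq}; everything else is a bookkeeping translation of the argument already given for Proposition~\ref{pr:occ-occeq-equiv}.
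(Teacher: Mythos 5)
Your construction is essentially the paper's own proof: the same counit $\varepsilon_N$, the same depth-induction defining $\pl{h}(b') = \langle \pl{f}(\eqclass[\sim]{b'}), \bot\rangle$ at depth $0$ and $\tr{h}(e') = \langle \tr{f}(\eqclass[\sim]{e'}), \pl{h}(\pre{e'})\rangle$ at positive depth via strong concurrency of pre-sets, preservation of concurrency and commutativity at lower depth, with the same forcing argument for uniqueness (and your extra care about preservation of $\sim$ only makes explicit what the paper leaves implicit). The sole point the paper treats that you skip is the case where $\tr{f}$ is undefined on $\eqclass[\sim]{e'}$, in which $\tr{h}(e')$ must be undefined and $\pl{h}$ must send $\post{e'}$ to the empty multiset; this is routine and forced by the morphism conditions, so it does not affect correctness.
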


\begin{proof}
  Let $N = (S, \pe{S}, T, \delta_0, \delta_1, u_0)$ be a p-net, let
  $\preUnf{N} = \langle O, \sim_N \rangle$ be
  its pre-unfolding with $O = \langle B, \pe{B}, E, \gamma_0, \gamma_1, v_0 \rangle$, and let $\varepsilon_N : \occquot{\preUnf{N}} \to N$ be the folding
  morphism. We have to show that for any occurrence p-net with equivalence $\langle O', \sim \rangle$ with $O' = \langle B', \pe{B'}, E', \gamma_0', \gamma_1', v_0' \rangle$ and for
  any morphism $f :\occquot{\langle O', \sim \rangle} \to N$ there exists a unique morphism
  $h: \langle O', \sim \rangle \to \preUnf{N}$ such that the following diagram commutes:
  \[
  \xymatrix{
    {\occquot{\preUnf{N}}} \ar[r]^(.55){\varepsilon_{N}} & {N} \\
    {\occquot{\langle O', \sim \rangle}} \ar@{.>}[u]^{\occquot{h}} \ar[ur]_{f}
    }
    \]

  We define $h$ by induction on the depth $k$ of the items.

  $\mathsf{(k=0)}$ Only conditions in the initial marking have
   depth $0$. Let $b' \in B'$ with $\depth{b'} = 0$. Hence
   $b' \in v_0'$ and thus $\eqclass[\sim]{b'} =\{b'\}$ is in the initial
   marking of $\occquot{\langle O', \sim \rangle}$.  Therefore
   $\pl{f}(\eqclass[\sim]{b'}) = b \in v_0$ and we can define
   $\pl{h}(b') = \langle b, \bot \rangle$, which is in the initial
   marking of $\preUnf{O}$.
  
   \smallskip
   
   $\mathsf{(k >0)}$
   Let $e' \in E'$ be an event such that $\depth{e'} = k>0$. Therefore
   conditions in $\pre{e'}$ have depth less than $k$.  Hence their
   $h$-images have been already defined. We have to define $\tr{h}$ on
   $e'$ and $\pl{h}$ on its post-set.
   \begin{itemize}
     
   \item
     If $\tr{h}(\eqclass[\sim]{e'})$ is undefined then necessarily
     also $\tr{h}(e')$ is undefined and $\pl{h}(b') = 0$ for all
     $b' \in \post{e'}$.

   \item
     If instead $\tr{h}(\eqclass[\sim]{e'}) = t$ observe that since
     $\pre{e'}$ is strongly concurrent by Lemma~\ref{le:morph-conc}
     also $\pl{h}(\pre{e'})$ is so. And, as argued in the proof of
     Proposition~\ref{pr:occ-occeq-equiv}, $\pl{h}(\pre{e'})$ is
     strongly concurrent.
     Now, since $f$ is a p-net morphism
     $\pl{f}(\pre{\eqclass[\sim]{e'}}) = \pre{t}$. Hence,  we have
     \begin{quote}
       $
       \begin{array}{ll}
         \pl{f}(\pre{\eqclass[\sim]{e'}}) = & \\
         \quad  = \pl{{\varepsilon_{N}}}(\pl{\occquot{h}}(\pre{\eqclass[\sim]{e'}})) & \mbox{[by
                                                                                       commutativity (up to depth $k$)]}\\
         \quad  = \pl{{\varepsilon_{N}}}(\eqclass[\sim]{\pl{h}(\pre{e'})}) & \mbox{[by definition of $\occquot{}$]}\\         
       \end{array}
       $
     \end{quote}
     
     Recalling the definition of $\varepsilon_N$, the above equality
     implies that
     $\pi(\pl{\occquot{h}}(\pre{e'})) = \pl{f}(\pre{\eqclass[\sim]{e'}})
     = \pre{t}$.
     Hence the unfolding contains an event
     $e = \langle t, \pl{h}(\pre{e'}) \rangle$ and we define $\tr{h}(e') =
     e$. The mapping is then extended to the post-set of
     $e'$ by defining, for each $b \in\post{e'}$, $\pl{h}(b)= \langle
     \pl{f}(\eqclass[\sim]{b}), e\rangle$.
   \end{itemize}

   Uniqueness follows by noticing that at each level we were forced to
   define $h$ as we did to ensure commutativity.
\end{proof}

\begin{cor}[coreflection between $\pn$ and $\occ$]
  \label{th:ws-to-cocc}
  $\Ioc{} \dashv \unf{}$
\end{cor}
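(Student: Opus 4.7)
The plan is to obtain the adjunction $\Ioc{} \dashv \unf{}$ by composing Proposition~\ref{pr:cor-pre-unf} with Proposition~\ref{pr:occ-occeq-equiv}. Recall that $\unf{} = \occquot{} \circ \preUnf{}$ by Definition~\ref{def:unf}, that $\occquot{} \dashv \preUnf{}$ as functors between $\occeq$ and $\pn$ (with $\occquot{}$ viewed as landing in $\pn$ via the inclusion $\occ \hookrightarrow \pn$), and that $\occquot{}$ and $\preUnf{}$ restrict to an equivalence between $\occ$ and $\occeq$. Composing an adjunction with an equivalence on one side yields an adjunction, and this is exactly the shape of the situation here.

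The concrete argument I would give is just a chain of hom-set isomorphisms: for any $O \in \occ$ and any $N \in \pn$,
\[
\pn(\Ioc{O}, N) \;\cong\; \pn(\occquot{\preUnf{O}}, N) \;\cong\; \occeq(\preUnf{O}, \preUnf{N}) \;\cong\; \occ(O, \occquot{\preUnf{N}}) \;=\; \occ(O, \unf{N}),
\]
where the first isomorphism uses the equivalence $O \cong \occquot{\preUnf{O}}$ established in the proof of Proposition~\ref{pr:occ-occeq-equiv}, the second is the hom-isomorphism from Proposition~\ref{pr:cor-pre-unf}, and the third is again the equivalence between $\occ$ and $\occeq$ applied on the other side. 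Naturality in both $O$ and $N$ is immediate because each constituent isomorphism is natural.

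It remains to verify that the adjunction is a coreflection, i.e.~that the unit $\eta_O : O \to \unf{\Ioc{O}}$ is an isomorphism for every $O \in \occ$. Under the chain of isomorphisms above with $N = \Ioc{O}$, the unit corresponds to the composite $O \cong \occquot{\preUnf{O}} \to \occquot{\preUnf{\Ioc{O}}} = \unf{\Ioc{O}}$, which is an isomorphism because $\Ioc{}$ is the inclusion of a full subcategory so $\preUnf{}$ is being applied to essentially the same object on both sides. Equivalently, one can just observe that $\Ioc{}$ is fully faithful, which is the standard characterisation of coreflections.

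I do not expect any real obstacle: the content is entirely in the two previous propositions, and the proof is bookkeeping about functor composition. The only mild subtlety is being careful about where each functor lives — in particular that the $\occquot{}$ appearing in Proposition~\ref{pr:cor-pre-unf} is implicitly post-composed with $\Ioc{}$ so as to land in $\pn$ — but once this is made explicit the argument is immediate and the statement can be recorded as a corollary.
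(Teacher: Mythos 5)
Your proposal is correct and follows exactly the paper's route: the paper derives the corollary as an immediate consequence of Proposition~\ref{pr:occ-occeq-equiv} (the equivalence $\occ \simeq \occeq$ via $\occquot{}$ and $\preUnf{}$) and Proposition~\ref{pr:cor-pre-unf} (the adjunction $\occquot{} \dashv \preUnf{}$), which is precisely the composition of adjunction and equivalence you spell out via the hom-set chain. Your explicit verification that the unit is an isomorphism (equivalently, that $\Ioc{}$ is fully faithful, being a full subcategory inclusion) is just a more detailed rendering of what the paper leaves implicit.
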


\begin{proof}
  Immediate corollary of Propositions ~\ref{pr:occ-occeq-equiv}
  and~\ref{pr:cor-pre-unf}.
\end{proof}

\section{Locally Connected Event Structures from Occurrence p-Nets}
\label{sec:les}

In this section we show how an {\evstr} can be extracted from
an occurrence p-net, thus providing an {\evstr} semantics to
p-nets via the unfolding semantics. The transformation maps
occurrence p-nets to locally connected {\evstr}s and it is shown to be
functorial. Conversely, we show how a canonical occurrence p-net can
be associated to any locally connected {\evstr}. The two
transformations are shown to establish a coreflection.

An occurrence p-net can be easily mapped to an {\evstr} by forgetting
the conditions and keeping the events and the enabling and conflict
relations on events. The transformation gives rise to a functor from the 
category of occurrence p-nets to the category of locally connected {\evstr}.

\begin{defi}[{\evstr} for an occurrence p-net]
  The functor $\occEs{} : \occ \to \les$ is defined as follows. Let
  $O = (B,\pe{B},E, \delta_0, \delta_1,v_0)$ be an occurrence
  p-net. The corresponding {\evstr} is
  $\occEs{O} = (E, \vdash_{\unf{N}}, \#_{\unf{N}})$. For any morphism $f : O_1 \to O_2$ we let
  $\occEs{f} = \tr{f}$.
\end{defi}

The {\evstr} associated with a p-net is obtained from its
unfolding by forgetting the places and keeping the events and
their dependencies.  We first show that $\occEs{}$ is
well-defined on objects.

\begin{lem}[occurrence p-nets to locally connected {\evstr}]
  Let $O$
  be an occurrence
  p-net. Then $\occEs{O}$ is a locally connected {\es}.
\end{lem}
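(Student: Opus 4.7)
The plan is to first verify that $\occEs{O} = (E, \vdash_O, \#_O)$ satisfies the axioms of a live event structure, and then to exhibit, for each event, a covering by connected disjuncts. Closure of $\vdash_O$ under supersets is immediate from Definition~\ref{def:mainrels}, since adding events to $X$ preserves the requirement that every $s \in \pre{t}$ is covered. Absence of self-conflicts follows from the existence of a securing sequence for each event (Definition~\ref{def:occpnet}(2)), as such sequences are pairwise conflict-free by Definition~\ref{de:net-securing-disjunct}. Saturation of conflict---every consistent pair $e \conn{} e'$ belongs to a common configuration---follows by combining Lemmas~\ref{le:conf-exec} and~\ref{le:cover-conc-general}: consistency allows us to build a firing sequence containing both events.

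For local connectedness, the natural candidate covering of $e \in E$ is $D_e = \{\pre{b} \mid b \in \pre{e},\ \pre{b} \neq \emptyset\}$. Each such $\pre{b}$ is connected in the sense of $\conn{}$: for $b \in \npe{B}$, Definition~\ref{def:occpnet}(3) forces $|\pre{b}| \leq 1$, so connectedness is trivial; for $b \in \pe{B}$, this is precisely the content of Definition~\ref{def:occpnet}(4), using that $\conn{}_O$ restricted to events coincides with $\conn{}$ in $\occEs{O}$. The covering property is equally direct: if $C \in \conf{\occEs{O}}$ meets each $\pre{b} \in D_e$, then every condition of $\pre{e}$ is filled---either initially or by some event in $C$---so $C \vdash_O e$ by Definition~\ref{def:mainrels}.

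The main obstacle is verifying that each $\pre{b} \in D_e$ is actually a \emph{disjunct}, i.e.\ a \emph{minimal} transversal of $\hist{e}$, as demanded by Definition~\ref{de:loc-conn-es}. Minimality can fail when some $e' \in \pre{b}$ is forced to appear in every minimal enabling of $e$ for reasons outside $b$---for instance, when $e'$ is the unique producer of another (non-persistent) condition in $\pre{e}$, so that $\pre{b} \setminus \{e'\}$ is already a transversal. To dispose of this I would either restrict $D_e$ to those $b$ for which $\pre{b}$ is genuinely minimal and verify that the restricted family still enjoys the covering property (the removed $\pre{b}$'s being redundant precisely because some smaller disjunct already forces each of their events), or appeal to Lemma~\ref{le:cover} to take the covering of all disjuncts and prove directly that every disjunct $X$ of $e$ is connected: given $e_1, e_2 \in X$ with minimal enablings $C_i \in \hist{e}$ such that $C_i \cap X = \{e_i\}$, trace back to a persistent condition $b$ where the two enablings diverge, with $e_1, e_2 \in \pre{b}$, and lift the chain of $\conn{}_O$-consistency supplied by Definition~\ref{def:occpnet}(4) to a chain lying inside $X$. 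This lifting step, presumably argued by induction on the depth of $e$, is the technically delicate part.
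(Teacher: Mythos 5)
Your overall route is the same as the paper's: the candidate covering is $\{\pre{b} \mid b \in \pre{e},\ \pre{b}\neq\emptyset\}$, connectedness of each $\pre{b}$ comes from Definition~\ref{def:occpnet}(4) (trivially for non-persistent conditions), and the covering property is checked exactly as you do. But the step you explicitly leave open --- minimality, i.e.\ that each $\pre{b}$ really is a disjunct --- is where the paper's proof has its only real content, and the ingredient it uses is one you never mention: the \emph{irredundancy} half of well-formedness (Definition~\ref{de:well-formed}), which all p-nets are tacitly assumed to satisfy. The paper argues that if some $e'\in\pre{b}$ could be dropped, with $\pre{b}\setminus\{e'\}$ still meeting every configuration enabling $e$, then there would be some $e''\in\pre{b}$ with a flow path from $e''$ to $e'$, so that $e''\struct_O b$ holds together with $e''\struct_O^n b$ for some $n\geq 2$, contradicting irredundancy. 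Without bringing well-formedness into play you have no handle on minimality, so the proposal does not close.

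Neither of your two fallback strategies repairs this. For the first: a subfamily of a covering need not be a covering, since discarding sets weakens the hypothesis ``$C$ meets every member of $D$''; moreover the smaller disjunct that supposedly forces the events of a discarded $\pre{b}$ need not itself occur as $\pre{b'}$ for some $b'\in\pre{e}$, so the covering property would have to be re-established from scratch (note also that connectedness is not inherited by subsets, so shrinking a non-minimal $\pre{b}$ to a disjunct could destroy its connectedness). For the second: the statement you would need --- that \emph{every} disjunct of $e$ is connected --- is false for occurrence p-nets. Realize the paper's own example of a locally connected but non-connected {\evstr} as a net: events $a,b,c,d,e$; non-persistent initial conditions shared so that $a\# c$ and $b\# d$; persistent conditions $o_1,o_2$ with $\pre{o_1}=\{a,d\}$, $\pre{o_2}=\{b,c\}$; and $\pre{e}$ consisting of $o_1$, $o_2$ and a fresh initial non-persistent condition. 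This is a well-formed occurrence p-net (condition (4) holds since $a\conn{}d$ and $b\conn{}c$), one has $\hist{e}=\{\{a,b\},\{c,d\}\}$, and $\{a,c\}$ is a disjunct of $e$ that is not connected because $a\# c$. Hence the ``delicate lifting'' you postpone cannot succeed in the generality in which you state it, and appealing to Lemma~\ref{le:cover} alone does not help; the argument that actually matches the hypotheses of the lemma is the irredundancy one.
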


\begin{proof}
  Let $e \in \occEs{O}$ be an event. Since $O$ is an occurrence p-net,
  for all $b \in \pre{e}$ the pre-set $\pre{b}$ is connected. Moreover, if
  $\pre{b} \neq \emptyset$ then
  $\pre{b}$ is a disjunct for $e$. In fact, clearly, for each
  configuration $C \in \conf{\occEs{O}}$ such that $C \vdash e$ we
  have $C \cap \pre{b} \neq \emptyset$. Still $\pre{b}$ might not be a
  disjunct for the failure of minimality, i.e., for the existence of
  an event $e' \in \pre{b}$ such that $\pre{b} \setminus \{ e' \}$
  still intersects any configuration enabling $e$. However, it is easy
  to see that this would imply the existence of an event
  $e'' \in \pre{b}$ such that $e'' \struct_O e'$, violating the
  irredundancy assumption.

  It is also immediate to see that
  $\{ \pre{b} \mid b \in \pre{e}\ \land\ \pre{b} \neq \emptyset \}$ is
  a covering. Hence we conclude.  
\end{proof}

The fact that $\occEs{}$ is well-defined on morphisms is a consequence
of the lemma below.
For an occurrence p-net $O$, let us denote by $\prec_O$ immediate
causality on events, i.e., $e \prec_O e'$ if there is $b \in \pre{e'}$
such that $\pre{b} = \{e\}$. Note that causality
$\leq_O$ on events is the transitive closure of $\prec_O$.

\begin{lem}[properties of occurrence p-net morphisms]
  Let $O$, $O'$ be occurrence p-nets and $f : O \to O'$ a
  morphism. Then for all $C \in \conf{O}$ and $e_1, e_2 \in E$
  with $f(e_1)$, $f(e_2)$ defined
   \begin{itemize}
  \item if $\tr{f}(e_1) = \tr{f}(e_2)$ and  $e_1 \neq e_2$ then $e_1\,\#_{O}\, e_2$;
  \item if $\tr{f}(e_1)\,\#_{O'}\, \tr{f}(e_2)$ then $e_1\,\#_{O}\, e_2$;
  \item if $C \vdash_O e_1$ then $\tr{f}(C) \vdash_{O'} \tr{f}(e_1)$.
  \end{itemize}
  \end{lem}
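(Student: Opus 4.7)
The three bullets are precisely the defining conditions of an event structure morphism (Definition~\ref{de:es-morphism}) applied to the map $\tr{f}$ viewed between $\occEs{O}$ and $\occEs{O'}$; proving them establishes that $\occEs{}$ sends p-net morphisms to {\evstr} morphisms. I would prove the three clauses separately, starting with the simplest.

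For the third bullet, the argument is direct. Given $C \vdash_O e_1$, pick any $c' \in \pre{\tr{f}(e_1)}$. Since $f$ is a p-net morphism, $\pre{\tr{f}(e_1)} = \pl{f}(\pre{e_1})$, so $c'$ is contributed by some $b \in \pre{e_1}$, i.e., $c' \in \pl{f}(b)$. The enabling hypothesis gives either $b \in v_0$, whence $c' \in \pl{f}(v_0) = v_0'$, or $b \in \post{e'}$ for some $e' \in C$, whence $c' \in \pl{f}(\post{e'}) = \post{\tr{f}(e')}$ with $\tr{f}(e') \in \tr{f}(C)$. This yields $\tr{f}(C) \vdash_{O'} \tr{f}(e_1)$.

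For the first bullet, I plan to induct on $\max(\depth{e_1}, \depth{e_2})$. Let $t' = \tr{f}(e_1) = \tr{f}(e_2)$. By t-restrictedness of $O'$, there is a non-persistent $c' \in \pre{t'}$, and from $\pl{f}(\pre{e_1}) = \pre{t'} = \pl{f}(\pre{e_2})$ we obtain non-persistent $b_i \in \pre{e_i}$ with $c' \in \pl{f}(b_i)$. If $b_1 = b_2$, then $\pre{e_1} \cap \pre{e_2} \not\subseteq \pe{B}$, and conflict clause~(a) immediately yields $e_1 \#_O e_2$. Otherwise, both $b_i$ cannot lie in $v_0$ (since then $c'$ would appear with multiplicity at least $2$ in the set $v_0' = \pl{f}(v_0)$); each thus has pre-set $\pre{b_i} = \{e_i'\}$ with $\tr{f}(e_i')$ necessarily defined (else $\pl{f}(b_i)$ would be $\emptyset$) and generating the non-persistent $c'$ in $O'$. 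Uniqueness of the generator of a non-persistent condition forces $\tr{f}(e_1') = \tr{f}(e_2')$; we must have $e_1' \neq e_2'$, since equality would violate the injectivity on post-sets imposed by condition~(\ref{de:pnet-morphism:3}) of Definition~\ref{de:pnet-morphism}. The induction hypothesis then gives $e_1' \#_O e_2'$, which by two applications of clause~(b) propagates to $b_1 \#_O b_2$, and by two applications of clause~(c) further to $e_1 \#_O e_2$.

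For the second bullet, I would induct on the derivation of $\tr{f}(e_1) \#_{O'} \tr{f}(e_2)$ in $O'$. The base (clause~(a)) reduces to exactly the case analysis just described. The inductive clauses~(b) and~(c) trace the conflict through conditions in $O'$, and the morphism equations $\pre{\tr{f}(e)} = \pl{f}(\pre{e})$, $\post{\tr{f}(e)} = \pl{f}(\post{e})$ together with the observations already made about unique generators of non-persistent conditions and the behaviour of persistent ones allow pulling each step back to a derivation in $O$. The main obstacle is the subtlety introduced by persistence in the first bullet: the monoid-level equalities between pre-sets do not immediately yield equalities between conditions, and one must carefully combine t-restrictedness, injectivity on post-sets, and the set-character of $v_0$ to extract the crucial single non-persistent generator $c'$ on which the induction hinges.
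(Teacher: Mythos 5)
Your route is essentially the paper's own: bullet three is proved exactly as there; bullet two is the same induction on the conflict derivation, with bullet one feeding the base case; and your depth induction for bullet one is the paper's argument (an explicit causal chain of non-persistent conditions down to the initial marking, with the first divergence point located) run top-down instead of bottom-up, using the same ingredients (t-restrictedness, the fact that a non-persistent condition in the image must come from a non-persistent condition, safety of the initial marking and injectivity on post-sets, inheritance of conflict via clauses (b) and (c)). However, in bullet one your case analysis after $b_1 \neq b_2$ is incomplete as written: the multiplicity argument only excludes the case where \emph{both} $b_1, b_2$ lie in $v_0$, yet you then assert that \emph{each} $b_i$ has pre-set $\{e_i'\}$, which also requires excluding the mixed case $b_1 \in v_0$, $b_2 \notin v_0$. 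That case is impossible, but for a different reason: $b_1 \in v_0$ gives $c' \in \pl{f}(v_0) = v_0'$, hence $\pre{c'} = \emptyset$ in $O'$, while $b_2 \notin v_0$ has a generator $e_2'$ with $\tr{f}(e_2')$ defined (since $\pl{f}(\post{e_2'}) \ni c'$) and $c' \in \post{\tr{f}(e_2')}$, a contradiction. This one-line observation must be added; the paper disposes of it with the remark that $b_1, b_2 \notin v_0$ since otherwise they could not have a common image.

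The second soft spot is the inductive step of bullet two, which you only gesture at, and it is precisely where persistence bites. The conflicting conditions $b_1' \in \pre{\tr{f}(e_1)}$ and $b_2' \in \pre{\tr{f}(e_2)}$ may be persistent, so their pre-images $b_1 \in \pre{e_1}$, $b_2 \in \pre{e_2}$ can have \emph{several} generators. One must check $b_1, b_2 \notin v_0$, that $\tr{f}$ is defined on every $e_3 \in \pre{b_1}$ and $e_4 \in \pre{b_2}$ with $\tr{f}(e_3) \in \pre{b_1'}$ and $\tr{f}(e_4) \in \pre{b_2'}$, apply the induction hypothesis to \emph{all} such pairs (each conflict $\tr{f}(e_3)\,\#_{O'}\,\tr{f}(e_4)$ having a shorter derivation), and only then does clause (b), with its universal quantification over the pre-set, yield $b_1\,\#_O\,b_2$, from which clause (c) gives $e_1\,\#_O\,e_2$. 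This all-pairs step is the actual content of the paper's inductive case; your sketch names the right tools but does not carry it out.
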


\begin{proof}
  (1) Let $\tr{f}(e_1) = \tr{f}(e_2)$ and $e_1 \neq e_2$. Consider a
  causal chain of events in $O'$, starting from the initial marking
  and passing through non-persistent conditions only, namely
  consider $e_1', e_2', \ldots, e'_{n-1}, e_n' = \tr{f}(e_1) = \tr{f}(e_2)$ and
  conditions $b_1', \ldots, b_n' \in \npe{B'}$ such that
  $b_1 \in v_0'$ and for all $i \in \interval{n-1}$ it holds
  $b_i \in \post{e_i'} \cap \pre{e_{i+1}'}$; note that such a chain exists
  because $O'$ is
  t-restricted.
  It is easy to see that
  there must be corresponding causal chains $e_j^1, \ldots, e_j^n=e_j$
  and $b_j^1, \ldots, b_j^n \in \npe{B}$, for $j \in \{1,2\}$, such
  that $b_j^1 \in v_0$ and for all $i \in \interval{n-1}$ it
  holds $b_j^i \in \post{e_j^i} \cap \pre{e_j^{i+1}}$, which are
  mapped to the causal chain in $O'$, i.e., $\tr{f}(e_j^i) = e_i'$ and
  $b_i' \in \pl{f}(b_j^i)$ for all $i \in \interval{n}$,
  $j \in \{1,2\}$.

  Consider the least $i$ such that $e_1^i \neq e_2^i$. If $i=1$, i.e.,
  $e_1^1 \neq e_2^1$ then, since $b_1^1, b_2^1 \in v_0$,
  $b_1' \in \pl{f}(b_1^1)$ and $b_1' \in \pl{f}(b_2^1)$, recalling that
  $\pl{f}(v_0) = v_0'$ and $b_1'$, by safety of the initial marking,
  must have multiplicity at most $1$, we conclude that necessarily
  $b_1^1 = b_2^1$ and thus $e_1^1 \,\#_O\, e_2^1$, hence, by inheritance,
  $e_1=e_1^n\, \#_O\, e_2^n = e_2$, as desired.

  Otherwise, if $i>1$, i.e., $e_1^{i-1} = e_2^{i-1}$, then just
  observe that
  $b_1^{i-1}, b_2^{i-1} \in \post{e_1^{i-1}} = \post{e_2^{i-1}}$, and,
  by safety of $\post{e_j^{i-1}}$, as above, we conclude
  $e_1^i \,\#_O\, e_2^i$, and thus $e_1 \,\#_O\, e_2$, as desired.

  \bigskip  
  
  (2) Let $\tr{f}(e_1) \,\#_{O'}\, \tr{f}(e_2)$ in $O'$. We proceed by
  induction on the length of the derivation of the conflict. If such
  length is $0$, i.e., the conflict is direct, then there is
  $b' \in \pre{ \tr{f}(e_1)} \cap \pre{\tr{f}(e_2)} \cap \npe{B'}$.
  Thus, for $j \in \{1,2\}$, there are
  $b_j \in \pre{e_j} \cap \npe{B}$, such that $b' \in \pl{f}(b_j)$.
  If $b_1 =b_2$ then $e_1\,\#_O\, e_2$, and we conclude. Otherwise,
  observe that $b_1, b_2 \not \in v_0$, otherwise (again by safety of
  the initial marking) they could not have a common image. Therefore
  $\pre{b_1} = \{ e_1'\}$ and $\pre{b_2} = \{ e_2'\}$ for suitable
  $e_1, e_2 \in E$ (the pre-set is a singleton since $b_1$ and $b_2$
  are not persistent). Moreover
  $b' \in \post{\tr{f}(e_1')} = \post{\tr{f}(e_2')}$ hence it must be
  $\tr{f}(e_1') = \tr{f}(e_2')$. From point (1), $e_1' \,\#_O\, e_2'$ and
  thus $e_1 \,\#_O\, e_2$, by inheritance.

  If instead the conflict $\tr{f}(e_1) \,\#_{O'}\, \tr{f}(e_2)$ is
  inherited, we must have $b_1' \in \pre{\tr{f}(e_1)}$ and
  $b_2' \in \pre{\tr{f}(e_2)}$ such that $b_1' \#_{O'} b_2'$. In turn
  this means that $\pre{b_1'}, \pre{b_2'} \neq \emptyset$ and for all
  $e_1' \in \pre{b_1'}$, $e_2' \in \pre{b_2'}$, $e_1' \#_{O'} e_2'$.

  Consider $b_1 \in \pre{e_1}$, $b_2 \in \pre{e_2}$ such that
  $b_1' \in \pl{f}(b_1)$ and $b_2' \in \pl{f}(b_2)$. Necessarily,
  $b_1, b_2 \not\in v_0$, hence $\pre{b_1}, \pre{b_2} \neq
  \emptyset$. Moreover, for all $e_3 \in \pre{b_1}$,
  $\tr{f}(e_3) \in \pre{b_1'}$ and for all $e_4 \in \pre{b_2}$,
  $\tr{f}(e_4) \in \pre{b_2'}$. Hence, by the observation above,
  $\tr{f}(e_3) \,\#_{O'}\, \tr{f}(e_4)$, and the derivation of such
  conflict is shorter than that of
  $\tr{f}(e_1) \,\#_{O'}\, \tr{f}(e_2)$. Thus, by inductive
  hypothesis, $e_3 \,\#_O\, e_4$.  By definition of conflict, this
  implies that $b_1 \,\#_O\, b_2$ and thus $e_1 \,\#_O\, e_2$, as
  desired.

  (3) %
  By Definition~\ref{def:dependence} we have to show that for each $b' \in \pre{\tr{f}(e_1)}$ 
  either $b' \in u'_0$ or there is an $e' \in  \tr{f}(C)$ such that $b' \in \post{e'}$. Given
  $b' \in \pre{\tr{f}(e_1)}$, since $f$ is a morphism there is a $b \in \pre{e_1}$ such that 
  $b' \in \pl{f}(b)$. Since  $C \vdash_O e_1$, either $b \in u_0$, and in this case $b' \in 
   \pl{f}(u_0) = u'_0$, or there is an $e \in C$ such that $b \in \post{e}$. In the last case
   we have $b' \in \pl{f}(b) \subseteq \pl{f}(\post{e}) = \post{\tr{f}(e)}$, where $\tr{f}(e) \in \tr{f}(C)$, 
   as desired. 
\end{proof}

Conversely, we show how to freely generate an occurrence p-net from a
locally connected {\evstr}. Roughly, the idea is to insert suitable
conditions that induce exactly the dependencies (enabling and
conflict) of the original {\evstr}.

\begin{defi}[occurrence p-net for an {\evstr}]
\label{def:fromEsToPNet}
  Let $(E, \vdash, \#)$ be a
  locally connected
  {\evstr}. We define the occurrence p-net $\esOcc{E} = (B, \pe{B}, E, \gamma_0, \gamma_1,u_0)$  as follows.
  The set of places $B$ consists of
  \begin{itemize}
  \item non-persistent places $\langle X, Y \rangle$ with
    $X, Y \subseteq E$, $|X| \leq 1$ such that $e < e'$ for all
    $e \in X$, $e' \in Y$ and $e' \# e''$ for all $e', e'' \in Y$,
    $e' \neq e''$;

  \item persistent places $\langle X, Y \rangle$ with
    $X, Y \subseteq E$, $X$ disjunct of all $e \in Y$ and $X$
    $\conn{}$-connected.
  \end{itemize}
  Furthermore, for all $e \in E$ let $\gamma_0(e) = \{\langle X,Y \rangle \in B \mid e \in Y\}$, and 
  $\gamma_1(e) = \{\langle X,Y \rangle \in B \mid e \in X\}$.
 Finally, let the initial marking be $u_0 = \{ \langle \emptyset, Y \rangle \mid
  Y \subseteq E\ \land\ \langle \emptyset, Y \rangle \in B \}$.

\end{defi}

The intuition is the following.  For any possible set of events $Y$
pairwise in conflict that have a common cause $e$ we insert a
non-persistent place $b = \langle \{ e \}, Y \rangle$, produced by $e$
and consumed by the events in $Y$, inducing such dependencies. By the same clause, for any possible set of events $Y$
pairwise in conflict, we insert a
non-persistent place $b = \langle \emptyset, Y \rangle$ consumed by the events in $Y$.
Moreover, for any pair of sets of events $\langle X, Y \rangle$ such
that $X$ includes an event for each minimal enabling of each
$e \in Y$, we introduce a persistent condition $\langle X, Y \rangle$
that is generated by all events in $X$ and used by all events in $Y$.
In this way, whenever a minimal enabling set for some $e \in Y$ has
been executed, all the pre-set of $e$ is covered. Conversely, when all
the pre-set of $e$ is covered, since we generate conditions that
include %
at least one event for each minimal enabling of each $e \in Y$, at
least one minimal enabling has been completely executed.
The request that for $b = \langle X, Y \rangle \in \pe{B}$ the set
$X = \pre{b}$ is a disjunct connected by $\conn{}$ will ensure that
the pre-set of conditions is connected, as required by the definition
of occurrence p-net.  The fact that $X$ is a disjunct will
guarantee irredundancy.
Formally, the fact that the construction above produces a well-defined occurrence p-net will be a consequence of Lemma~\ref{le:dep-es-occ}.

We next observe that for all locally connected {\evstr}s, if we
build the corresponding occurrence p-net and then we take the
underlying {\evstr} we get an {\evstr} isomorphic to the
original one. First we prove a technical result.

\begin{lem}[dependencies in the occurrence p-net for an {\evstr}]
  \label{le:dep-es-occ}
  Let $E$ be a locally connected {\evstr} and
  $\esOcc{E} = (B, \pe{B}, E, \gamma_0, \gamma_1, v_0)$. Then
  \begin{enumerate}
    
  \item
    for all $e, e' \in E$, $e \# e'$ iff $e \#_{\esOcc{E}} e'$;

  \item
    for all $C \in \conf{E}$, $C \vdash e$ implies $C \vdash_{\esOcc{E}} e$;

  \item
    for all $C \in \conf{\esOcc{E}}$, $C \vdash_{\esOcc{E}} e$ implies
    $C \vdash e$.

  \end{enumerate}  
\end{lem}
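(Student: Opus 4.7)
The three items are tackled independently; each relies on matching the structural components $\langle X, Y \rangle$ of places in $\esOcc{E}$ with the causal and conflict data of $E$.

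For~(1), the forward direction is immediate: if $e \# e'$ in $E$, then liveness gives $e \neq e'$, and $b = \langle \emptyset, \{e,e'\}\rangle$ satisfies Definition~\ref{def:fromEsToPNet} for a non-persistent place, since $|X| = 0 \leq 1$, the causality condition is vacuous, and $\{e,e'\}$ is pairwise conflicting. Hence $b \in \pre{e} \cap \pre{e'}$ with $b \notin \pe{B}$, so rule~(a) of Definition~\ref{def:mainrels} yields $e \,\#_{\esOcc{E}}\, e'$. For the backward direction I would induct on the derivation of $\#_{\esOcc{E}} \{e,e'\}$ through rules~(a), (b), (c) and closure under supersets: the base case by rule~(a) exhibits a non-persistent place $\langle X, Y \rangle \in \pre{e} \cap \pre{e'}$ with $e, e' \in Y$, forcing $e \# e'$ in $E$ by construction. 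To make the induction go through, one generalises the statement to arbitrary sets $Z \subseteq B \cup E$ with $\#_{\esOcc{E}} Z$, extracting in each case a witnessing $E$-conflict from the events that either belong to $Z$ or are producers of a place present in $Z$; rules~(b) and~(c) simply propagate conflicts along producer/consumer chains until a direct conflict via rule~(a) is reached.

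For~(2), assume $C \vdash e$ in $E$ with $C \in \conf{E}$. Because enabling is finitary and $C$ is secured, there is a minimal enabling $C^* \subseteq C$ with $C^* \vdash_0 e$. Pick any $b = \langle X, Y \rangle \in \pre{e}$, so $e \in Y$. If $b$ is non-persistent with $X = \emptyset$ then $b \in v_0$; if $b$ is non-persistent with $X = \{e''\}$ then $e'' < e$, and since $e''$ lies in every minimal enabling we have $e'' \in C^* \subseteq C$ and $b \in \post{e''}$; if $b$ is persistent then $X$ is a disjunct of $e$, so by definition of disjunct $X \cap C^* \neq \emptyset$ and any $e'' \in X \cap C^*$ gives $b \in \post{e''}$. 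In every case some $e'' \in C$ has $b \in \post{e''}$, so $C \vdash_{\esOcc{E}} e$.

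For~(3), assume $C \vdash_{\esOcc{E}} e$. If $\emptyset \vdash e$ in $E$ then trivially $C \vdash e$. Otherwise, local connectedness of $E$ supplies a covering $D$ of $e$ whose disjuncts are all $\conn{}$-connected. For each $X \in D$, the pair $b_X = \langle X, \{e\}\rangle$ satisfies Definition~\ref{def:fromEsToPNet} for a persistent place, since $X$ is a disjunct of the unique element $e$ of $\{e\}$ and is $\conn{}$-connected by choice of $D$. Thus $b_X \in \pe{B} \cap \pre{e}$, and since persistent places never belong to $v_0$, the hypothesis provides some $e'' \in C$ with $b_X \in \post{e''}$, i.e.\ $e'' \in X$. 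Hence $C \cap X \neq \emptyset$ for every $X \in D$, and $D$ being a covering yields $C \vdash e$ in $E$. The main obstacle is the backward direction of~(1): while propagation of conflicts into the p-net is direct, tracing an arbitrary derivation in the p-net back to an $E$-conflict requires a careful induction, because rules~(b) and~(c) allow non-trivial rewriting of the carrier set before the binary conflict on $\{e, e'\}$ is exposed.
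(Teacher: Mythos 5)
Your item (2) and the forward half of (1) are correct and essentially the paper's argument, but the other two parts have genuine gaps. In (3) you conclude from ``$C \cap X \neq \emptyset$ for all $X \in D$'' that $C \vdash e$ ``since $D$ is a covering'', but the covering property is stated only for $C \in \conf{E}$: your hypothesis gives $C \in \conf{\esOcc{E}}$, and you never show that such a $C$ is a configuration of the {\evstr} $E$. This is not automatic: consistency transfers via (1), but securedness of $C$ in $E$ requires converting $\vdash_{\esOcc{E}}$-enablings of prefixes of a securing sequence into $\vdash$-enablings, which is exactly the implication being proved. The paper closes this circle by proving a stronger statement by induction on $|C|$, using Lemma~\ref{le:conf-exec} to present $C$ as a securing/firing sequence: every finite $C \in \conf{\esOcc{E}}$ is also in $\conf{E}$ (invoking point (1) for consistency), and only then does the covering argument apply. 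Without that strengthening and induction, the appeal to the covering definition is unjustified.

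For the backward half of (1), the invariant you sketch --- ``if $\#_{\esOcc{E}} Z$ then some two events in $Z$, or producing a place of $Z$, are in $E$-conflict'' --- is the wrong generalisation. Conflict in p-nets is genuinely non-binary, and rule (b) demands conflict with \emph{all} producers of a persistent place, with possibly different binary conflicts blamed for different producers (compare Figure~\ref{fig:combined}, where $\#\{p,q,r\}$ holds although no two elements of $\{p,q,r\}$ are in conflict); so a derivation does not simply ``propagate along producer/consumer chains until a direct conflict is reached''. Moreover, even if you could extract a binary $E$-conflict between, say, a producer of some $b \in \pre{e}$ and a producer of some $b' \in \pre{e'}$, this would not entail $e \# e'$: with disjunctive causes, $e$ and $e'$ may have alternative minimal enablings avoiding those producers. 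The paper instead strengthens the claim semantically: if $\#_{\esOcc{E}} A$ for $A \subseteq E$, then no configuration of $E$ contains $A$; the inductive step for inherited conflicts uses that $\{ \pre{b} \mid b \in \pre{e},\ \pre{b} \neq \emptyset \}$ is a covering of $e$ to build the contradiction, and the binary conclusion $e \# e'$ is then recovered from liveness of $E$, i.e.\ saturation of conflict. Your sketch uses neither the covering structure nor saturation, and without them the induction does not close.
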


\begin{proof}
  (1) \emph{Only if part.} Suppose that $e \# e'$ for $e, e' \in
  E$. Then, by Definition~\ref{def:fromEsToPNet}, $\esOcc{E}$ contains
  a non-persistent place
  $\langle \emptyset, \{e,e'\}\rangle \in \pre{e} \cap \pre{e'}$, thus
  $e \,\#_{\esOcc{E}}\, e'$.
  
  \emph{If part.} We prove, more generally, that for $A \subseteq E$,
  if $\#_{\esOcc{E}} A$ then there is no configuration
  $C \in \conf{E}$ such that $A \subseteq C$. Then for binary conflict
  the thesis follows from the fact that $E$ is saturated.
  
  We proceed by induction on the length of the derivation of the
  conflict. If the length is $0$, i.e., the conflict is direct,
  there are $e, e' \in A$ and
  $b \in \pre{e}\, \cap\, \pre{e'}\, \cap\, \npe{B}$. By
  Definition~\ref{def:fromEsToPNet} this means that
  $b = \langle \{e''\}, Y\rangle$ with $e, e' \in Y$ and therefore
  $e \# e'$ by construction and we are done.

  If instead the conflict $\#_{\esOcc{E}}\, A$ is inherited, we must
  have $e \in A$ such that
  $\#_{\esOcc{E}}\, (A \setminus \{ e\}) \cup \pre{e}$.  This means
  that if we consider any $A_e \subseteq E$ such that
  $A_e \cap \pre{b} \neq \emptyset$ for all $b \in \pre{e}$ then
  $\#_{\esOcc{E}}\, (A \setminus \{ e\}) \cup A_e$.  The derivations
  of the latter conflicts are shorter than that of
  $\#_{\esOcc{E}}\, A$, and thus by induction hypothesis we can infer
  that there is no configuration $C \in \conf{E}$ such that
  $(A \setminus \{ e\}) \cup A_e \subseteq C$.  
  Now, assume by absurd that $A \subseteq \hat{C}$ for a configuration
  $\hat{C}$. Since $e \in \hat{C}$, there is a configuration
  $\hat{C}_e \subset \hat{C}$ such that $\hat{C}_e \vdash e$.  By
  construction
  $\{ X \mid X\neq \emptyset\ \land\ b = \langle X, Y \rangle \in
  \pre{e} \} = \{ \pre{b} \mid b \in \pre{e}\ \land\ \pre{b} \neq
  \emptyset \}$ is a covering of $e$, therefore
  $\hat{C}_e \cap \pre{b} \neq \emptyset$ for all $b \in \pre{e}$.
  But clearly
  $(A \setminus \{ e \}) \cup \hat{C}_e \subseteq \hat{C}$,
  contradicting the inductive hypothesis.

  \bigskip
  
  (2) Let $C \in \conf{E}$ such that $C \vdash e$. In order to
  conclude that $C \vdash_{\esOcc{E}} e$ we have to prove that for all
  $b \in \pre{e}$, either $b \in v_0$ or there exists $e' \in C$ such
  that $b \in \post{e'}$. We distinguish two cases.

  If $b$ is not persistent, then $b = \langle \{e'\}, Y \rangle$ with
  $e \in Y$, hence $e' < e$. Then  $b \in \post{e'}$.

  If $b$ is persistent, then $b = \langle X, Y \rangle$ with $e \in Y$ and, by
  definition of $\esOcc{E}$,
  $X \cap C \neq \emptyset$. Let
  $e' \in X \cap C$. Then $e' \in C$ and $b \in \post{e'}$.

  \bigskip

  (3) We prove that for all finite $C \in \conf{\esOcc{E}}$ it holds
  that $C \in \conf{E}$ and, for any $e \in E$, if
  $C \vdash_{\esOcc{E}} e$ then $C \vdash e$.

  The proof proceeds by induction on the cardinality $|C|$. The base
  case $|C|=0$ is immediate. If $|C|>0$, we know by
  Lemma~\ref{le:conf-exec} that $C = \{ e_1, \ldots, e_n \}$ and there
  exists a firing sequence
  $v_0 \fire{e_1} v_1 \fire{e_2} \ldots \fire{e_n} v_n$. Since
  $C' = \{ e_1, \ldots, e_{n-1} \} \in \conf{\esOcc{E}}$ and
  $C' \vdash_{\esOcc{E}} e_n$, by inductive hypothesis
  $C' \in \conf{E}$ and $C' \vdash e_n$. From this and point (1), we
  deduce that $C = C' \cup \{ e\} \in \conf{E}$.

  For the second part, let $e \in E$ be such that $C \vdash_{\esOcc{E}} e$.
  By the first part, $C \in \conf{E}$ and by definition of enabling in
  occurrence p-nets, for all $b \in \pre{e}$, either $b \in v_0$ or
  there exists $e' \in C$ such that $b \in \post{e'}$.
  This means that $C \cap \pre{b} \neq \emptyset$ for all
  $b \in \pre{e}$, $\pre{b} \neq \emptyset$ and this is a covering of
  $e$. Hence by definition of covering, $C \vdash e$.
\end{proof}

\begin{cor}[$\esOcc{E}$ is well-defined]
  Let $E$ be a locally connected {\evstr}. Then $\esOcc{E}$ is a well-defined occurrence p-net.
\end{cor}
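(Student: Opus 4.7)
The plan is to verify the four clauses of Definition~\ref{def:occpnet} together with the well-formedness conditions of Definition~\ref{de:well-formed} directly from the construction of $\esOcc{E}$. Most of the conflict/enabling bookkeeping has already been handled in Lemma~\ref{le:dep-es-occ}, so the task reduces to checking structural properties of the places $\langle X, Y \rangle$.

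First I would dispatch the easy clauses. For condition~(3) of Definition~\ref{def:occpnet}, note that non-persistent places $\langle X, Y \rangle$ satisfy $|X|\leq 1$ by construction, so the same place cannot lie in the post-set of two distinct events; hence $\post{e}\cap\post{e'}\subseteq \pe{B}$ whenever $e\neq e'$. For condition~(4), non-persistent places have $|\pre{b}|\leq 1$ and are trivially connected, while for persistent places $\pre{b}=X$ is required to be $\conn{}$-connected in $E$; by Lemma~\ref{le:dep-es-occ}(1) consistency in $E$ coincides with $\conn{}_{\esOcc{E}}$, so $\pre{b}$ is connected by $\conn{}_{\esOcc{E}}$. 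For the initial marking condition, observe that a place has empty pre-set precisely when it is non-persistent of the form $\langle \emptyset, Y\rangle$: persistent places $\langle X, Y\rangle$ have $X$ a disjunct, which by definition cannot be empty. These are exactly the places in $u_0$.

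Next I would treat the existence of securing sequences. Fix $e\in E$. Since $E$ is live, there exists a configuration $C\in\conf{E}$ with $e\in C$, hence a securing sequence $e_1,\ldots,e_n=e$ in $E$ consisting of pairwise consistent events with $\{e_1,\ldots,e_{i-1}\}\vdash e_i$ for each $i$. By Lemma~\ref{le:dep-es-occ}(1) consistency transfers from $E$ to $\esOcc{E}$, and by Lemma~\ref{le:dep-es-occ}(2) the enabling $\{e_1,\ldots,e_{i-1}\}\vdash e_i$ in $E$ yields the analogous enabling in $\esOcc{E}$. Hence $e_1,\ldots,e_n$ is a securing sequence for $e$ in $\esOcc{E}$.

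The remaining step is well-formedness. T-restrictedness is straightforward: for any $e\in E$, the pair $\langle \emptyset, \{e\}\rangle$ satisfies the definition of non-persistent place (vacuously on causes, trivially on conflict since $|Y|=1$) and lies in $\pre{e}$, providing the required non-persistent place in the pre-set of $e$. I expect irredundancy to be the main obstacle, so I would focus attention there. Consider a persistent place $b=\langle X,Y\rangle$ and an event $e\in X$; suppose by contradiction that $e\struct^n b$ for some $n\geq 2$. Unfolding the path, there exists some $e'\in X$ with $e<e'$ in $\esOcc{E}$, and by an easy induction on the length of the path (tracking that intermediate non-persistent conditions carry causal information via their singleton pre-sets, and persistent conditions do so via membership in $X$-components that are themselves disjuncts) this forces $e<e'$ in the causality of $E$. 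But then any $C\in\hist{e''}$ for $e''\in Y$ with $e'\in C$ also contains $e$, so $X\setminus\{e'\}$ still meets every such $C$, contradicting the minimality requirement in the definition of disjunct. This minimality argument is the heart of the proof; once it is established, $\esOcc{E}$ is a well-formed occurrence p-net.
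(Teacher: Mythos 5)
Your treatment of the routine clauses (initial marking, absence of backward conflicts on non-persistent conditions, connectedness of pre-sets, securing sequences, t-restrictedness) matches the paper's proof, which likewise reduces everything to Lemma~\ref{le:dep-es-occ}. The genuine problem is the irredundancy step, which you yourself flag as the heart of the matter. Your claim that a path $e \struct^n b$ of length at least $2$ ``forces $e<e'$ in the causality of $E$'' (where $e'$ is the last transition on the path, $e,e'\in X=\pre{b}$) is not justified and is false in general. Causality in an occurrence p-net propagates through a condition $s$ only when $\pre{s}$ is a singleton, and in $\esOcc{E}$ a persistent condition $\langle X_1,Y_1\rangle$ with $|X_1|\geq 2$ only records that each event of $X_1$ belongs to \emph{some} minimal enabling of the events in $Y_1$ --- that is all membership in a disjunct gives --- not to all of them. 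For instance, if $\hist{c}=\{\{a\},\{b\}\}$ with $a \conn{} b$, then $\esOcc{E}$ contains the persistent condition $\langle\{a,b\},\{c\}\rangle$, hence the flow arcs $a \struct \langle\{a,b\},\{c\}\rangle \struct c$, although $a \not< c$ in $E$. So the ``easy induction'' you invoke breaks at the first persistent condition on the path whose pre-set is not a singleton, and with it the subsequent inference that every $C\in\hist{e''}$ (for $e''\in Y$) containing $e'$ also contains $e$, which is what your minimality contradiction rests on.

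This is exactly the point where the paper argues differently: it does not try to extract causality from the path at all, but claims, using that enabling and conflict coincide in $E$ and $\esOcc{E}$ (Lemma~\ref{le:dep-es-occ}), that the \emph{source} event $e$ of the redundant path could be omitted, i.e.\ that $X\setminus\{e\}$ would still intersect every minimal enabling of the events in $Y$, contradicting minimality of the disjunct $X$. As written, your argument establishes irredundancy only in the special case where every intermediate condition on the path is non-persistent (or persistent with a singleton pre-set); the genuinely disjunctive case --- a path running through a persistent condition with two or more producers --- is precisely the delicate one, and it is left unaddressed. To complete the proof you would need either to switch to the paper's omission argument or to supply a separate analysis of such paths, since the causality-based shortcut does not survive them.
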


\begin{proof}
  The fact that $\esOcc{E}$ satisfies the structural properties (a)
  and (c) of Definition~\ref{def:occ-p-net} is immediate by
  construction. Property (b), i.e., the fact that each transition
  $e \in T$ admits a securing sequence follows by the analogous
  property of {\evstr}, recalling that by Lemma~\ref{le:dep-es-occ},
  enabling and conflict coincide in $E$ and $\esOcc{E}$. Similarly,
  the fact that for each condition $b \in B$, the set $\pre{b}$ is
  connected is true by construction after Lemma~\ref{le:dep-es-occ},
  showing that conflict, and thus the notion of connectedness,
  coincide in $E$ and $\esOcc{E}$.
  Finally, we observe that $\esOcc{E}$ is
  well-formed. T-restrictedness holds by construction. Irredundancy
  follows from the fact that for each
  $b = \langle X, Y \rangle \in \pe{B}$, the pre-set $\pre{b} = X$ is
  a disjunct. Hence given $e \in X = \pre{b}$ there cannot be an
  additional path $e \struct_{\esOcc{E}}^n$ with $n \geq 2$ into
  $b$. In fact, if this were the case, since by
  Lemma~\ref{le:dep-es-occ}, enabling coincides in $E$ and
  $\esOcc{E}$, event $e$ could be omitted, i.e.,
  $X' = X \setminus \{e\}$ would still intersect any configuration
  enabling $e$, contradicting the minimality of $X$.
\end{proof}

\begin{cor}[unit]
  \label{co:es-unit}
  Let $E$ be a locally connected {\evstr}. Then
  $\eta_E : E \to \occEs{\esOcc{E}}$ defined as the identity on events
  is an isomorphism.
\end{cor}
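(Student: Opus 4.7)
The plan is to show that $\eta_E$, being the identity on events, is an isomorphism in $\les$ by verifying that both $\eta_E$ and its set-theoretic inverse are event structure morphisms; this in turn reduces to checking that the conflict and enabling relations on $E$ coincide with those on $\occEs{\esOcc{E}}$. The whole argument essentially consists of invoking Lemma~\ref{le:dep-es-occ} plus a bookkeeping observation about configurations, so there is no serious obstacle.

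First I would observe that $\eta_E$ is a bijection on the carrier simply by construction: in Definition~\ref{def:fromEsToPNet}, $\esOcc{E}$ is built with $E$ itself as its set of events, and the functor $\occEs{}$ discards conditions while keeping events unchanged. Hence the source and target of $\eta_E$ share the same underlying set, and setting $\eta_E$ to be the identity is well-defined and evidently bijective.

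Next I would verify that conflict and enabling coincide, using Lemma~\ref{le:dep-es-occ}. For conflict, since $\#_{\occEs{\esOcc{E}}}$ is by definition the restriction to events of $\#_{\esOcc{E}}$, part~(1) of the lemma gives directly that $e\,\#_E\,e'$ iff $e\,\#_{\occEs{\esOcc{E}}}\,e'$. For enabling, part~(2) of the lemma yields that $C \vdash_E e$ implies $C \vdash_{\occEs{\esOcc{E}}} e$ for any $C \in \conf{E}$, which is exactly the morphism condition required of $\eta_E$ in Definition~\ref{de:es-morphism}. Conversely, part~(3) of the lemma, combined with the observation (already used inside its proof) that every finite configuration of $\esOcc{E}$ is a configuration of $E$, gives the reverse implication and thus the morphism condition for $\eta_E^{-1}$.

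The only point deserving care is the alignment of the notions of configuration on the two sides; but since conflict and enabling agree, the two notions of configuration coincide as well, and this matching is essentially what the inductive argument in the proof of Lemma~\ref{le:dep-es-occ}(3) establishes. Having checked both directions, $\eta_E$ and $\eta_E^{-1}$ are mutually inverse $\es$-morphisms, so $\eta_E$ is an isomorphism in $\les$, as claimed.
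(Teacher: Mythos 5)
Your proposal is correct and follows essentially the same route as the paper: the paper's own proof simply notes that $\eta_E$ is obviously a bijection and that the isomorphism property follows immediately from Lemma~\ref{le:dep-es-occ}, which is exactly the argument you spell out (part~(1) for conflict, parts~(2) and~(3) for the two directions of enabling/configuration preservation). Your unpacking of the configuration-alignment point is the same bookkeeping the paper leaves implicit.
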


\begin{proof}
  The function $\eta_E$ is obviously a bijection. The fact that it is
  an isomorphism of {\evstr} follows immediately from
  Lemma~\ref{le:dep-es-occ}.
\end{proof}

In order to conclude we need to show that the construction of the
occurrence p-net associated with an {\evstr} $E$ and the
isomorphism $\eta_E : E \to \occEs{\esOcc{E}}$ are free over $E$.  The
next lemma states some properties of occurrence p-net morphisms that guide
the proof.

\begin{lem}
  \label{le:opnet-morph-from-events}
  Let $h : O \to O'$ be an occurrence p-net morphism. Then for
  $b \in B$ and $b' \in B'$ such
  that $b' \in \pl{h}(b)$ it holds
  \begin{enumerate}
  
  \item if $b \in \npe{B}$ then (a) $\tr{h}(\pre{b}) = \pre{b'}$ and
    $\post{b} = \{ e_1 \in E \mid e_1 \in \tr{h}^{-1}(\post{b})\
    \land\ \pre{b} \leq e_1 \}$, where $\pre{b} \leq e_1$ means
    $e \leq e_1$ when $\pre{b} = \{e\}$ and it is a vacuous
    requirement, otherwise;

  \item if $b \in \pe{B}$ then (a)
    $\pre{b} \subseteq \tr{h}^{-1}(\pre{b'})$ and (b)
    $\post{b} \subseteq \tr{h}^{-1}(\post{b'})$.
  \end{enumerate}
\end{lem}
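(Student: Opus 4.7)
The plan is to handle the two parts separately, since part (2) on persistent conditions is essentially a direct application of the morphism axioms, whereas part (1) requires more work, especially for the non-trivial inclusion in the characterisation of $\post{b}$.

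For part (2), I would simply unfold the definitions. If $e \in \pre{b}$ then $b \in \post{e}$, so $b' \in \pl{h}(b) \subseteq \pl{h}(\post{e}) = \post{\tr{h}(e)}$ (using condition (2) of Definition~\ref{de:pnet-morphism} on the singleton multiset $e$), hence $\tr{h}(e) \in \pre{b'}$, i.e.~$e \in \tr{h}^{-1}(\pre{b'})$. Part (2b) is symmetric, replacing $\post{}$ by $\pre{}$ throughout.

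For part (1a), first note that since $b$ is non-persistent, by sort preservation ($\pl{h}$ sends $\npe{B}$ into $\mon{\npe{B'}}$) also $b'$ is non-persistent, so $|\pre{b'}| \leq 1$. Split on whether $\pre{b}$ is empty: if $b \in v_0$ then $b' \in \pl{h}(v_0) = v_0'$, so $\pre{b'} = \emptyset = \tr{h}(\pre{b})$; otherwise $\pre{b} = \{e\}$, and the computation above gives $\tr{h}(e) \in \pre{b'}$, which combined with $|\pre{b'}| \leq 1$ yields $\pre{b'} = \{\tr{h}(e)\} = \tr{h}(\pre{b})$.

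For the post-set characterisation, the inclusion $\subseteq$ is easy: if $e_1 \in \post{b}$ then $b \in \pre{e_1}$, so $b' \in \pl{h}(\pre{e_1}) = \pre{\tr{h}(e_1)}$, giving $e_1 \in \tr{h}^{-1}(\post{b'})$; moreover when $\pre{b} = \{e\}$, the defining relation $b \in \post{e} \cap \pre{e_1}$ yields $e \prec_O e_1$, whence $\pre{b} \leq e_1$. The hard direction is $\supseteq$, and this is the step where I expect the main difficulty: given $e_1$ with $\tr{h}(e_1) \in \post{b'}$ and $\pre{b} \leq e_1$, I need to recover $b \in \pre{e_1}$. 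From $b' \in \pre{\tr{h}(e_1)} = \pl{h}(\pre{e_1})$ pick $\bar b \in \pre{e_1}$ with $b' \in \pl{h}(\bar b)$. Sort preservation forces $\bar b$ to be non-persistent. If $\pre{b} = \emptyset$ then $b' \in v_0'$, so $\bar b$ cannot have a non-empty pre-set (otherwise $b'$ would have a non-empty pre-set in $O'$); thus $\bar b \in v_0$, and safety of $v_0'$ together with $\pl{h}(v_0) = v_0'$ forces $b = \bar b$. If instead $\pre{b} = \{e\}$ then analogously $\bar b \notin v_0$, so $\pre{\bar b} = \{e'\}$; both $\tr{h}(e)$ and $\tr{h}(e')$ lie in $\pre{b'}$, which is a singleton, so $\tr{h}(e) = \tr{h}(e')$. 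If $e = e'$ then $b,\bar b \in \post{e}$ with $\pl{h}(b) \cap \pl{h}(\bar b) \neq \emptyset$, so injectivity on post-sets (condition (3) of Definition~\ref{de:pnet-morphism}) gives $b = \bar b$. Otherwise $e \neq e'$; the previous lemma (point (1) of ``properties of occurrence p-net morphisms'') yields $e \,\#_O\, e'$, but the hypothesis $\pre{b} \leq e_1$ gives $e \leq e_1$ and by construction $e' < \bar b < e_1$, so both $e$ and $e'$ are causes of $e_1$, contradicting conflict-freeness of the securing sequence of $e_1$. Hence $b = \bar b \in \pre{e_1}$, completing the proof.
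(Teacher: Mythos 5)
Your proof is correct and takes the route the paper itself intends: the paper's proof merely says point (1) is ``analogous to the ordinary case'' and point (2) follows from the morphism definition, and your argument is exactly that classical occurrence-net argument worked out in full for p-nets --- using sort preservation, $\pl{h}(v_0)=v_0'$ with safety of the initial marking, injectivity on pre/post-sets, and the preceding lemma's point (1) to turn $\tr{h}(e)=\tr{h}(e')$ into a conflict that contradicts the securing sequence of $e_1$ (you also correctly read the statement's $\tr{h}^{-1}(\post{b})$ as the evident typo for $\tr{h}^{-1}(\post{b'})$). The only cosmetic omission is that definedness of $\tr{h}$ on the events involved should be noted explicitly (it follows since an undefined image would force $\pl{h}(\pre{e})=\pl{h}(\post{e})=\emptyset$, contradicting $b'$ lying in the image), which your computations use implicitly.
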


\begin{proof}
  Point (1) is analogous to the one in the ordinary case. Point (2)
  easily follows from the definition of morphism.
\end{proof}  

\begin{thm}[coreflection between $\esOcc{}$ and $\occEs{}$]\label{th:OccEs}
  The construction $\occEs{}$ extends to a functor that is left adjoint
  to $\esOcc{}$, and they establish a coreflection.
\end{thm}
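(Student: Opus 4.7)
The plan is to exhibit, for each locally connected event structure $E$, a universal arrow from $\occEs{}$ to $E$, namely the inverse $\eta_E^{-1} : \occEs{\esOcc{E}} \to E$ of the isomorphism from Corollary~\ref{co:es-unit}. Concretely, I will show that for every occurrence p-net $O$ and every morphism $f : \occEs{O} \to E$ in $\les$, there exists a unique morphism $\hat{f} : O \to \esOcc{E}$ in $\occ$ such that $\eta_E^{-1} \circ \occEs{\hat{f}} = f$; since $\eta_E^{-1}$ is the identity on events, this amounts to $\tr{\hat{f}} = f$ as a partial function. From this universal property, $\esOcc{}$ automatically extends to a functor (sending $g : E_1 \to E_2$ to the unique arrow associated with $g \circ \eta_{E_1}^{-1}$), and the adjunction $\occEs{} \dashv \esOcc{}$ is a coreflection because the counit $\eta_E^{-1}$ is an isomorphism.

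The construction of $\hat{f}$ goes as follows. On events I set $\tr{\hat{f}}(e) = f(e)$. On a non-persistent condition $b \in B_O$ I let
\[
  \pl{\hat{f}}(b) = \langle f(\pre{b}),\ \{ f(e) \mid e \in \post{b},\ f(e) \text{ defined} \} \rangle,
\]
which is a valid non-persistent place of $\esOcc{E}$: the events in the second component are pairwise conflicting by the conflict-preservation clause of Definition~\ref{de:es-morphism}, while $f(\pre{b}) < f(e)$ holds since the immediate causality witnessed by $b$ passes through $f$. On a persistent condition $b$ with $\pre{b} = X$ and $\post{b} = Y$ I set $\pl{\hat{f}}(b) = \langle f(X),\ \{f(e) \mid e \in Y,\ f(e) \text{ defined}\}\rangle$. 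To see that this is a valid persistent place of $\esOcc{E}$ I must verify that $f(X)$ is $\conn{}$-connected and a disjunct for each $f(e)$ in the second component: connectedness follows from clause (4) of Definition~\ref{def:occpnet} applied to $b$ together with the fact that ES-morphisms map consistent pairs to consistent or identified pairs.

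The main technical obstacle is showing that $f(X)$ is a disjunct of each such $f(e)$. Given any $C' \in \hist{f(e)}$, I lift it through the pre-unfolding machinery of Section~\ref{sec:unf}: there exists a configuration $C \in \conf{\occEs{O}}$ with $C \vdash e$ and $f(C) \supseteq C'$; since $X$ is a disjunct for $e$ in $\occEs{O}$, we get $C \cap X \neq \emptyset$, hence $C' \cap f(X) \neq \emptyset$. Minimality of $f(X)$ as a covering set for all minimal enablings of $f(e)$ is then derived by contradiction: dropping an element of $f(X)$ would, via a lift to $O$, yield a strictly smaller disjunct for $e$, violating minimality of $X$. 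Local connectedness of $E$ is essential here, as it guarantees that such a disjunct-based description captures the enabling relation.

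The remaining conditions of Definition~\ref{de:pnet-morphism} for $\hat{f}$ (initial marking, pre/post-set equations, and injectivity on pre/post-sets of transitions per clause~(\ref{de:pnet-morphism:3})) follow routinely from the construction, using that distinct places in the pre/post-set of a single event of $O$ carry distinguishable $\langle X,Y\rangle$-labels. Uniqueness of $\hat{f}$ is forced: once $\tr{\hat{f}}=f$ is fixed, clause~(\ref{de:pnet-morphism:2}) together with the injectivity clause~(\ref{de:pnet-morphism:3}) pins down $\pl{\hat{f}}$ on the pre- and post-set of every transition, and the initial-marking requirement pins it down on $u_0$; since every place of $O$ either belongs to $v_0$ or lies in the post-set of some event, $\pl{\hat{f}}$ is determined everywhere. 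Combining Corollary~\ref{co:es-unit} with this universal property yields the coreflection.
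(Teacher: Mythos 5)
Your proposal attacks the adjunction in the wrong direction, and the direction you chose is in fact false. The paper's proof establishes $\esOcc{} \dashv \occEs{}$ by showing that $\eta_E : E \to \occEs{\esOcc{E}}$ is \emph{free} over $E$: every {\evstr} morphism $f : E \to \occEs{O'}$ factors uniquely as $\occEs{h} \circ \eta_E$ for a net morphism $h : \esOcc{E} \to O'$, i.e.\ the mediating morphisms go \emph{out of} the saturated canonical net $\esOcc{E}$ into an arbitrary occurrence p-net. You instead claim a couniversal arrow $\eta_E^{-1} : \occEs{\esOcc{E}} \to E$, which would make $\esOcc{}$ a \emph{right} adjoint of $\occEs{}$ and requires, for every $f : \occEs{O} \to E$, a net morphism $\hat f : O \to \esOcc{E}$ \emph{into} the canonical net with $\tr{\hat f} = f$. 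Such a morphism need not exist. Take $E$ with two events $a \# b$, both minimally enabled by $\emptyset$, and let $O$ be the occurrence p-net with a single initially marked non-persistent condition $b_0$ and events $a,b$ with $\pre{a} = \pre{b} = \{b_0\}$, so that $\occEs{O} \cong E$ and one may take $f$ to be the identity. In $\esOcc{E}$ the initial marking contains $\langle \emptyset, \{a\}\rangle$, $\langle \emptyset, \{b\}\rangle$ and $\langle \emptyset, \{a,b\}\rangle$, while $\pre{a} = \{\langle \emptyset, \{a\}\rangle, \langle \emptyset, \{a,b\}\rangle\}$. Clause~(1) of Definition~\ref{de:pnet-morphism} forces $\pl{\hat f}(b_0) = \pl{\hat f}(v_0)$ to be the whole initial marking of $\esOcc{E}$, whereas clause~(2) forces $\pl{\hat f}(b_0) = \pl{\hat f}(\pre{a}) = \pre{(\tr{\hat f}(a))}$, which does not contain $\langle \emptyset,\{b\}\rangle$: contradiction. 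Structurally, $\esOcc{E}$ is saturated (the pre-set of $f(e)$ consists of \emph{all} conditions $\langle X, Y\rangle$ with $f(e) \in Y$, and $u_0$ of \emph{all} conditions $\langle \emptyset, Y\rangle$), so the exact pre-set and marking equations of Definition~\ref{de:pnet-morphism} cannot be met by a forward place-map from an arbitrary $O$; these are precisely the conditions you dismiss as routine. (The theorem's wording is admittedly misleading, but its proof, and the only true statement, has $\esOcc{}$ as the left adjoint, with the iso unit $\eta_E$.)

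There are further local problems even setting the direction aside: for a non-persistent condition $b$ of $O$ the events in $\post{b}$ are pairwise in conflict in $\occEs{O}$, but {\evstr} morphisms only \emph{reflect} conflict (Definition~\ref{de:es-morphism}), so $\{f(e) \mid e \in \post{b}\}$ need not be pairwise conflicting and your candidate $\langle f(\pre{b}), f(\post{b})\rangle$ need not be a place of $\esOcc{E}$ at all; moreover $f$ is partial, so $f(\pre{b})$ may be empty for $b \notin v_0$, breaking marking preservation. The correct argument, as in the paper, fixes an $f : E \to \occEs{O'}$, sets $\tr{h} = f$, and defines $\pl{h}$ on each condition $\langle X, Y\rangle$ of $\esOcc{E}$ as the \emph{set of conditions of $O'$} whose pre- and post-sets correspond to $X$ and $Y$ under $f$ (maximal potential pre-images in the persistent case), with uniqueness coming from Lemma~\ref{le:opnet-morph-from-events}; functoriality of $\esOcc{}$ and the coreflection then follow from freeness of $\eta_E$ together with Corollary~\ref{co:es-unit}.
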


\begin{proof}

  Let $E$ be a locally connected {\evstr}. We show the freeness of
  $\eta_E : E \to \occEs{\esOcc{E}}$ as defined in
  Corollary~\ref{co:es-unit}.  We have to show that for any occurrence
  p-net
  $O' = \langle B', \pe{B'}, E', \gamma_0', \gamma_1', v_0' \rangle$
  and for any {\evstr} morphism $f :E \to \occEs{O'}$ there exists a
  unique morphism $h: \esOcc{E} \to O'$ such that the following
  diagram commutes
  \[
  \xymatrix{
    {E}  \ar[r]^(.40){\eta_{E}} \ar[dr]_f & {\occEs{\esOcc{E}}} \ar@{.>}[d]^{\occEs{h}} \\
    & {\occEs{O'}} 
    }
  \]

  The transition component of $h$ is determined as $\tr{h} = f$. It
  can be extended to a p-net morphism by defining the place component as
  follows:

  \begin{enumerate}
  \item on non-persistent places $b = \langle X, Y \rangle$,
    define
    \begin{center}
      $\pl{h}(b) = \{ b' \in \npe{B'} \mid \pre{b'} = \tr{h}(X)\
      \land\ Y = \{ e_1 \in E \mid e_1 \in
      \tr{h}^{-1}(\post{b'})\land\ X < e_1\}\}$.
    \end{center}

  \item on persistent places $b = \langle X, Y \rangle$, say that $b$ is a potential pre-image of $b' \in \npe{B'}$ if $X \subseteq \tr{h}^{-1}(\pre{b'})$ and $Y \subseteq \tr{h}^{-1}(\post{b'})$. It is maximal if for any potential pre-image $b_1 = \langle X_1, Y_1 \rangle$ of $b'$, with $X_1 \cap X \neq \emptyset$ it holds $Y_1 \subseteq Y$. Then we define
    \begin{center}
      $      
        \pl{h}(b) = \{ b' \in \npe{B'} \mid \mbox{$b$ maximal potential pre-image of $b'$}\}$.
    \end{center}
  \end{enumerate}

  It is not difficult to prove that this is a morphism. Uniqueness
  follows from the fact that, for non-persistent places (point (1)) we
  defined the morphism in the only possible way, according to
  Lemma~\ref{le:opnet-morph-from-events}(1). For persistent places, each $b' \in \pe{B'}$ must be in the image of exactly one potential pre-image. In fact, according to Lemma~\ref{le:opnet-morph-from-events}(2), it must be in the image of at least one potential pre-image and, by injectivity on pre- and post-sets of transitions, it can be in the image of at most one of the potential pre-images. Using preservation of pre- and post-sets, we then conclude that $\pl{h}(s)$ must be defined as we did in point (2).
\end{proof}

As a side remark, note that the {\evstr} induced by $\preUnf{N}$
(before quotienting) is a prime {\evstr}. If endowed with the
equivalence $\sim_N$ restricted to events, it corresponds to the notion
of prime {\evstr} with equivalence~\cite{VW:SPC,BCG:DESF}. More
precisely, if we denote by $\#_N^\mu$ the immediate conflict relation,
then the prime {\evstr} obtained from the pre-unfolding, by forgetting
the conditions and the direct conflicts between equivalent events,
i.e.,  $\langle (E,\leq_{\unf{N}}, \#), \sim \rangle$, where
$\# = \#_{\unf{N}} \setminus (\#_{\unf{N}}^\mu \cap \sim_N)$
and $\sim = \sim_N \cap (E \times E)$, is a prime {\evstr} in the
sense of~\cite{BCG:DESF}. Its events arise, as explained
in~\cite{BCG:DESF}, as the irreducible elements of the domain
associated with $N$.

\section{Conclusions}
\label{sec:conc}

Persistence is the continuance of an effect after its causes ceased to exist.
In this paper we have studied the effect of adding persistence to Petri nets at the level of {\evstr} semantics.
Interestingly, we have extended Winskel's chain of coreflection from the category of p-nets to the newly defined category of locally connected {\evstr}s, which is a full subcategory of the category of general {\evstr}s.
Since the category of connected {\evstr}s is included in the one of locally connected {\evstr}s, the coreflection can serve to explain in basic terms the phenomenon of fusion arising in the context of graph grammars and that induces (connected) disjunctive causes. On the one hand, this confirms our intuition that Petri nets and their natural extensions keep capturing all phenomena of concurrency within easy-to-understand operational models. On the other hand, our results show that while non-prime {\evstr}s were actually underestimated in the literature, at least in some cases they are natural, expressive and equipped with an interesting theory even at the operational level.

The result has been proved for the class of well-formed
persistent nets, where redundant paths to persistent places are
forbidden. Despite the fact that this is a natural restriction,
preliminary investigations suggest that the result could be extended,
at the price of some technical complications, to the more general
setting.

\section*{Acknowledgment}
\noindent
The authors wish to thank the anonymous reviewers for their careful
reading and their insightful comments on the submitted version on the
paper.

\bibliographystyle{plain}
\bibliography{PersistentNets}

\end{document}